\newtheorem{theorem}{Theorem}[section]%  meant for continuous numbers
\newtheorem{proposition}[theorem]{Proposition}% 
\newtheorem{remark}{Remark}%
\newtheorem{definition}{Definition}%
\newtheorem{assumption}{Assumption}%
\newtheorem{corollary}[theorem]{Corollary}%
\newtheorem{lemma}[theorem]{Lemma}%
\newcommand{\mN}{\mathcal{N}}
\newcommand{\mB}{\mathcal{B}}
\newcommand{\mF}{\mathcal{F}}
\newcommand{\mM}{\mathcal{M}}
\newcommand{\mP}{\mathcal{P}}
\newcommand{\mX}{\mathcal{X}}
\newcommand{\mY}{\mathcal{Y}}
\newcommand{\sd}{{\sf d}}
\newcommand{\rw}{\rightarrow}
\newcommand{\mbP}{\mathbb{P}}
\newcommand{\mbR}{\mathbb{R}}
\newcommand{\mbN}{\mathbb{N}}
\newcommand{\beq}{\begin{equation}}
\newcommand{\eeq}{\end{equation}}
\newcommand{\beqa}{\begin{eqnarray}}
\newcommand{\eeqa}{\end{eqnarray}}
\newcommand{\nn}{\nonumber}
\newcommand{\stcomp}[1]{{#1}^{\mathsf{c}}} 
\renewcommand{\top}{\mathsf{T}}
\begin{document}

\title[Nudging state-space models]{Nudging state-space models for Bayesian filtering under misspecified dynamics
}

%%=============================================================%%
%% GivenName	-> \fnm{Joergen W.}
%% Particle	-> \spfx{van der} -> surname prefix
%% FamilyName	-> \sur{Ploeg}
%% Suffix	-> \sfx{IV}
%% \author*[1,2]{\fnm{Joergen W.} \spfx{van der} \sur{Ploeg} 
%%  \sfx{IV}}\email{iauthor@gmail.com}
%%=============================================================%%

\author[1]{\fnm{Fabián} \sur{González}}\email{omgonzal@math.uc3m.es}

\author[2]{\fnm{O. Deniz } \sur{Akyildiz}}\email{deniz.akyildiz@imperial.ac.uk}

\author[2]{\fnm{Dan} \sur{Crisan}}\email{d.crisan@imperial.ac.uk}

\author[1,3]{\fnm{Joaquín} \sur{Míguez}}\email{joaquin.miguez@uc3m.es}

\affil[1]{\orgdiv{Department of Signal Theory and Communications}, \orgname{Universidad Carlos III de Madrid (ROR: \url{https://ror.org/03ths8210})}, \orgaddress{\street{ Avda. de la Universidad, 30}, \city{ Leganés}, \postcode{28911}, \state{Madrid}, \country{Spain}}}

\affil[2]{\orgdiv{Department of Mathematics}, \orgname{Imperial College London}, \orgaddress{\street{180 Queen's Gate}, \city{London}, \postcode{SW7 2AZ}, %\state{State}, 
\country{United Kingdom}}}

\affil[3]{\orgname{Instituto de Investigaci\'on Sanitaria Gregorio Mara\~n\'on}, \orgaddress{\street{Calle Dr. Esquerdo, 46}, \city{Madrid}, \postcode{28007}, \country{Spain}}}

%%==================================%%
%% Sample for unstructured abstract %%
%%==================================%%

\abstract{Nudging is a popular algorithmic strategy in numerical filtering to deal with the problem of inference in high-dimensional dynamical systems. We demonstrate in this paper that general nudging techniques can also tackle another crucial statistical problem in filtering, namely the misspecification of the transition kernel. Specifically, we rely on the formulation of nudging as a general operation increasing the likelihood and prove analytically that, when applied carefully, nudging techniques implicitly define state-space models that have higher marginal likelihoods for a given (fixed) sequence of observations. This provides a theoretical justification of nudging techniques as data-informed algorithmic modifications of state-space models to obtain robust models under misspecified dynamics. To demonstrate the use of nudging, we provide numerical experiments on linear Gaussian state-space models and a stochastic Lorenz 63 model with misspecified dynamics and show that nudging offers a robust filtering strategy for these cases.}

\keywords{Bayesian filtering, nudging, Bayesian evidence, marginal likelihood, model mismatch, misspecified dynamics}

%%\pacs[JEL Classification]{D8, H51}

%%\pacs[MSC Classification]{35A01, 65L10, 65L12, 65L20, 65L70}

\maketitle

%%%%%%%%%
% Introduction
%%%%%%%%%
\section{Introduction}\label{sIntroduction}

\subsection{State space models}

State-space models (SSMs) are key building blocks in many applications in signal processing, machine learning, weather forecasting, finance, object tracking, ecology, and many other fields \cite{triantafyllopoulos2021bayesian}. These models are used to represent the dynamics of a system, where the \textit{system state} evolves over time according to a Markov transition kernel and the available \textit{observations} (data) are related to the system state by a likelihood function. The main statistical goal in SSMs is to infer the state of the system given a sequence of observations, a problem known as filtering \cite{anderson2005optimal}.

Formally, we represent the state of the SSM by a Markov chain $\{X_t\}_{t \ge 0}$ described as follows. The initial state $X_0$ is a random variable (r.v.) with probability law $\pi_0$ and, at any time $t \ge 1$, the dynamics of the transition from $X_{t-1}$ to $X_t$ is modelled by a Markov kernel $K_t(x_{t-1}, \sd x_t)$. The sequence of observations is denoted by $\{Y_t\}_{t \ge 1}$ and the relationship between the state $X_t$ and the observation $Y_t$ is modelled by a conditional probability density function (pdf) $p_t(y_t|X_t=x_t)$. Since in practice the observations are given, $Y_t=y_t$ for $t \ge 1$, the latter relationship is usually given in terms of a likelihood function $g_t(x_t) \propto p_t(y_t|X_t=x_t)$. With these elements, the conditional probability law of the state $X_t$ given the data $Y_{1:t} = y_{1:t} := \{ y_1, \ldots, y_t \}$ can be constructed recursively via the Chapman-Kolmogorov equation and Bayes' theorem (see, e.g., \cite{DelMoral04,Bain09}) and we denote it as $\pi_t$. The conditional law $\pi_t$ is often termed the optimal, or Bayesian, filter.

The optimal filter $\pi_t$ can only be computed exactly in a few specific cases. The most relevant one is the scenario where both the Markov kernels $K_t$ and the likelihoods $g_t$ correspond to linear relationships and Gaussian noise. Under such assumptions, $\pi_t$ is Gaussian and its mean and covariance matrix can be computed recursively via the Kalman filter (KF) algorithm \cite{kalman1960new}. In most practical applications, however, the optimal filter $\pi_t$ can only be approximated numerically using nonlinear KFs, particle filters (PFs) or other approximation methods \cite{Ristic04,Bain09,sarkka2023bayesian}.

%, a task that often becomes challenging due to high-dimensionality of the state $X_t$ (or the observations $Y_t$) or the misspecification of the model, misspecification, noise, and other factors.

%The numerical computationmain challenge in Bayesian filtering is to accurately estimate the state of the system given the observations, which can be challenging due to high-dimensionality, misspecification, noise, and other factors. When $K$ is a family of linear-Gaussian transition kernels and $g$ is a family of linear-Gaussian likelihoods, the explicit expression for the filtering distribution can be obtained as a Gaussian and its sufficient statistics can be found efficiently via the celebrated Kalman filter algorithm \cite{kalman1960new}. Many practical applications necessitate going beyond linear Gaussian models, which led to the development of more sophisticated filtering algorithms including extended Kalman filter, unscented Kalman filter, and particle filters \cite{bain2009fundamentals,sarkka2023bayesian} and so on.

\subsection{Model misspecification}

One of the main challenges in Bayesian filtering is model misspecification, which occurs when the chosen family of transition models, $\{K_t\}_{t \ge 1}$, the likelihood functions, $\{g_t\}_{t \ge 1}$, or both, fail to represent the statistical properties of the real-world system of interest with sufficient accuracy. Model misspecification is a long-standing problem in Bayesian filtering and it has been studied from different viewpoints in the literature, including outlier detection, robust filtering, parameter estimation, and the so-called \textit{nudging} techniques.

Outlier detection \cite{blazquez2021review} is, perhaps, the simplest way to manage observations which are in poor agreement with the assumed SSM. In the context of filtering, typical outlier detection schemes approximate the predictive distribution of the upcoming observation $Y_t$. Then, when the actual observation is collected, a statistical test can be run to determine whether the observed data $y_t$ is compatible with the predicted distribution for $Y_t$. If the test indicates that the observation is anomalous (i.e., it is an outlier with respect to (w.r.t.) the predicted distribution) then the data $y_t$ can either be discarded or be processed using a {\em robust} procedure that mitigates the effect of the outlying data on the filter update. Many methods for outlier detection and robust filtering have been proposed for Kalman-based filters \cite{yatawara1991kalman,xie1994robust,agamennoni2011outlier,petersen2012robust,javanfar2023measurement,truzman2024outlier} or PFs \cite{Maiz12,Vazquez17,boustati2020generalised,zhang2023outlier}. A fundamental problem with these approaches is that anomalous data are handled as detrimental and uninformative, under the assumption that they have not been generated by the system of interest. Very often, however, a genuine observation from the system of interest may appear as an outlier because of the misspecification of the SSM. By discarding or mitigating this observation, relevant information is wasted and model errors are reinforced.

Another classical strategy to account for modelling uncertainty is to choose not \textit{one} SSM but a parametric family of SSMs indexed by a (possibly multidimensional) parameter $\theta$. When a sequence of observations becomes available, the model is calibrated by tuning the parameter $\theta$ to the data according to some statistical criterion. Maximum likelihood estimation methods have been proposed, both offline \cite{jensen1999asymptotic,Olsson08} and online \cite{LeGland97,Tadic10}, as well as Bayesian estimation methods. The latter include algorithms such as particle Markov chain Monte Carlo (MCMC) \cite{Andrieu10,lindsten2012ancestor,dahlin2015particle}, iterated batch importance sampling (IBIS) \cite{Chopin02} or sequential Monte Carlo square (SMC$^2$) \cite{Chopin12}, and recursive algorithms like the nested PF \cite{Crisan18bernoulli} and its Kalman-based approximations \cite{Perez-Vieites18,Perez-Vieites21}. While parameter estimation methods are an indispensable toolbox for practical applications, they do not provide a complete solution to the model misspecification problem. Indeed, the parametric family of SSMs may not be flexible enough the represent the features of the system of interest, no matter the choice of $\theta$. For example, a parametric class of linear models can be expected to fail to represent a system that displays non-negligible nonlinear features in its dynamics.

Several techniques collectively known as \textit{nudging} have been devised to mitigate model misspecification \cite{reich2015probabilistic}. Nudging methods are designed to steer (or \textit{nudge}) a model towards the observed data over time by adding a (small) corrective term to the model dynamics. The goal is to make the model follow observed values more closely without breaking down its original dynamics. Such methods have been particularly popular within the data assimilation community \cite{Stroud2010ensemble,reich2015probabilistic,Law15}. Nudging can be used as a stand-alone data assimilation method \cite{lakshmivarahan2013nudging,farhat2017continuous,desamsetti2019efficient} but it is often combined with ensemble KFs \cite{luo2012ensemble,lei2012hybridI,lei2012hybridII} or PFs \cite{dubinkina2013assessment,lingala2014optimal,akyildiz2020nudging}. In the context of particle filtering, nudging has been interpreted either as a tool to design efficient proposals \cite{dubinkina2013assessment,lingala2014optimal} or as a modification of the sampling scheme \cite{akyildiz2020nudging}. A similar approach has also been used in simulation-based Bayesian inference and machine learning, typically by incorporating additional parameters that can be learned from data in order to mitigate model errors and improve robustness \cite{frazier2021robust,ward2022robust,kelly2024misspecification}. Here, we advocate nudging as a flexible tool to compensate model specification: the correction term can be constructed in many different ways, by applying different criteria, and it can be further combined with parameter estimation methods and outlier rejection techniques if needed.

%Filtering algorithms are often referred to as \textit{data assimilation} methods \cite{Stroud2010ensemble,reich2015probabilistic,Law15}, especially in geophysics. Because of the complexity of the dynamical models employed to approximate geophysical systems, model misspecification is a well-recognised problem in that field and a class of techniques collectively known as \textit{nudging} have been devised to mitigate it \cite{reich2015probabilistic}. Nudging methods are designed to steer (or \textit{nudge}) a model towards the observed data over time by adding a (small) corrective term to the model dynamics. The goal is to make the model follow observed values more closely without breaking down its original dynamics. Nudging can be used as a stand-alone data assimilation method \cite{lakshmivarahan2013nudging,farhat2017continuous,desamsetti2019efficient} but it is often combined with ensemble KFs \cite{luo2012ensemble,lei2012hybridI,lei2012hybridII} or PFs \cite{dubinkina2013assessment,lingala2014optimal,akyildiz2020nudging}. In the context of particle filtering nudging has been interpreted either as a tool to design efficient proposals \cite{dubinkina2013assessment,lingala2014optimal} or as a modification of the sampling scheme \cite{akyildiz2020nudging}. Here we advocate nudging as a flexible tool to compensate model specification: the correction term can be constructed in many different ways, by applying different criteria, and it can be further combined with parameter estimation methods and outlier rejection techniques if needed.

\subsection{Contributions}

In this paper we adopt a viewpoint of nudging as a data-informed modification of the kernels $\{K_t\}_{t \ge 1}$ of the SSM, rather than a tweak of the filtering algorithms. In particular, let $\mM$ denote the original SSM. We introduce a broad family of \textit{nudging maps} $(\alpha_t)_{t \ge 1}$ which, given the available observations $\{y_t\}_{t \ge 1}$, yield a sequence of modified (nudged) kernels $\{K_t^\alpha\}_{t \ge 1}$. These kernels, in turn, characterise a modified SSM, denoted $\mM^\alpha$, which is therefore different from the original $\mM$. We investigate the relative agreement of the two models, $\mM$ and $\mM^\alpha$, with a given data set $y_{1:T}$. This agreement is quantified by means of the marginal likelihoods, or Bayesian model evidence, of the two SSMs \cite{Knuth15evidence}. The key contributions and findings of this research are outlined below.  
\begin{itemize}
\item We describe a general nudging methodology that consists of a data-driven modification of the kernels $\{K_t\}_{t \ge 1}$ in the SSM. This modification is defined by a parametric nudging transformation that satisfies some regularity conditions and admits many different practical implementations.

\item For a given set of observations $y_{1:T}$, and under mild assumptions on the original SSM $\mM$, we prove that the proposed nudging methodology can yield a modified model $\mM^\alpha$ that attains a higher marginal likelihood than the base model $\mM$. 

In particular, when the original model $\mM$ is indexed by a vector of parameters $\theta$, i.e., $\mM \equiv \mM_\theta$, we prove that the nudged model $\mM_\theta^\alpha$ can attain a marginal likelihood that (a) is higher than the marginal likelihood of the model $\mM_\theta$, with the same parameters $\theta$, and (b) lies in a neighbourhood of the marginal likelihood attained by model $\mM_{\theta_*}$, where $\theta_*$ is the maximum likelihood estimator of the parameters.

\item We describe a specific class of nudging transformations that rely on the ability to compute the gradient of the log-likelihood function $\log g_t$ of the original model $\mM$. We prove that the  theoretical guarantees obtained for the general parametric transformations also hold for the proposed gradient-based nudging. This version of nudging is straightforward to implement when $g_t$ belongs to the exponential family (e.g., if the observation noise is additive and Gaussian). Note that there are also standard numerical methods that can be used to approximate the gradient of $\log g_t$ when the likelihood is analytically intractable \cite{del2015uniform,jasra2021unbiased}.

\item We apply the proposed methodology, with gradient-based nudging transformations, to the class of linear-Gaussian SSMs and explicitly obtain a nudged version of the KF (i.e., a KF for the nudged model $\mM^\alpha$). Then, we identify explicit conditions on the original SSM $\mM$ that, when satisfied, guarantee that the nudged KF yields a higher marginal likelihood than the original algorithm.

\item Finally, we demonstrate the application of the methodology, and illustrate the theoretical results numerically for two models. The first one is a four-dimensional linear Gaussian model, while the second one is a stochastic Lorenz 63 model with partial observations. We show numerically, in both examples, that the proposed gradient-based nudging methodology can yield an increased marginal likelihood and compensate for errors in the model parameters. 
\end{itemize}

%%%%%%
%
%%%%%%
\subsection{Outline of the paper}

We conclude this introduction with brief summary of the notation used throughout the paper, presented in Section \ref{ssNotation}. In Section \ref{sec2} we provide a formal description of the SSMs of interest, the optimal Bayesian filter and the Bayesian model evidence. The proposed nudging methodology is introduced in Section \ref{sNudging}, which also contains the main theoretical results. Computer simulation results for a linear-Gaussian model and a stochastic Lorenz 63 model are presented in Section \ref{sNumerical}. Section \ref{sConclusions} contains a summary of the main results and some concluding remarks. The proofs of the main theorems, as well as some additional technical results, are presented in Appendices \ref{App1}--\ref{A:PGAN}. 

%\subsection{Outline of the paper}

%%%%%%
%
%%%%%%
\subsection{Summary of notation} \label{ssNotation}

\begin{itemize}
    \item Sets, measures, and integrals:
    \begin{itemize}
        \item[-] $\mB (S)$ is the $\sigma$ -algebra of Borel subsets of $S \subseteq  \mbR^d$.
\item[-] $\mP(S) := \{ \mu  : \mB (S) \mapsto  [0, 1]$ and $\mu (S) = 1\}$  is the set of probability measures over
$\mB (S)$.
\item[-] $ \mu (f) := 
\int f d\mu$  is the integral of a Borel measurable function $f : S \mapsto  \mbR$ w.r.t. the
measure $\mu  \in  \mP (S)$.
\item[-] The indicator function on a set $S$ is denoted by $\mathbbm{1}_S(x)$. Given a measure $\mu$  and a
set $S$, we equivalently denote $\mu (S) := \mu( \mathbbm{1}_S)$.
\item[-] Let $A$ be a subset of a reference space $\mX  \subseteq  \mbR^{d_x}$. The complement of $A$ w.r.t. $\mX$  is
denoted by $\stcomp{A} := \mX \setminus A$.
\item [-] Let $\mu$ be a finite measure over $(\mX,\mB(\mX))$ (i.e., $\mu(\mX)< \infty$). The total variation norm of $\mu$ is
$$\norm{\mu}_{TV}:=\abs{
        \sup_{F\in \mB (\mX)} \mu(F)-\inf_{F\in \mB (\mX)} \mu(F)}.$$
%\item The induced norm of the transition kernel operators are defined by $$\norm{K_t}:=\sup_{\lambda\neq 0}\frac{\norm{K_t(\lambda)}}{\norm{\lambda}}, \; t \in \mbN,$$ where $\lambda\in \{\beta (\mu-\nu):\mu,\nu \in \mP(\mX),\beta \in \mbR\}.$
\end{itemize}
\item Functions and sequences:
    \begin{itemize}
    \item [-] $B(S)$ is the set of bounded real functions over $S$. Given $f \in B(S)$, we denote
$$\| f\|_{\infty}  := \sup_{s\in S}
| f(s)|<\infty.$$
\item [-] We use a subscript notation for subsequences, namely $x_{t_1:t_n}:=\{x_{t_1},...,x_{t_n}\}.$
    \end{itemize}
\item Real r.v.'s on a probability space $(\Omega,\mF,\mbP)$ are denoted by capital letters (e.g., $Z:\Omega \mapsto \mbR^d)$, while their realisations are written as lowercase letters (e.g., $Z(\omega)=z$, or simply, $Z=z$). If $X$ is a multivariate Gaussian r.v., then its probability law is denoted $\mathcal{N}(\sd x_t; \mu, \Sigma)$, where $\mu$ is the mean and $\Sigma$ is the covariance matrix.
    
\end{itemize}

%%%%%%%%%
% Problem statement
%%%%%%%%%
\color{black}
\section{Background and problem statement}\label{sec2}

%%%%%%
% State space models
%%%%%%
\subsection{State space models}

Let $(\Omega,\mF,\mbP)$ be a probability space where $\Omega$  is the sample space, $\mF$  is a $\sigma$ -algebra, and $\mbP$ is a probability measure. On this space, we consider two stochastic processes:
\begin{itemize}
\item  The signal or state process $X = \{ X_t\}_{t\geq 0}$, taking values in a set $\mX  \subseteq  \mbR^{d_x}$.
 \item The observation process $Y = \{ Y_t\}_{t\geq 1}$, taking values in a set $\mY  \subseteq  \mbR^{d_y}$.
 \end{itemize}
 We refer to $\mX$ as the state (or signal) space, while $\mY$ is the observation space.
We assume that the state process evolves over time according to the family of Markov
kernels
$$K_t(x_{t-1}, A) = \mbP (X_t \in A \vert X_{t-1} = x_{t-1}),$$ 
where $A \in  \mB (\mX )$ and $x_{t-1} \in  \mX$. The observation process is described by the conditional
distribution of the observation $Y_t$ given the state $X_t$. Specifically, we assume that $Y_t$ has
a conditional pdf $g_t(y_t\vert x_t)$ w.r.t. a reference measure $\lambda$ (typically, but not necessarily, the
Lebesgue measure), given the state $X_t = x_t$. The observations are assumed to be conditionally
independent given the states.

Throughout the paper we assume arbitrary but fixed observations $\{ Y_t = y_t\}_{t\geq 1}$, and we write $g_t(x_t) := g_t(y_t\vert x_t)$ for conciseness and
to emphasize that $g_t$ is a function of the state $x_t$, i.e., we use $g_t(x)$ as the likelihood of $x \in  \mX$ 
given the observation $y_t$.

The state process $X_t$ with initial probability distribution $\pi_0(\sd x_0)$ and Markov transition kernels $K_t( x_{t-1}, \sd x_t)$ together with observation process $Y_t$, linked to $X_t$ by the pdfs $g_t(y_t\vert x_t)$, form the typical structure of a state space Markov model. Following \cite{crisan2020stable} we refer to the triple $\mM  = (\pi_0, K, g)$, where $K  = \{ K_t\}_{ t\geq 1}$ is the family of Markov kernels for the process $X_t$ and $g = \{ g_t\}_{ t\geq 1}$ is the family of likelihoods generated by the observations $\{ Y_t = y_t\}_{t\geq 1}$, as the SSM. As shown in Section \ref{2.2}, the triple $\mM$  encompasses all the necessary components to define the conditional probability distribution of the state $X_t$ given the observations $Y_{1:t} = y_{1:t}$ or the predictive distribution of $Y_t$ giving the observations $Y_{1:t-1} = y_{1:t-1}$, for every $t \geq  1$. These conditional probability distributions are the main focus of this paper, and therefore we equate $\mM$  with the SSM itself.

%%%%%%
% Bayesian filter and model evidence
%%%%%%
\subsection{Bayesian filter}\label{2.2}
 
The filtering problem consists in the computation of the probability law $\pi_t(\sd x):=\mbP (X_t\in d x \vert Y_{1:t}=y_{1:t})$ of the state $X_t$ given a sequence of observations $Y_{1:t}=y_{1:t}$. It is relatively straightforward to use Bayes' rule in order to obtain a relation between $\pi_t$ and the one-step-ahead predictive measure $\xi_t(\sd x)=\mbP (X_t \in \sd x_t \vert Y_{1:t-1}=y_{1:t-1})$ (see for example \cite{crisan2020stable}). Indeed, one can write $\xi_t(\sd x) = \int K_t(x',\sd x)\pi_t(\sd x')$ and for any integrable test function $f:\mX\mapsto\mbR$, it is straightforward to show that
%To be precise, we can use the Bayes theorem to obtain \deniz{use lower case letters here}
%$$\pi_t(X_t)=\mathbb{P}(X_t \vert Y_{1:t})=\frac{\mathbb{P}(Y_t\vert X_t,Y_{1:t-1})\mathbb{P}(X_t \vert Y_{1:t-1})}{\mathbb{P}(Y_t\vert Y_{1:t-1})}.$$
%Note that $\xi_t=K_t(\pi_{t-1})$, then we can write
\begin{equation}\label{eq:posteriorpredictive}
\pi_{t}(f)=\frac{\int f(x_t)g_{t}(x_t)K_{t} \pi_{t-1}(\sd x_{t})}{\int g_{y_t}(x_t)K_{t}\pi_{t-1} (\sd x_{t})}=\frac{\xi_{t}(f g_t)}{\xi_{t}(g_{t})}, 
\end{equation}
where we denote $\xi_t(\sd x_t)=K_{t} \pi_{t-1}(\sd x_t)=\int K_t(x_{t-1},\sd x_t)\pi_{t-1}(\sd x_{t-1})$ for conciseness. The normalisation constant $\xi_t(g_t)$ in Eq. (\ref{eq:posteriorpredictive}) is often referred to as the  incremental likelihood at time $t$. It can also be interpreted as the conditional pdf of the observation $Y_t$ given a record of observations $Y_{1:t-1}=y_{1:t-1}$.

%%%%%%
% Model assessment
%%%%%%
\subsection{Model assessment: the Bayesian evidence}

Given a data set $y_{1:T}$, there are different ways to assess ``how good'' a state space model, see \cite{djuric2010assessment}. Possibly the most popular approach is the Bayesian model evidence or marginal likelihood, which can be interpreted as a quantitative indicator of how well a model explains the observed data, while integrating out uncertainties in model parameters and latent states. In particular, a higher Bayesian evidence is usually interpreted as a better fit to the data.

To be specific, the Bayesian evidence of model $\mM=(\pi_0,K,g)$ for data $Y_{1:T}=y_{1:T}$ is denoted $p_T(y_{1:T}\vert \mM)$ and, by a simple marginalisation of the joint distribution of $y_{1:T}$ and $x_{1:T}$, it can be written as  

$$p_T(y_{1:T}\vert \mM)=\int \cdots \int g_t(y_t \vert x_t) \xi_t(\sd x_t)%g_{t-1}(y_{t-1} \vert x_{t-1}) \xi_{t-1}(\sd x_{t-1}) 
\cdots g_{1}(y_{1} \vert x_{1}) \xi_{1}(\sd x_{1})=\prod_{t=1}^T \xi_t(g_t),$$
hence, the marginal likelihood at time $T$ is computed as the product of the incremental likelihoods up to time $T$. In most practical applications the quantity of interest is the log-evidence $\log p_T(y_{1:T}\vert \mM)= \sum_{t=1}^T \log \xi_t(g_t),$ which can be more easily computed or approximated.

In problems involving the comparison of two models, $\mM$ and $\mM'$, and a data set $Y_{1:T}=y_{1:T}$, model $\mM$ is considered a better fit than model $\mM'$ if, and only if, $\log p_T(y_{1:T}\vert \mM)\geq \log p_T(y_{1:T}\vert \mM').$

%%%%%%
% Problem statement
%%%%%%

\subsection{Problem statement}

%State space models are widely used to represent dynamic systems, but they often fall short of capturing the full complexity of the real-world processes they aim to model. This mismatch can lead to suboptimal performance, as evidenced by lower Bayesian evidence. 
For a given data set  $Y_{1:T}=y_{1:T}$ and a given state space model $\mM=(\pi_0,K,g)$, we seek a methodology to modify $\mM$ in a systematic way that yields an ``improved" model, denoted $\mM^{\alpha}$ with a higher Bayesian evidence, i.e., $\log p_T(y_{1:T}\vert \mM^\alpha)\geq \log p_T(y_{1:T}\vert \mM)$.

%Bayesian evidence, or marginal likelihood, quantifies how well a model explains the observed data, integrating out uncertainties in model parameters and latent states. Thus, higher Bayesian evidence indicates a better fit to the data.

Our approach towards increasing the evidence of the base model $\mM$ consists in adapting the Markov kernels $K_t$ to the observed data $y_{1:T}$. Note that the Markov kernels govern the dynamics of the state process.

To be specific, we are interested in a sequential procedure that, at time $t$, takes the new observation $Y_t=y_t$ and uses it to convert the original kernel $K_t$ into an updated one $K_t^{\alpha}$. As a result, we sequentially construct a new model $\mM^\alpha=(\pi_0,K^\alpha,g)$, incorporating the adjusted kernels $K^\alpha=\{K_t^\alpha\}_{t\geq 1}$.

Ideally, the methodology should ``refine" the initial model $\mM$, in the sense of increasing the Bayesian evidence with slight changes to the dynamics. This modification is data-driven and carried out in a systematic, automatic manner that can be implemented easily for a broad class of models.

%%%%%%%%%
% Nudging
%%%%%%%%%
\section{Nudging schemes}\label{sNudging}

%%%%%%
% Nudged state space model
%%%%%%
\subsection{Nudging}

%Nudging schemes 
In this paper we intend to adaptively modify the transition kernel $K_t(x',\sd x)$
to better align with the data, resulting in an improved model. Given observed data $Y_t=y_{t}$, at each time step $t$ we adjust the Markov kernel $K_t(x',\sd x)$ to obtain the modified kernel 
\beq\label{eq:NudgingKernel}
K_t^\alpha(x_{t-1},\sd x_t) := \int \delta_{\alpha_t(x_t')}(\sd x_t) K_t(x_{t-1},\sd x_t'),
\eeq%is a transformation function applied to the initial state, ensuring that the resulting modified kernel satisfies:
where $\delta_{\alpha_t(x_t')}$  denotes the Dirac delta measure centred at $\alpha_t(x_t')$, and $\alpha_t:\mX \to \mX$ is a transformation of the state space into itself that depends on the observation $Y_t=y_t$.  By construction, the map $\alpha_t$ increases the value of the function $g_t$, i.e. $g_t(x)\leq g_t(\alpha_t(x)),$ for any $x\in \mX$. 
The modified kernel in Eq. (\ref{eq:NudgingKernel}) yields a new model $\mM^{\alpha}=\{\pi_0,K^{\alpha},g\}$, where $K^{\alpha}=\{K_t^{\alpha}: t\geq 1\}$, for which the Bayesian evidence can be computed as

%What is important for us is to analyze the Bayesian evidence of the model. 

%By making the change in the transition kernel via (\ref{eq:NudgingKernel}), we obtain the Bayesian evidence for the nudgin model:

\beq\label{OBNE}
p_T(y_{1:T}\mid \mM^{\alpha}) = \prod_{t=1}^T  \xi_t^{\alpha}(g_t) %K_t^\alpha\pi_{t-1}^\alpha(g_t ),
\nn
\eeq
and the predictive measure are recursively computed as $\xi_t^{\alpha}=K_t^{\alpha}\pi_{t-1}^{\alpha}$. The posterior marginals are
%where for an integrable test function $f:\mX\to \mbR$ 
\beq
\pi_t^\alpha(f) = \frac{\xi_t^\alpha(fg_t)}{\xi_t^\alpha(g_t)}, \quad \text{ for } t=1,2,...,
\label{eq:Nudgingpostpre}
\eeq
and the prior is the same as in the original model. %i.e. $\pi_0^{\alpha}=\pi_0$.%$\xi_t^\alpha = K_t^\alpha \pi_{t-1}^\alpha$. 

\begin{remark}
We have introduced a new model $\mM^\alpha$ derived from modifications made to the transition kernel. These changes modify the system behaviour, potentially differing in its dynamics compared to the original system $\mM$. This can significantly impact the system evolution and must be carefully considered in the analysis.
\end{remark}

\subsection{Parametric nudging scheme}

The key element of a nudging scheme is the sequence of maps $\alpha_t,\; t\geq 1$. 

\begin{definition}\label{def:NudParT}
The set of maps $\{\alpha_t(x,\gamma):\mX \times \mbR^+\to \mX,$ $t\in \mbN\}$ is a family of parametric nudging transformations if and only if it satisfies the conditions below: 

\begin{enumerate}[i)] 
\item The transformation $\alpha_t$ is continuous in $\gamma$ and
\begin{equation}\label{A0}
\lim_{\gamma\rw 0} \alpha_t(x,\gamma) = x, \quad \forall x\in \mX, ~~\forall t \ge 1. 
\end{equation}
\item There are %\color{red} closed \color{black}  
intervals $[0,\Gamma_t)$ with $\Gamma_t>0$, such that   
%\fabianl{I think it is not necessary for it to be closed; being closed from below is sufficient. I mean, something like $[0,\Gamma_t )$  should be enough.}
\begin{equation}\label{eq:IncreasingNudging}
g_t(\alpha_t(x,\gamma))-g_t(x)\geq 0 , \; \forall (x,\gamma) \in \mX \times [0,\Gamma_t).
\end{equation} 
    \item For every $t\geq 1$ and $\gamma\in(0,\Gamma_t)$
    \begin{equation}\label{A1}
\Delta_{g_t}(\gamma):=%\xi_t(g_t^{\alpha}-g_t)=
\int_{\mX} \left[g_t(\alpha_t(x,\gamma))- g_t(x)\right]  \xi_t(\sd x)>0.
    \end{equation}
\end{enumerate}
\end{definition}
Hereafter we limit our discussion to nudging parametric transformations. Although other possibilities exist, a natural choice for the map $\alpha_t$ is to construct it as a single step of a gradient-ascent algorithm for the maximisation of $\log g_t$, as specifically described in Section \ref{SGAN}. 

Together with Definition \ref{def:NudParT} for parametric nudging transformations, we also assume some mild regularity of the model $\mM=(\pi_0,K,g)$.

\begin{assumption}\label{A2}
The model $\mM=(\pi_0,K,g)$ satisfies the conditions below.
\begin{enumerate}[i)] 
    \item The likelihood functions $g_t(x)$ are continuous and bounded, i.e. $\norm{g_t}_{\infty}< \infty,\; t\geq 1.$ 
    \item The transition kernels are continuous with respect to the total variation norm, i.e., for every $x\in \mX$ and $\epsilon>0$, there exists $\delta_{\epsilon,x,t}>0$ such that
$$\norm{K_t(x,\cdot)-K_t( x',\cdot)}_{TV}\leq \epsilon, \;\; \forall x'\in \mX, \text{ whenever } \norm{x-x'}\leq \delta_{\epsilon,x,t}. $$
\end{enumerate}
\end{assumption} 

\begin{remark}
In Section \ref{NKalman}, we show that the transition kernels for linear-Gaussian SSMs are continuous with respect to the total variation norm. 
\end{remark}

We are now ready to state our first result on the ``improvement" of the nudged model $\mM^\alpha$ over the original model $\mM.$

\begin{theorem}\label{ISSL}
   Let $\{\alpha_t\}_{t \in \mathbb{N}}$ be a family of nudging parametric transformations as in Definition \ref{def:NudParT}. If Assumption  \ref{A2} holds, then there exists a sequence of positive parameters $\gamma_{1:T},$ such that 
   $$
   p_T (y_{1:T} \vert \mM ^{\alpha}) \ge p_T(y_{1:T}\vert \mM),
   $$ i.e., model $\mM^{\alpha}$ has a higher Bayesian evidence than model $\mM$.
\end{theorem}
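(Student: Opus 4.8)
The plan is to prove the statement by induction on $T$, exploiting the product structure $p_T(y_{1:T}\mid\mM)=\prod_{t=1}^{T}\xi_t(g_t)$ together with the fact that each nudging map can only raise the incremental likelihood. The key quantity to control at each step is the \emph{incremental likelihood gap}
$$
\xi_t^{\alpha}(g_t)-\xi_t(g_t),
$$
and I would like to show that, for a suitable choice of $\gamma_t$, this gap is nonnegative. The subtlety---flagged by the Remark after Eq.~\eqref{eq:Nudgingpostpre}---is that $\xi_t^{\alpha}=K_t^{\alpha}\pi_{t-1}^{\alpha}$, so it is built from the \emph{nudged} filter $\pi_{t-1}^{\alpha}$, not from $\pi_{t-1}$; hence one cannot simply compare $\xi_t^{\alpha}(g_t)$ with $\xi_t(g_t)$ directly. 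The trick is to insert the intermediate object $K_t^{\alpha}\pi_{t-1}$ and split the gap as
$$
\xi_t^{\alpha}(g_t)-\xi_t(g_t)
=\underbrace{\big(K_t^{\alpha}\pi_{t-1}^{\alpha}\big)(g_t)-\big(K_t^{\alpha}\pi_{t-1}\big)(g_t)}_{\text{(I): difference of filters}}
+\underbrace{\big(K_t^{\alpha}\pi_{t-1}\big)(g_t)-\big(K_t\pi_{t-1}\big)(g_t)}_{\text{(II): nudging gain at step }t}.
$$

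Term (II) is the genuinely positive contribution: by the definition \eqref{eq:NudgingKernel} of $K_t^{\alpha}$ and a change of variables, $(K_t^{\alpha}\pi_{t-1})(g_t)=\int g_t(\alpha_t(x,\gamma_t))\,\xi_t(\sd x)$, so term (II) equals exactly $\Delta_{g_t}(\gamma_t)$ from Eq.~\eqref{A1}, which is strictly positive for any $\gamma_t\in(0,\Gamma_t]$. Term (I) is an error term that we must keep small. Here I would use the induction hypothesis together with the continuity assumptions: by Assumption~\ref{A2}(i) $g_t$ is bounded, and I would first establish (as a lemma, presumably in the appendix) that $\|\pi_{t-1}^{\alpha}-\pi_{t-1}\|_{TV}$ can be made arbitrarily small by taking $\gamma_{1},\dots,\gamma_{t-1}$ small enough---this uses condition \eqref{A0} (so $\alpha_s\to\mathrm{id}$ as $\gamma_s\to 0$), the total-variation continuity of the kernels in Assumption~\ref{A2}(ii), and continuity of $g_s$ to propagate smallness through the recursion \eqref{eq:Nudgingpostpre}. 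Since $(K_t^{\alpha})^{*}$ acts as a (sub-Markov after composition) contraction on total variation and $g_t$ is bounded, $|\text{(I)}|\le \|g_t\|_{\infty}\,\|\pi_{t-1}^{\alpha}-\pi_{t-1}\|_{TV}$, which can be driven below $\tfrac12\Delta_{g_t}(\gamma_t)$. Then choosing $\gamma_t$ first (fixing $\Delta_{g_t}(\gamma_t)>0$), and only afterwards choosing $\gamma_1,\dots,\gamma_{t-1}$ small enough relative to it, makes $\xi_t^{\alpha}(g_t)\ge\xi_t(g_t)$ and also keeps $\|\pi_t^{\alpha}-\pi_t\|_{TV}$ under control for the next step. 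Multiplying the $T$ nonnegative-gap inequalities (all incremental likelihoods being positive by \eqref{A1} and boundedness) yields $p_T(y_{1:T}\mid\mM^{\alpha})\ge p_T(y_{1:T}\mid\mM)$.

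The main obstacle is the bookkeeping of the order of quantification: the $\gamma_t$ cannot be chosen independently, because the error term (I) at step $t$ depends on all earlier $\gamma_s$, while the positive gain $\Delta_{g_t}(\gamma_t)$ against which it must be compared depends on $\gamma_t$. The resolution is a backward/nested selection---choose $\gamma_T$ to pin down a target gain, then $\gamma_{T-1}$ small enough to control the step-$T$ error and to supply its own gain, and so on down to $\gamma_1$---or, equivalently, to run the induction proving simultaneously the two statements ``$\xi_t^{\alpha}(g_t)\ge\xi_t(g_t)$'' and ``$\|\pi_t^{\alpha}-\pi_t\|_{TV}\le\varepsilon_t$'' with a decreasing tolerance schedule $\varepsilon_t$ fixed in advance. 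A secondary technical point worth isolating as a lemma is the uniform total-variation continuity needed to bound (I): strictly speaking Assumption~\ref{A2}(ii) gives pointwise-in-$x$ continuity of $K(x,\cdot)$, so pushing it through the mixture $\int K_t(x_{t-1},\cdot)\pi_{t-1}(\sd x_{t-1})$ against the perturbed filter requires a dominated-convergence argument using boundedness of $g_t$ and tightness of $\pi_{t-1}$; I would state and prove this continuity-of-the-prediction-operator fact separately before assembling the induction.
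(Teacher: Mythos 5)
Your proposal is correct and follows essentially the same route as the paper's proof in Appendices \ref{App1}--\ref{App2}: the decomposition of the incremental-likelihood gap into the strictly positive gain $\Delta_{g_t}(\gamma_t)$ plus a remainder bounded by $\|g_t\|_\infty$ times a total-variation distance between original and nudged (predictive) filters, the propagation of that distance through a recursion using Assumption \ref{A2} and dominated convergence, and the nested choice of $\gamma_{1:T}$ (choose $\gamma_t$ to fix its gain, then shrink the earlier parameters against it) are exactly the content of Propositions \ref{prop: B1}--\ref{pimu}, Lemmas \ref{pmrl}--\ref{TT0} and the simultaneous two-part induction of Lemma \ref{TI}. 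The only presentational difference is that the paper first passes to the evidence-preserving auxiliary model $\bar{\mM}^\alpha$ of Appendix \ref{App1} (pushing the nudge into $g_t^\alpha=g_t\circ\alpha_t$), whereas you carry out the equivalent change of variables directly on $K_t^\alpha$.
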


    In Appendix \ref{App1} we introduce an alternative nuging scheme that relies on the same maps $\alpha_t, \; t=1,..,T,$ and generates a closely related (but different) model $\bar \mM^\alpha$. This new model yields the same Bayesian evidence as $\mM^\alpha$, i.e. $p_T(y_{1:T} \vert \mM ^{\alpha})=p_T(y_{1:T} \vert \bar \mM ^{\alpha})$ but it is easier to analyse. Then, using $\bar \mM^\alpha$ we prove Theorem \ref{ISSL} by an induction argument in Appendix \ref{App2}.

\begin{remark}
Ensuring that the Bayesian evidence is increased, $p_T (y_{1:T} \vert \mM ^{\alpha}) \ge p_T(y_{1:T}\vert \mM)$, via the proposed nudging methodology requires a certain balance between the increments $(g_t\circ\alpha_t)(x)-g_t(x)>0$ in the likelihoods and the preservation of the original dynamics, i.e., keeping the nudged kernels $K_t^\alpha$ `close' to the original $K_t$ in total variation distance. Theorem \ref{ISSL} provides a theoretical guarantee that this balance can always be attained within the framework of the parametric nudging transformations in Definition \ref{def:NudParT}, and the class of models that satisfy Assumption \ref{A2}. It is, however, an existence result that does not provide an explicit procedure to identify suitable parameters $\gamma_{0:T}$ given the model $\mM$ and the data set $y_{1:T}$. Specific families of models as well as a systematic way of constructing the parametric maps $\alpha_t$ from the data $y_{1:T}$ are investigated in the remaining of this section.
\end{remark}

%%%%%%
% Parametric models
%%%%%%
\subsection{Parametric models} \label{ssParametricModels}

Assume a model $\mM_{\theta}=(\pi_{0,\theta},K_\theta,g_\theta)$, where the prior $\pi_{0,\theta}$, the kernels $K_\theta=\{K_{t,\theta}\}_{t\geq 1}$ and the likelihood functions $g_t=\{g_{t,\theta}\}_{t\geq 1}$ are indexed by a parameter vector $\theta$. Given a data set $Y_{1:T}=y_{1:T}$, the model marginal likelihood is $$p_T(y_{1:T}\vert \mM_\theta)=\prod_{t=1}^T \xi_{t,\theta}(g_{t,\theta}),$$ where the (parametrised) predictive measures $\xi_{t,\theta}$ are computed in the usual way. One common form of model mismatch occurs when the choice of $\theta$ does not accurately reflect the dynamics of the real-world system.

In order to fit $\mM_\theta$ to the observed data, a standard approach is to compute the maximum likelihood estimator (MLE) of the parameters vector $\theta$, i.e, we obtain 
$$
\theta^\star = \arg\max_{\theta}\; p_T(y_{1:T}\vert \mM_{\theta}).
$$
For many problems, the MLE $\theta^\star$ may not be easy to compute (it may be intractable). In practice, it may only be possible to compute a suboptimal estimator $\tilde\theta$ such that $p_T(y_{1:T}|\mM_{\tilde \theta}) < p_T(y_{1:T}|\mM_{\theta^\star})$. It is a natural question to ask whether, in this setting, a nudging scheme can be used to ``bridge the gap'' (at least partially) between the marginal likelihoods $p_T(y_{1:T}|\mM_{\tilde \theta})$ and $p_T(y_{1:T}|\mM_{\theta^\star})$. To be specific, Theorem \ref{ISSL} says that, under regularity assumptions, it is possible to find a parametric nudging transformation such that $p_T(y_{1:T}|\mM_{\tilde\theta}^\alpha) \ge p_T(y_{1:T}|\mM_{\tilde\theta})$ and the problem is to establish some guarantee that $p_T(y_{1:T}|\mM_{\tilde\theta}^\alpha)$ is ``reasonably close'' to $p_T(y_{1:T}|\mM_{\theta^\star})$. In this section, we address precisely this issue. 

For our analysis we assume that the model is Lipschitz in each of its components. In particular, if we let $\Theta$ denote the parameter space, then we make the following assumption.

%Hence, a natural question to ask is whether the use of nudging to improve the kernels $K_{t,\theta}$ can compensate for the use of some suboptimal parameter value $\theta\neq \theta^\star$. Specifically, is it possible that $p_T(y_{1:T}\vert \mM_\theta^{\alpha}) \approx p_T(y_{1:T}\vert \mM_{\theta^\star})$ for $\theta\neq\theta^\star$? In this section, we compare the MLE with the nudging scheme. For our analysis we assume that the model is Lipschitz in each of its components. In particular, if we let $\Theta$ denote the parameter space, then we make the following assumption.

\begin{assumption}\label{A3}
There are finite constants $C_0$ and $\{G_t,\kappa_t\}_{t\geq 1}$ such that, for any $\theta,\theta'\in \Theta$
\begin{enumerate}[i)]
    \item  
$\norm{\pi_{0,\theta}-\pi_{0,\theta'}}_{TV}\leq C_0 \norm{\theta-\theta'},$ 
\item  $\norm{g_{t,\theta}-g_{t,\theta'}}_{\infty} \leq G_t \norm{\theta-\theta'},$
\item $\norm{K_{t,\theta}(x,\cdot)-K_{t,\theta'}(x,\cdot)}_{TV}\leq \kappa_t %\left[ \norm{x-x'}+
\norm{\theta-\theta'}.$ %\right].$
\end{enumerate}
\end{assumption}

It is straightforward to show that Assumption \ref{A3} implies that the marginal likelihood is Lipschitz itself (see Appendix \ref{App3}), i.e., there exists a finite constant $L_T$ such that
\beq\label{eq:Lipschitzparametriclikelihood}
\abs{p_T(y_{1:T}\vert \mM_\theta)- p_T(y_{1:T}\vert \mM_{\theta'})}\leq L_T\norm{\theta-\theta'}\quad \text{for any } \theta,\theta'\in \Theta.
\eeq

Let $p_T(y_{1:T}\vert \mM^{\alpha}_\theta)$ and $p_T(y_{1:T}\vert \mM_{\theta^\star})$ denote the Bayesian evidence of the  model with nudging for an arbitrary parameter $\theta$ and the original model for the MLE $\theta^\star$, respectively. According to Theorem \ref{ISSL}, there exists a sequence of parameters $\gamma_{0:T}^\theta$ for which 
$$
\Delta_T^\alpha(\theta):=p_T(y_{1:T}\vert \mM_{\theta}^{\alpha})- p_T(y_{1:T}\vert \mM_{\theta})\geq 0
$$
and we refer to $\Delta_T^\alpha(\theta)$ as the \textit{nudging gain}. The main result of this section follows.

\begin{corollary}\label{cor:NudgingMLE}
 Let $\{\alpha_t\}_{t \in \mathbb{N}}$ be a family of parametric  nudging transformations. 
    If Assumptions \ref{A2} and \ref{A3} hold then there exists $\gamma_{0:T}^\theta$ such that
    \beq \label{thm:NudgingMLE}
    p_T(y_{1:T}\vert \mM_{\theta}^\alpha) \in \left[p_T(y_{1:T}\vert \mM_{\theta^\star})-L_T\norm{\theta^\star-\theta},\; p_T(y_{1:T}\vert \mM_{\theta^\star})+ \Delta_T^{\alpha}(\theta) \right],
    \eeq
where $\Delta_T^\alpha(\theta) \ge 0$, for any $\theta \in \Theta$.

\end{corollary}

\begin{proof} Using inequality \eqref{eq:Lipschitzparametriclikelihood} for the MLE $\theta^\star$, we readily obtain 
\begin{equation}
    0\leq p_T(y_{1:T}\vert \mM_{\theta^\star})-p_T(y_{1:T}\vert \mM_{\theta})\leq L_T \norm{\theta^\star-\theta}, \; \forall \theta\in \Theta.
\label{eq:thm32-1}    
\end{equation}
On the other hand, for any $\theta\in \Theta$, Theorem \ref{ISSL} implies that there is a sequence $\gamma_{1:T}^\theta$ such that
\begin{equation*}
    0\leq p_T(y_{1:T}\vert \mM^{\alpha}_{\theta})-p_T(y_{1:T}\vert \mM_{\theta})= \Delta_T^{\alpha}(\theta)
\end{equation*}
and we can easily use the expression above to rewrite the difference $p_T^{\alpha}(y_{1:T}\vert \mM_{\theta}^{\alpha})- p_T(y_{1:T}\vert \mM_{\theta^\star})$ as 
\begin{equation}
p_T(y_{1:T}\vert \mM_{\theta}^{\alpha})- p_T(y_{1:T}\vert \mM_{\theta^\star}) = \Delta_T^{\alpha}(\theta) + p_T(y_{1:T}\vert \mM_{\theta})-p_T(y_{1:T}\vert \mM_{\theta^\star}).
\label{eq:thm32-2}
\end{equation}
Finally, combining inequality \eqref{eq:thm32-1} with Eq. \eqref{eq:thm32-2} above we arrive at
\begin{equation}\label{eq:MLE_bound}
 - L_T \norm{\theta^\star-\theta}\leq p_T(y_{1:T}\vert \mM_{\theta}^{\alpha})- p_T(y_{1:T}\vert \mM_{\theta^\star})\leq \Delta_T^{\alpha}(\theta),
    \end{equation}
which is equivalent to \eqref{thm:NudgingMLE}. 
\end{proof}

Corollary \ref{cor:NudgingMLE} shows that, when the nudging parameters $\gamma_{1:T}^\theta$ are suitably chosen to ensure $\Delta_T^\alpha(\theta)\ge 0$ (which is always possible by Theorem 3.1), the Bayesian evidence of the nudged model, $p_T(y_{1:T}\vert \mM_{\theta}^{\alpha})$, lies in a neighbourhood of the Bayesian evidence attained with the MLE $\theta^\star$ and the original model, $p_T(y_{1:T}\vert \mM_{\theta^\star})$. More specifically, let us note that:
\begin{itemize}
\item From the left hand inequality in expression \eqref{thm:NudgingMLE}, we see that
$$
p_T(y_{1:T}\vert \mathcal{M}_{\theta^\star}) \le p_T(y_{1:T}\vert \mathcal{M}_{\theta}^\alpha) + L_T \|\theta^\star-\theta\|
$$
which shows that the nudged model $\mM_\theta^\alpha$ attains a Bayesian evidence which is close to the evidence attained with the MLE $\theta^\star$.
\item Since we can choose $\gamma_{1:T}^\alpha$ to ensure $\Delta_T^\alpha(\theta) \ge 0$, the right hand side inequality in expression \eqref{thm:NudgingMLE} shows that it is possible to have $p_T(y_{1:T}| \mathcal{M}_{\theta}^\alpha) \ge p_T(y_{1:T}| \mathcal{M}_{\theta^\star})$ (typically, when $\| \theta - \theta^\star \|$ is small enough).
\end{itemize}

\begin{remark}\label{rem:path_filter_error}
Consider a bounded test function $\varphi: \mathcal{X}^{\otimes T} \to \mathbb{R}$ and denote the path measure of $X_{1:T}$ conditioned on $y_{1:T}$ generated by model $\mM_{\theta^\star}$ as $\Pi_T^{\theta^\star}(\sd x_{1:T})$. Similarly, we denote the path measure of the nudged model $\mM_{\theta}^\alpha$ as $\Pi_T^{\theta,\alpha}(\sd x_{1:T})$. For any bounded test function, a simple calculation (see Appendix~\ref{app:proof_filters}) shows that 
\begin{align}\label{eq:path_error_bound}
\left| \Pi_T^{\theta^\star}(\varphi) - \Pi_T^{\theta,\alpha}(\varphi) \right| &\leq 2 \|\varphi\|_\infty \frac{|p_T(y_{1:T} | \mathcal{M}_{\theta^\star}) - p_T(y_{1:T} | \mathcal{M}_\theta^\alpha)|}{p_T(y_{1:T} | \mathcal{M}_{\theta^\star})}.
% &\leq \frac{2 \|\varphi\|_\infty \max\{\Delta_T^\alpha(\theta), L_T \norm{\theta^\star-\theta}\}}{p_T(y_{1:T} | \mathcal{M}_{\theta^\star})},
\end{align}
Therefore, we can attain a minimum error when $|p_T(y_{1:T} | \mathcal{M}_{\theta^\star}) - p_T(y_{1:T} | \mathcal{M}_\theta^\alpha)| \to 0$, that is, if $\{ \alpha_t(\cdot,\gamma_t) \}_{1 \le t \le T}$ is chosen such that this quantity is minimised. Hence, it is important to design the nudging transformations $\{\alpha_t(\cdot,\gamma_t)\}_{1 \le t \le T}$ carefully to avoid overshooting.
\end{remark}

The remark above relies implicitly on the assumption that the chosen class of models $\mathcal{M}_\theta$ is a good fit to the sequence of observations $y_{1:T}$ as a parametric family, hence $\Pi_T^{\theta^\star}(\varphi)$ is a desirable estimator of $\varphi(X_{1:T})$. If the chosen statistical family $\{\mathcal{M}_{\theta} : \theta \in \Theta\}$ does not contain a single desirable statistical model, the above discussion may be different and a nudged kernel with higher likelihoods may still attain more desirable results.

%%%%%%
% Gradient ascent nudging
%%%%%%
\subsection{Gradient ascent nudging transformation}\label{SGAN}

While nudging can be implemented in several ways \cite{akyildiz2019sequential}, a natural approach is to use the gradient of $\log g_t$ to shift the Markov kernel $K_t$ towards regions of the state space  $\mX$ where the likelihood is higher.

To be specific, let Assumption \ref{A2} hold and, additionally, assume that the functions $\log g_t(x), \; t=1,...,T$, are sufficiently differentiable. We construct a nudging map $\alpha_t:\mX \times [0,\Gamma_t]\to \mX$ of the form 
\beq\label{eq:NudgingGradientAscent}
\alpha_t(x,\gamma):=x+\gamma \nabla \log g_t(x),
\eeq
where $\nabla = \left[ \frac{\partial}{\partial x_1}, \ldots, \frac{\partial}{\partial x_{d_x}} \right]^\mathsf{T}$ is the gradient operator.

An obvious question is whether \eqref{eq:NudgingGradientAscent} is compatible with Definition \ref{def:NudParT}. Clearly, $\alpha_t(x,\gamma)$ is continuous over $\gamma$ and $\lim_{\gamma\to 0}\alpha_t(x,\gamma)\to x$, for all $x\in\mX$, whenever $\norm{\nabla \log g_t(x)}< \infty$, hence Eq. \eqref{A0} holds. As for Eq. \eqref{eq:IncreasingNudging}, it is satisfied when $\nabla \log g_t(x)$ is $L_t$-Lipschitz continuous, i.e., when there is a sequence of constants $L_t < \infty$ such that
\beq\label{eq:LipschitzGradientlog}
\norm{\nabla \log g_t(x)-\nabla \log g_t(x')} \leq L_t \norm{x-x'}, \; \; \forall x,x' \in \mathbb{R}^{d_x}, \; t\in \mbN.
\eeq
We resort to the proposition below

\begin{proposition} \label{IGA}
Assume that the function $f:\mX\to \mbR$ is differentiable and its gradient is $L$-Lipschitz continuous. Then for all $x\in\mX$ such that  $\nabla f(x)\neq 0$, we have
   \beq
   f(x+\gamma \nabla f(x))\geq f(x)+\gamma \left(1-\frac{\gamma L}{2} \right)\norm{\nabla f(x)}^2> f(x), \; \; \forall \gamma\in(0,2/L ).
   \eeq
\end{proposition}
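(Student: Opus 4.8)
The plan is to invoke the standard descent lemma (a.k.a. quadratic upper bound) for functions with Lipschitz gradient. Concretely, I would first establish the auxiliary inequality
\[
f(y) \ge f(x) + \nabla f(x)^\top (y-x) - \frac{L}{2}\norm{y-x}^2, \qquad \forall x,y \in \mX,
\]
which follows from writing $f(y) - f(x) = \int_0^1 \nabla f(x + s(y-x))^\top (y-x)\,\sd s$, adding and subtracting $\nabla f(x)^\top(y-x)$, and bounding the remainder by Cauchy--Schwarz together with the $L$-Lipschitz property $\norm{\nabla f(x+s(y-x)) - \nabla f(x)} \le L s \norm{y-x}$, then integrating $\int_0^1 s\,\sd s = \tfrac12$. (Strictly, this lower bound requires the one-dimensional version applied along the segment from $x$ to $y$; convexity of $\mX$ along that segment is implicitly needed, which I would note or assume, as is standard.)

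Next I would simply substitute $y = x + \gamma \nabla f(x)$ into this inequality. The linear term becomes $\nabla f(x)^\top (\gamma \nabla f(x)) = \gamma \norm{\nabla f(x)}^2$, and the quadratic term becomes $\frac{L}{2}\norm{\gamma \nabla f(x)}^2 = \frac{\gamma^2 L}{2}\norm{\nabla f(x)}^2$. Combining,
\[
f(x + \gamma \nabla f(x)) \ge f(x) + \gamma \norm{\nabla f(x)}^2 - \frac{\gamma^2 L}{2}\norm{\nabla f(x)}^2 = f(x) + \gamma\left(1 - \frac{\gamma L}{2}\right)\norm{\nabla f(x)}^2,
\]
which is the first claimed inequality. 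For the strict second inequality, observe that for $\gamma \in (0, 2/L)$ we have $1 - \tfrac{\gamma L}{2} > 0$, and by hypothesis $\nabla f(x) \ne 0$, so $\norm{\nabla f(x)}^2 > 0$; hence the added term $\gamma(1 - \tfrac{\gamma L}{2})\norm{\nabla f(x)}^2$ is strictly positive, giving $f(x + \gamma\nabla f(x)) > f(x)$.

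The main obstacle here is genuinely minor: it is the careful statement and proof of the descent lemma, in particular making sure the integral representation of $f(y) - f(x)$ along the segment is valid (differentiability of $f$ and the segment lying in $\mX$, or extending to $\mbR^{d_x}$ as the Lipschitz hypothesis \eqref{eq:LipschitzGradientlog} in the paper is actually stated on all of $\mbR^{d_x}$) and handling the one-sided vs. two-sided nature of the estimate. Everything after that is bookkeeping: plug in, collect terms, and read off the sign conditions. I would keep the proof to about five lines, stating the quadratic bound as the key fact and then doing the substitution explicitly.
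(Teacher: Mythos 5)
Your proof is correct and is essentially the same argument the paper relies on: the paper simply cites Lemma 1.2.3 and Eq.\ (1.2.12) of Nesterov, which is exactly the quadratic bound you derive via the integral representation before substituting $y = x + \gamma\nabla f(x)$. Your added remark about the segment lying in $\mX$ (or working on all of $\mbR^{d_x}$, consistent with how \eqref{eq:LipschitzGradientlog} is stated) is a reasonable point of care that the paper leaves implicit.
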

%\denizl{This same Proposition holds for $\mathcal{X}$ closed and convex for projected gradient descent. Where is the best place to write it? Hence the follow up results will also hold in that case.}
This is just a slight variation of Theorem 3 in \cite{polyak1963gradient}. See  Lemma 1.2.3 and Eq. (1.2.12) of \cite{nesterov2013introductory} for an explicit proof. 
If we apply Proposition \ref{IGA} to the log likelihood functions $\log g_t$, we obtain
\beq\label{eq:Increaselogg}
\log g_t(\alpha_t(x,\gamma))\geq \log g_t(x)+\gamma \left(1-\frac{\gamma L_t}{2} \right)\norm{\nabla \log g_t(x)}^2 \ge \log g_t(x),
\eeq
with equality only if $\gamma=0$ or $\nabla \log g_t=0$. Taking exponentials on the three terms of \eqref{eq:Increaselogg} yields
\beq \label{eq:ExponentialIncreaseg}
g_t(\alpha(x,\gamma))  \geq e^{\gamma \left(1-\frac{\gamma L_t}{2} \right)\norm{\nabla \log g_t(x)}^2}g_t(x) \ge g_t(x), \quad \forall x\in \mX, \gamma\in[0,2/L_t).
\eeq
Hence, if there is a set $A_t \subseteq \mX$ such that $\nabla \log g_t(x) \ne 0$ for all $x \in A_t$ and $\xi_t(A_t)>0$, it follows from \eqref{eq:ExponentialIncreaseg} that  
\begin{equation}\label{eq:Deltag}
\Delta_{g_t}(\gamma)=\int_{\mX} [g_t(\alpha_t(x,\gamma))-g_t(x)] \xi_t(\sd x)>0, \;\; \text{ for all } \; \gamma\in (0,2/L_t).
\end{equation}

The inequalities \eqref{eq:ExponentialIncreaseg} and \eqref{eq:Deltag}
imply that Eq. \eqref{eq:IncreasingNudging} and Eq. \eqref{A1} are satisfied, and we state the following corollary of Theorem \ref{ISSL} for the special case when nudging is implemented as a gradient step. 

\begin{corollary}\label{cor:GANS}
For $t=1,..,T$, let $\alpha_t$ have the form in \eqref{eq:NudgingGradientAscent} and let $Y_{1:T}=y_{1:T}$ be an arbitrary but fixed data set. If 
\begin{enumerate}[(a)]
\item\label{ass_a} $\nabla \log g_t(x)$ is $L_t$-Lipschitz continuous,
\item there are sets $A_t \subseteq \mX$ such that $\xi_t(A_t)>0$  and $\nabla \log g_t(x)\neq0$ for all $x \in A_t$, and
\item Assumption \ref{A2}. ii) holds,
\end{enumerate}
then there exists a positive sequence $\gamma_{0:T}$ (depending on $\mM$ and $y_{1:T}$) such that
\beq
p_T(y_{1:T}|\mM^\alpha) \ge p_T(y_{1:T}|\mM).
\label{eqCor}
\eeq
\end{corollary}

\begin{proof}
The gradients $\nabla \log g_t$ are finite and the sets $A_t\subseteq \mX$ exist by assumption, and we have seen that this is sufficient for Eq. \eqref{A0}, Eq. \eqref{eq:IncreasingNudging} and Eq. \eqref{A1} to hold. Hence, $\{\alpha_t\}_{t\in\mbN}$ is a family of parametric nudging transformations as described by Definition \ref{def:NudParT}. Since Assumption \ref{A2} on the model $\mM$ is given, the inequality \eqref{eqCor} follows directly from Theorem \ref{ISSL}.
\end{proof}

\begin{remark}
Assumption (\ref{ass_a}) in the statement of Corollary \ref{cor:GANS} may be satisfied or not depending on the likelihood model $g_t$. A simple example where the assumption does not hold corresponds to the multiplicative-noise observation model
\beq
Y_t=\exp\left\{ \frac{X_t}{2} \right\} Z_t, \quad Z_t \sim \mN(0,1),
\label{eqMultiplicative}
\eeq
which is commonly found in stochastic volatility models \cite{tsay2005analysis}. It is clear that $Y_t|X_t \sim \mN(0,\exp\{X_t\})$, hence, for given $Y_t=y_t$, the likelihood function becomes $g_t(x) \propto \exp\left\{-\frac{1}{2}\left(x+\exp\{-x\}y_t^2\right) \right\}$ and it is straightforward to show that $\nabla \log g_t(x)$ is not Lipschitz.
\\
The main difficulty with \eqref{eqMultiplicative} arises from the noise being multiplicative. For example, if we assume the nonlinear observation in additive Gaussian noise 
$$
Y_t = f(X_t) + Z_t, \quad Z_t \sim \mN(0,C),
$$
where $f:\mX \mapsto \mY$ is some possibly nonlinear map, then, for given $Y_t=y_t$, the likelihood function is
$$
g_t(x) \propto \exp\left\{ -\frac{1}{2}(y_t-f(x))^\top C^{-1} (y_t-f(x)) \right\}
$$
and 
$$
\nabla \log g_t(x) = -C^{-1}(y-f(x)\nabla f(x)),
$$
where $\nabla f(x)$ is the Jacobian matrix of $f(x)$. Assumption (\ref{ass_a}) holds when the function $f(x)\nabla f(x)$ is Lipschitz, which can be easily tested in most cases. Linear observations with additive Gaussian, Student-t or Cauchy noise can easily be shown to satisfy Assumption (\ref{ass_a}). 
\\
Finally, note that the Lipschitz continuity of $\nabla \log g_t(x)$ is a sufficient condition for Eq. \eqref{eq:IncreasingNudging} to hold, but it is possibly non-necessary. 
\end{remark}

\begin{remark}\label{rem:maximiser_mle_nudging}
Optimisation-based implementations of nudging should be done carefully in light of Remark~\ref{rem:path_filter_error} under parameter misspecification. In particular, we propose the nudging transformation $\alpha_t(x, \gamma_t)$ in \eqref{eq:NudgingGradientAscent} as a gradient step with step-size $\gamma_t$, but without careful implementation, the overshooting problem mentioned in Remark~\ref{rem:path_filter_error} can be problematic. For example, consider a map $\alpha_t(x, \gamma_t)$ for a given likelihood $g_t(x)$ that returns $x_t^\star \in \arg \max_{x\in\mathcal{X}} g_t(x)$, that is, the maximiser of the likelihood (as for log-concave likelihoods, this would eventually happen if nudging were run for many steps). It can be easily shown that this results in a strictly positive difference for the marginal likelihoods in \eqref{eq:path_error_bound} for any $\theta$. In particular, let $\mM^{\alpha^\star}=(\pi_0,K^{\alpha^\star},g)$ denote the nudged model where the nudged kernel is degenerate, i.e. $K_t^{\alpha^\star}(x_{t-1},\sd x_t)=\delta_{x_t^\star}(\sd x_t)$. Note that in this case, the filter is independent of any transition kernel parameter $\theta$. Then 
$$
p_T(y_{1:T}\vert \mM_{\theta^\star}) = \int \prod_{t=1}^T g_t(x_t) K_{t,\theta^\star}(\sd x_t | x_{t-1}) \pi_0(\sd x_0) < \prod_{t=1}^T g_t(x_t^\star) = p_T(y_{1:T}\vert \mM^{\alpha^\star}),
$$
i.e., $|p_T(y_{1:T}\vert \mM^{\alpha^\star}) - p_T(y_{1:T}\vert \mM_{\theta^\star}) | > 0$. This results in higher estimation errors compared to a less aggressive nudging map $\alpha_t$ which can satisfy $|p_T(y_{1:T}\vert \mM^{\alpha}) - p_T(y_{1:T}\vert \mM_{\theta^\star}) | \approx 0$ (see Remark~\ref{rem:path_filter_error}). This shows that one should not blindly maximise this likelihood but instead choose an empirically well performing step size $\gamma$.
\end{remark}

\begin{remark}
Throughout this Section \ref{SGAN} we have assumed that the nudging transformation maps the state space $\mX$ onto itself. However, the transformation defined in Eq.~\eqref{eq:NudgingGradientAscent} does not necessarily satisfy this property when $\mX$ is bounded. In such case, we can define a parametric family of nudging transformations that take projected gradient ascent steps, instead of standard gradient ascent steps, and still yields the same result in Corollary \ref{cor:GANS}, provided that $\mX$ is closed and convex. This is discussed in detail in Appendix~\ref{A:PGAN}. Furthermore, the analysis in Appendix~\ref{A:PGAN} can be extended mutatis mutandis to proximal gradient algorithms \citep{garrigos2023handbook}, which allow to handle nondifferentiable likelihoods.
\end{remark}

%%%%%%
% Nudging the Kalman filter
%%%%%%
\subsection{Linear and Gaussian models}\label{NKalman}

In this section we explore the application of the nudging methodology to linear Gaussian systems. In particular, we consider the model $\mM=\{\pi_0,K,g\}$ where $\pi_0(\sd x)=\mathcal{N}(\sd x; m_0, P_0)$, i.e., $\pi_0$ is a Gaussian law with mean $m_0$ and covariance matrix $P_0$. The Markov kernels $K_t$ and the likelihood functions $g_t$ are also Gaussian, i.e.,  
\beq \label{eq: LinearGausian_K}
K_t(x_{t-1},\sd x_t)=\mathcal{N}(\sd x_t; A_t  x_{t-1}, Q_t)\eeq and 
\beq \label{eq: LinearGaussian_g}
g_t(x_t)\propto \exp\{ -\frac{1}{2} (y_t-C_tx_t)^\top R_t^{-1}(y_t-C_tx_t)\},
\eeq
respectively. We assume that the model parameters $A_t,Q_t,C_t$ and $R_t$ are known (for every time $t=1,...,T$).

We apply the gradient-ascent nudging scheme of Section \ref{SGAN} to the model $\mM$ described above. In particular the nudging map $\alpha_t$ of Eq. \eqref{eq:NudgingGradientAscent} becomes
\beqa
\alpha_t(x,\gamma_t) &=& x + \gamma_t \nabla \log g_t(x) \nn \\
&=& x + \gamma_t C_t^\top R_{t}^{-1}(y_t - C_t x) \nn \\
&=& (I - \gamma_t C_t^\top R_t^{-1} C_t) x +  \gamma_t C_t^\top R_t^{-1} y_t,
\label{eq:NudgingmapGaussian}
\eeqa
where the second equality comes from the (straightforward) calculation of $\nabla \log g_t(x)$ and the third equality is obtained by re-arranging terms. Let us note that
\beq \label{eq: LipschitzConstant_LinearGaussian}
\norm{\nabla \log g_t(x)-\nabla \log g_t(x')}\leq \norm{C_t^\top R_t^{-1}C_t}\norm{x-x},
\eeq 
which implies that, in this case, the Lipschitz constant is given by $L_t=\norm{C_t^\top R_t^{-1}C_t}$, and we need to select $\gamma \in (0,2/L_t)$, at each time step, in accordance with Eq. (\ref{eq:Deltag}). Moreover, it can be seen that for $\gamma \in [0,1/L_t)$ the inverse of the nudging transformation $\alpha^{-1}_t(x)$ exists and is given by 
\beq \label{eq:NudgingInverse}
\alpha^{-1}_t(x)=(I - \gamma_t C_t^\top R_t^{-1} C_t)^{-1}(x-\gamma_t C_t^\top R_t^{-1} y_t).
\eeq
Finally, it can be seen from Eq. \eqref{eq:NudgingmapGaussian} that the resulting nudging is an affine map of the state, which allows us to derive the modified kernel $K_t^{\alpha}(x_{t-1}, \sd x_t)$ in closed form. To be specific, one readily obtains 
\begin{align}
K_t^\alpha(x_{t-1}, \sd x_t) &=% \int \delta_{\alpha_t(x'_t)}(\sd x_t) K_t(x_{t-1}, \sd x_t') \nonumber, \\
 \mathcal{N}(\sd x_t; M_t A_t x_{t-1} + \gamma_t C_t^\top R^{-1}_t y_t, M_t Q_t M_t^\top) \label{eq:nudged_Gaussian_kernel},
\end{align}
where
\begin{align*}
M_t = I - \gamma_t C_t^\top R_t^{-1} C_t.
\end{align*}
%\deniz{I wrote up to here. Below proof should be done including $A_t$ in the state transition.}

The nudged model $\mM^\alpha=\{\pi_0,K^\alpha,g\}$ is affine and Gaussian, which implies that the predictive and filtering laws, $\xi_t^\alpha$ and $\pi_t^\alpha$, respectively, can be computed exactly using a KF. To be specific, we have $\xi_t^\alpha(\sd x) = \mathcal{N}(\sd x; \tilde \mu_t,\tilde P_t)$ and $\pi_t^\alpha(\sd x)=\mathcal{N}(\sd x; \mu_t, P_t)$ where the posterior means ($\tilde \mu_t,\mu_t$) and covariances ($\tilde P_t, P_t$) are computed recursively as 
\beq
\left\{ \begin{array}{lcc} \tilde P_t = M_tA_t P_{t-1}A_t^\top M_t^\top +M_tQ_tM_t^\top,\\
\tilde \mu_t = M_tA_t \mu_{t-1} + \gamma_t C_t^\top R_t^{-1}y_t. 
\\ \end{array} \right.
\eeq
\beq
\hspace{-.5cm}\left\{ \begin{array}{lcc} 
S_t=C_t\tilde P_t C_t^\top+R_t,
\\ 
\mu_t = \tilde \mu_t +\tilde P_tC_t^\top S_t^{-1}(y_t-C_t\tilde \mu_t),
\\ 
P_t =\tilde P_t-\tilde P_tC_t^\top S_t^{-1}C_t\tilde P_t.
\end{array} \right.
\eeq
Similar algorithms, with additive correction terms (obtained by different arguments), have been investigated, especially in continuous-time settings (see, e.g., \cite{pathiraja2024connections}).

Even if the laws $\xi_t^\alpha$ and $\pi_t^\alpha$ can be obtained exactly, the question remains whether there is a sequence $\gamma_{1:T}$ such that the marginal likelihood is improved by nudging, i.e., whether $p_T(y_{1:T}\vert \mM^\alpha)\geq p_T(y_{1:T}\vert \mM)$. To answer this question we examine whether model $\mM$ satisfies the assumptions of Corollary \ref{cor:GANS}. 
%As discussed in Section \ref{SGAN}, in order to ensure the existence of a sequence of steps that increase the Bayesian evidence for the nudging transformation defined by the gradient ascent step $\alpha_t(x,\gamma) = x + \gamma \nabla \log g_t(x)$, it is sufficient to confirm that the assumptions in Corollary \ref{GANS} hold.
Since the gradient of $\log g_t$ has the form
$$
\nabla \log g_t(x)=C_t^\top R_t^{-1}(y_t-C_tx_t),
$$ 
it follows that $\norm{\nabla \log g_t(x)} < \infty$, for all $x \in \mbR^{d_x}$ and $t = 1, \ldots, T$. Furthermore, the probability law $\xi_t(\sd x)$ is Gaussian (for every $t$), hence for any cell $I_t \subseteq \mbR^{d_x}$ with positive Lebesgue measure we have $\xi_t(I_t)>0$, $t=1,..,T$. Finally, the likelihoods $g_t$ are continuous and bounded, which accounts for Assumption \ref{A2}.i), %and clearly $\left| \frac{\partial\log g_t(x)}{\partial x_i} \right| < \infty$ for $i=1,...,d_x$ and $t\in \mbN$, 
hence it only remains to prove that Assumption \ref{A2}.ii) holds for the linear and Gaussian model $\mM$. 

We proceed using Proposition 2.1 in \cite{devroye2018total}: if $\Sigma_1$ and $\Sigma_2$ are positive definite covariance matrices, then
\begin{equation}
\begin{split}
 \hspace{-.25cm} \vert \vert \mathcal{N}(\sd x; \mu_1, & \Sigma_1)-\mathcal{N}(\sd x; \mu_2,\Sigma_2)\vert \vert_{TV} \leq \\
&\frac{1}{2} \sqrt{\mathrm{Tr}(\Sigma_1^{-1}\Sigma_2-I)+(\mu_1-\mu_2)^\top\Sigma_1^{-1}(\mu_1-\mu_2)-\log(\det(\Sigma_2\Sigma_1^{-1}))}.
\end{split}
\label{eq:DMR}
\end{equation}
In our case, 
$$
\norm{K_t(x_{t-1},\sd x_t)-K_t(x_{t-1}',\sd x_t)}_{TV}=
\norm{\mathcal{N}(\sd x_t; A_t\bar x_{t-1}, Q_t)-\mathcal{N}(\sd x_t; A_t\bar x_{t-1}', Q_t)}_{TV},
$$ 
i.e., comparing to \eqref{eq:DMR} we have $\Sigma_1=Q_t=\Sigma_2$ and $\mu_1=A_t\bar x_t, \mu_2=A_t\bar x_t'$ and the inequality \eqref{eq:DMR} readily implies 

\begin{equation}
\begin{split}
\vert \vert \mathcal{N}(\sd x_t; A_t\bar x_{t-1},Q_t)-  \mathcal{N} (\sd x_t; A_t & \bar x_{t-1}', Q_t)\vert \vert_{TV}  \leq \\
 & \frac{1}{2}  \sqrt{(A_t(\bar x_{t-1}  -\bar x_{t-1}'))^\top  Q_t^{-1}A_t(\bar x_{t-1}-\bar x_{t-1}')}.
\end{split}
\label{eq:TVNormNormal}
\end{equation}
%\beqa
%\norm{\mathcal{N}(\sd x_t; A_t\bar x_{t-1},Q_t)-\mathcal{N}(\sd x_t; A_t\bar x_{t-1}',Q_t)}_{TV} &\leq& \nn\\
%\frac{1}{2}\sqrt{(A_t(\bar x_{t-1}-\bar x_{t-1}'))^\top Q_t^{-1}A_t(\bar x_{t-1}-\bar x_{t-1}')}.&&
%\label{eq:TVNormNormal}
%\eeqa
Since $Q_t$ is a positive definite symmetric matrix, its eigenvalue decomposition yields 
\beq
Q_t=U_t^\top\Lambda_t U_t,
\label{eq:SpectralDecQ}
\eeq
where $U_t$ is a unitary matrix and $\Lambda_t$ is a diagonal matrix with the (real and positive) eigenvalues of the matrix $Q_t$. Substituting \eqref{eq:SpectralDecQ} into \eqref{eq:TVNormNormal} yields
\begin{equation}
\norm{\mathcal{N}(\sd x_t; A_t\bar x_{t-1},Q_t)-\mathcal{N}(\sd x_t; A_t\bar x_{t-1}',Q_t)}_{TV}\leq %\frac{1}{2}\sqrt{\norm{w}^2}=\frac{1}{2}\norm{w}\leq 
\frac{1}{2}\norm{A_t\Lambda_t^{-\frac{1}{2}}U_t}\norm{\bar x_{t-1}-\bar x_{t-1}'}. 
\nn
\end{equation}
Therefore, the linear and Gaussian kernels of model $\mM$ are uniformly continuous in total variation and, in particular, Assumption \ref{A2}.ii) holds.

Since the assumptions of Corollary \ref{cor:GANS} hold for linear and Gaussian models, it follows that there is a sequence $\gamma_{1:T}$ such that nudging using the map in \eqref{eq:NudgingmapGaussian} yields an increased marginal likelihood, $p_T(y_{1:T}|\mM^\alpha) \ge p_T(y_{1:T}| \mM)$. The computer simulations in Section \ref{ssKalman} show that it is not difficult to find sequences of steps $\gamma_{1:T}$ that improve the marginal likelihood.

\begin{remark}\label{remark: UniconTV}
  Note that the linearity of the mean is not required in Eq. \eqref{eq:TVNormNormal}. Specifically, for any $\mu_1$ and $\mu_2$, using Eq. \eqref{eq:SpectralDecQ}, we obtain
    \begin{equation}
\label{eq:continuousTV}
\norm{\mathcal{N}(\sd x_t; \mu_1,Q_t)-\mathcal{N}(\sd x_t;\mu_2,Q_t)}_{TV}\leq %\frac{1}{2}\sqrt{\norm{w}^2}=\frac{1}{2}\norm{w}\leq 
\frac{1}{2}\norm{\Lambda_t^{-\frac{1}{2}}U_t}\norm{\mu_1-\mu_2},  
\end{equation}
where $Q_t=U_t^\top \Lambda_t U_t$ is the eigenvalue decomposition of $Q_t$.
Therefore, any Gaussian transition kernel (not just the linear ones) is uniformly continuous in total variation, in the sense of Assumption  \ref{A2}. 
\end{remark}

\begin{remark}
%For some specific linear-Gaussian observation models (regardless of the kernel dynamics), it is possible to choose $\gamma_t$ to obtain $x_t^\star \in \arg \max_x g_t(x)$ using the gradient-ascent nudging scheme (which would be harmful in light of Remark~\ref{rem:maximiser_mle_nudging}). To illustrate this, 
Consider the linear-Gaussian observation model $Y_t = a I_{d_x} X_t + V_t$ where $V_t \sim \mathcal{N}(0, \sigma^2 I_{d_y})$ and $a \neq 0$. With the choice of the step-size $\gamma_t=\gamma^\star = (\sigma/a)^2$, we obtain that $x_t^\star = \alpha_t(x, \gamma^\star) = (1/a) y_t$ which is the maximiser of the likelihood $g_t$, i.e., $x_t^\star \in \arg \max_x g_t(x)$ for every $t$. This creates the degenerate kernel that is mentioned in Remark~\ref{rem:maximiser_mle_nudging}. Such cases should be avoided in practice. Note, however, that for more general observation models, the problem $\arg \max_x g_t(x)$ is intractable and thus this issue is less prominent.
\end{remark}

%%%%%%%%%
% Numerical results
%%%%%%%%%
\section{Computer simulations} \label{sNumerical}

%%%%%%
% Example: nudging in a linear-Gaussian state-space model
%%%%%%

\subsection{Nudging in a linear-Gaussian state-space model} \label{ssKalman}

\subsubsection{Simulation setup}

Let us consider a linear-Gaussian SSM, which is tractable as shown in Section~\ref{NKalman}. In particular, we consider a four-dimensional controlled linear dynamical system similar to the setup in \cite{akyildiz2020nudging}. Let $I_n$ denote the identity matrix of dimension $n$, we define
\begin{align}
\pi_0(\mathsf{d} x_0) &= \mathcal{N}(\mathsf{d} x_0; \mu_0, P_0), \\
K^\star(x_{t-1},\mathsf{d}x_t) &= \mathcal{N}(\mathsf{d}x_t; Ax_{t-1} + B L (x_{t-1} - x_{\star}), Q), \\
g_t(y_t|x_t) &\propto \exp\{-\frac{1}{2} (y_t-C x_t)^\top R^{-1} (y_t-C x_t)\},
\end{align}
 where we choose $C=I_4$, %\deniz{Joaquin: What's the best notation here measure vs. density as I'm using $\mathcal{N}$ for both?}
\begin{align*}
A = \begin{bmatrix}
I_2 & \kappa I_2 \\
0 & I_2
\end{bmatrix},
\quad
B = \begin{bmatrix}
    0 & I_2
\end{bmatrix}^\top, \quad Q = \begin{bmatrix}
\frac{\kappa^3}{3} I_2 & \frac{\kappa^2}{2} I_2 \\
\frac{\kappa^2}{2} I_2 & \kappa I_2
\end{bmatrix},
\end{align*}
with $\kappa = 0.04$ and
\begin{align}
L = \begin{bmatrix}
-0.0134 & 0.0 & -0.0381 & 0.0 \\
0.0 & -0.0134 & 0.0 & -0.0381
\end{bmatrix}.
\end{align}
This system defines a \textit{controlled} linear dynamical system that moves the system towards the target state $x_{\star} = [140, 140, 0, 0]^\top$ where $L$ is found by solving a Riccati equation \cite{bertsekas2012dynamic}. Since this policy would not be known a priori to an observer interested in filtering the observations from this system, we  explore the use of nudging together with the \textit{misspecified} SSM with the transition kernel
\begin{align}
K(x_{t-1}, \mathsf{d} x_t) = \mathcal{N}(\mathsf{d}x_t; A x_{t-1}, Q),
\end{align}
which ignores the control terms in $K^\star$. We next define the nudged kernel (see Eq. \eqref{eq:nudged_Gaussian_kernel})
\begin{align}
K^\alpha(x_{t-1}, \mathsf{d}x_t) = \mathcal{N}(\mathsf{d}x_t; M A x_{t-1} + \gamma C^\top R^{-1} y_t, M Q M^\top),
\end{align}
where we choose a fixed step size $\gamma > 0$ and
\begin{align*}
M = I_4 - \gamma C^\top R^{-1} C.
\end{align*}

%%%%%%
%
%%%%%%
\subsubsection{Numerical results}

Numerical results for the linear-Gaussian SSM can be seen from Fig.~\ref{fig: likelihood_stepsizes} and Fig.~\ref{fig: NMSE_Kalman}. In particular, Fig.~\ref{fig: likelihood_stepsizes} demonstrates the behaviour of the log marginal likelihoods w.r.t. varying step-sizes within the step-size range $\gamma \in [5 \times 10^{-3}, 1.5 \times 10^{-1}]$. We note that since all considered models within this section are linear Gaussian SSMs, the log marginal likelihood computations are exact. 

It can be seen from Fig.~\ref{fig: likelihood_stepsizes} that the log marginal likelihoods of the nudged KF can be slightly higher than the log marginal likelihood of the original KF with the correct parameters. This numerically verifies the result we obtained in Corollary~\ref{cor:NudgingMLE}, empirically demonstrating the \textit{nudging gain} (one should note, however, that the result in Corollary~\ref{cor:NudgingMLE} is a result w.r.t. the MLE, rather than the \textit{true} parameter). 

Next, Fig.~\ref{fig: NMSE_Kalman} shows  a similar performance w.r.t. the normalised mean square errors (NMSEs) rather than the log-marginal likelihoods. The NMSE at discrete time $t$ is constructed as 
\beq \label{eq: NMSE} 
\text{NMSE}_t = \frac{\norm{x_t-\hat x_t}_2^2}{\frac{1}{T}\sum_{t=1}^T \norm{x_t}_2^2}, 
\eeq
where $x_t$ is the actual $3$-dimensional state of the system and $\hat x_t$ is its estimate computed by the PF. The NMSE for each simulation is then computed as the mean over time of these errors, namely, $\text{NMSE}= \frac{1}{T}\sum_{t=1}^T \text{NMSE}_t$. Similar to Fig. \ref{fig: likelihood_stepsizes}, we observe that nudging yields much lower NMSEs than the misspecified KF. However, expectedly, in terms of NMSEs w.r.t. the ground truth states, the KF with the correct parameters remains the best estimator.

\begin{figure}[htb] 
  \begin{minipage}[t]{0.48\linewidth}
    \centering
    \includegraphics[width=1.1\linewidth]{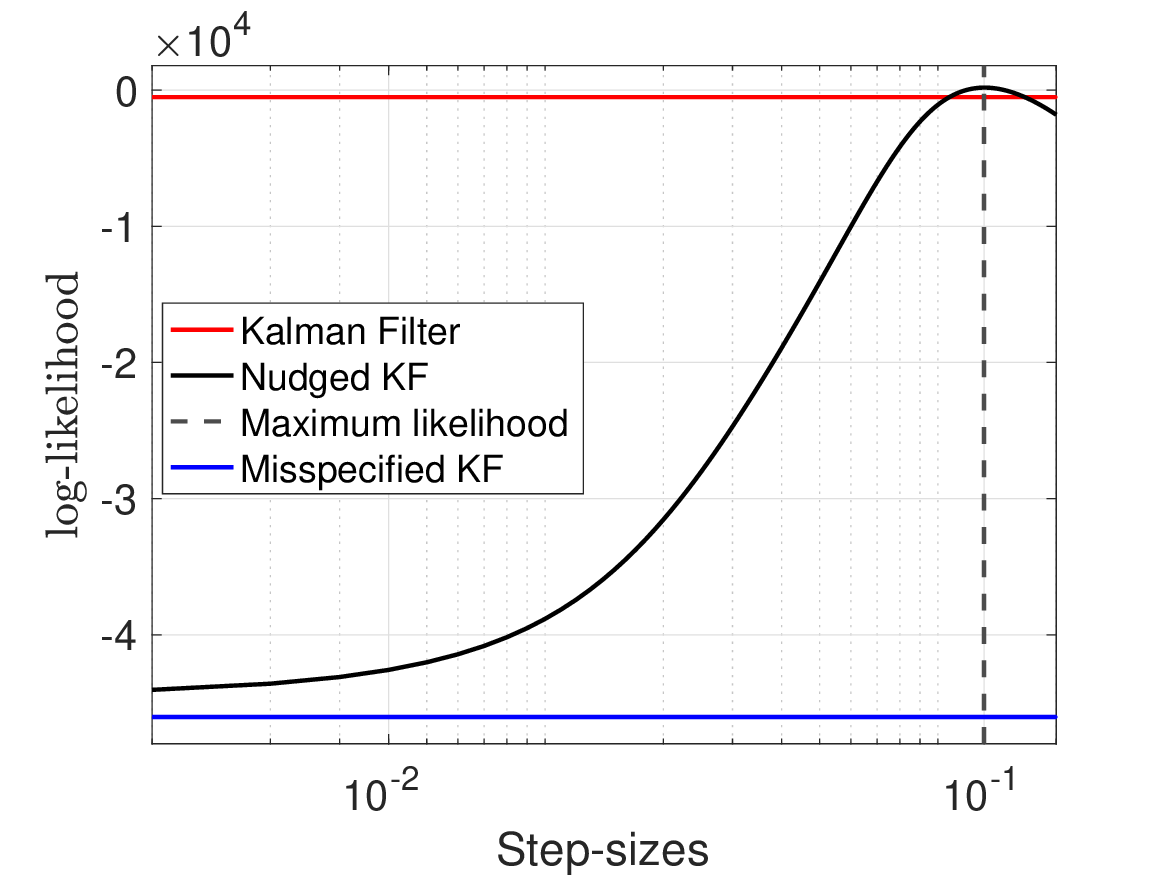} 
    \caption{Comparison of marginal likelihoods for the step-size interval $\gamma \in [5 \times 10^{-3}, 1.5 \times 10^{-1}]$ where $\gamma_t := \gamma$ for all $t = 1, \ldots, T$.  The figure shows that the nudged Kalman filter attains a higher likelihood than the original (correctly specified) Kalman filter for a range of step-size values and attains much higher likelihood than the misspecified Kalman filter across all step-sizes.} \label{fig: likelihood_stepsizes}
    \vspace{4ex}
  \end{minipage}\hfill
  \begin{minipage}[t]{0.48\linewidth}
    \centering
    \includegraphics[width=1.1\linewidth]{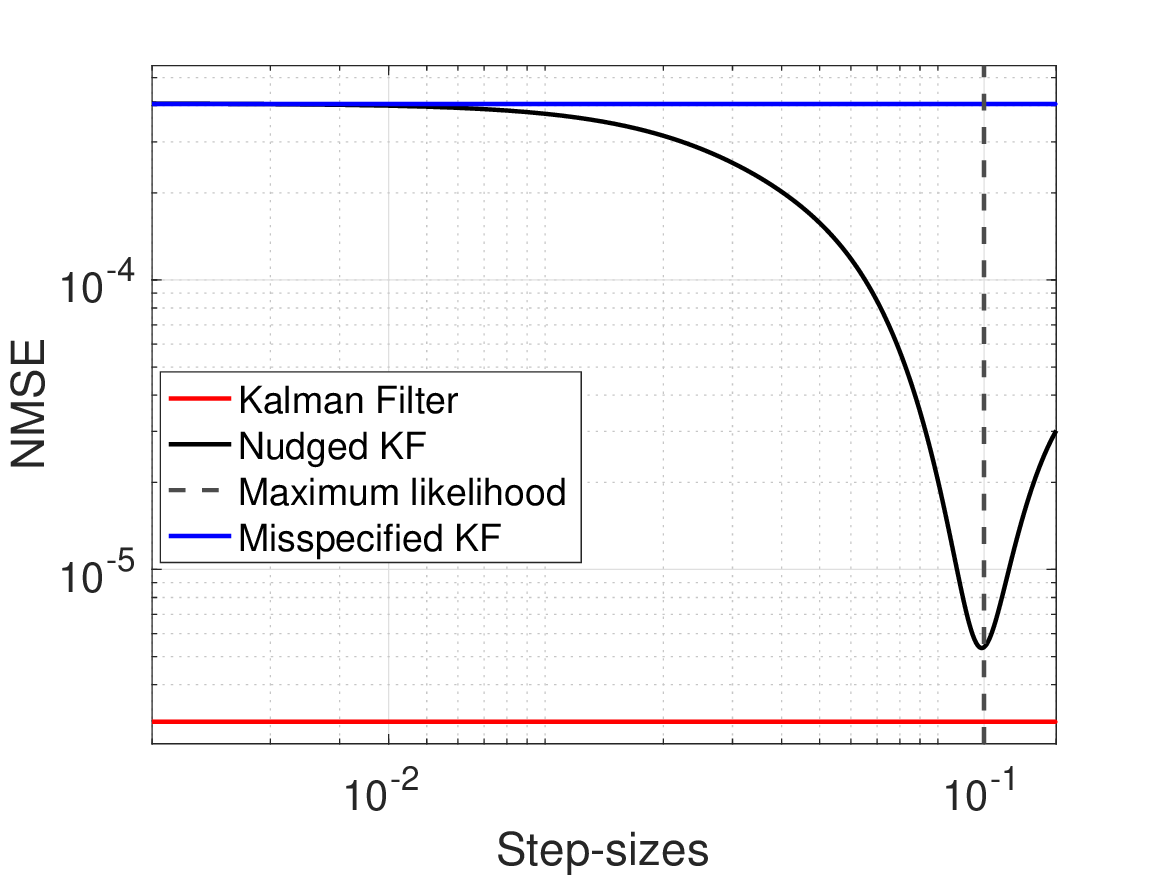} 
    \caption{Comparison of the NMSEs for the step-size interval $\gamma \in [5 \times 10^{-3}, 1.5 \times 10^{-1}]$ where $\gamma_t := \gamma$ for all $t = 1, \ldots, T$. The figure shows, similarly, the nudged Kalman filter attains a lower NMSE than the misspecified Kalman filter.} \label{fig: NMSE_Kalman}
    \vspace{4ex}
  \end{minipage} 

\end{figure}

%%%%%%
% Example: nudging in a stochastic Lorenz 63 model
%%%%%%
\subsection{Stochastic Lorenz 63 model}

\subsubsection{Simulation setup}

We examine the problem of tracking the dynamic variables of a $3$-dimensional Lorenz system with additive dynamical noise and partial noisy observations. The system dynamics are governed by a stochastic differential equation (SDE).
Specifically, consider a stochastic process $\{\tilde X(s)\}_{s \in (0,\infty)}$ taking values on $\mbR^3$, described  by the system of Itô SDEs
\begin{align}
   &\sd\tilde X_1 = -S(\tilde X_1-Y_1) + \sd W_1, \label{SDEL1}\\  &\sd \tilde X_2 = -R \tilde X_1-\tilde X_2- \tilde X_1 \tilde X_3 + \sd W_2, \label{SDEL2}\\
   &\sd \tilde X_3 = \tilde X_1 \tilde X_2-B \tilde X_3 + \sd W_3,\label{SDEL3}
\end{align}
where $\{ W_i(s)\}_{s\in (0,\infty)}, \; i=1,2,3$, are independent one-dimensional Wiener processes, $s$ denotes continuous time, and $\{S,R,B\}\in \mbR$ are constant model parameters. A discrete-time approximation of this system can be derived using the Euler-Maruyama method with a time step $h>0$, resulting in the difference equations
\begin{align}
 &\tilde X_{1,n}=\tilde X_{1,n-1}-h S(\tilde X_{1,n-1}-\tilde X_{2,n-1})+\sqrt{h} U_{1,n}, \label{eq:LorenzX1} \\  
 &\tilde X_{2,n}=\tilde X_{2,n-1}-h (R \tilde X_{1,n-1}-\tilde X_{2,n-1}-\tilde X_{1,n-1}\tilde X_{3,n-1})+\sqrt{h} U_{2,n}, \label{eq:LorenzX2} \\  
 &X_{3,n}=\tilde X_{3,n-1}-h (\tilde X_{1,n-1}\tilde X_{2,n-1}- B \tilde X_{3,n-1})+\sqrt{h} U_{3,n}, \label{eq:LorenzX3}
\end{align} 
where $n=1,2,...,$ is discrete time,
and $\{U_i\}_{n},$  $i=1,2,3$, are independent sequences of i.i.d. $\mathcal{N}(0,1)$ random variables. 
%\color{red} Note that this implies the transition kernels are Gaussian with same covariance matrix. By Remark \ref{remark: UniconTV}, they are continuous in total variation.

We assume that the system  is observed every $n_0\geq 1$ discrete-time steps. Specifically, we assume that only the variable $\tilde X_{1,n}$ is observed, meaning that we collect a sequence of one-dimensional observations $\{Y_t\}_{t=1,2,...},$ of the form
\beq \label{eq:LorenzY1}
Y_{t} = \tilde X_{1,n_0t} + V_{t},
\eeq
where $\{ V_{t}\}_{t=1,2,...}$ is a sequence of i.i.d. r.v.'s with distribution $\mathcal{N}(0,\sigma^2)$.

Let us denote $X_{i,t}= \tilde X_{i,n_0t}$, so that the $t$-th observation can be written as 
\beq \label{eq:40th-observations}
Y_t = X_{1,t}+V_t
\eeq 
and $X_t = ( X_{1,t},X_{2,t},X_{3,t})^\mathsf{T}$ denotes the state of the system at discrete time $t$ (or continuous time $s=hn_0t$).  
The iteration of Eq. \eqref{eq:LorenzX1}-\eqref{eq:LorenzX3} yields the Markov kernel $K_t(x_{t-1}, \sd x_t)$ while Eq. \eqref{eq:40th-observations} yields the (Gaussian) likelihood function $g_t(x_t)$. We assume a Gaussian prior distributions $\pi_0(\sd x_0)= \mathcal{N}(\hat x_0,\hat C_0)$, where $\hat x_0 = (1,1,1)^\top , C_0 = \sigma_0 I $, and $\sigma_0=20$. The model is parameterised by the constant vector $\theta=(S,R,B)^\top$. In particular, the transition kernel depends on $\theta$ and we write $K_t(x_{t-1},\sd x_t)\equiv K_{t,\theta}(x_{t-1},\sd x_t).$ The resulting parametric model is denoted $\mM_\theta= \{\pi_0,K_\theta,g\}.$  To simulate the state signal and synthetic observations from model $\mM_\theta$, we select the commonly used standard parameter values
\beq\label{eq:LorenzParameter}
\theta^*=(S,R,B)^\mathsf{T}=\left(10,28,\frac{8}{3}\right)^\mathsf{T},
\eeq
which make the deterministic Lorenz 63 chaotic. We assume that the step size for the Euler method is $h=10^{-3}$ and the system is observed every $n_0=40$ discrete time steps. For each simulation, we run the system for $t=1,...,T$, where $T=500$. This amounts to a simulation of the original SDE \eqref{SDEL1}-\eqref{SDEL3} over the continuous time interval $[0,Tn_0h]=[0,20]$.

We apply the gradient ascent nudging method of Section \ref{cor:GANS}, where the transformation $\alpha_t(x,\gamma)$ is defined in \eqref{eq:NudgingGradientAscent}. 
The nudging kernel $K_t^\alpha(x_{t-1}, \sd x_t)$ can be sampled in two steps: 
\vspace{.2cm}
\begin{enumerate}[i)]\label{}
    \item Draw $\hat x_t$ from the original kernel $K_t(x_{t-1}, \sd x_t)$ (this is done by iterating Eqs. \eqref{eq:LorenzX1}-\eqref{eq:LorenzX3} with initial condition $x_{t-1}$).
    \item Apply the correction $x_t=\alpha_t(\hat x_t, \gamma_t).$
\end{enumerate}
\vspace{.2cm}
The nudged model is denoted by $\mM_\theta^\alpha=\{\pi_0,K_\theta^\alpha, g \}.$
It is easy to see that in this case, the gradient  $\nabla \log g_t(x)$ is Lipschitz
with constant $L_t=1/\sigma^2, \; t=1,...,T,$ therefore, as mentioned in Section \ref{SGAN}, we can select $\gamma_t = \gamma \in (0, 2\sigma^2)$ (the parameter $\gamma_t$ is constant for all $1 \le t \le T$). 

Note that we should not set $\gamma = \sigma^2$ in this particular case, since that choice leads to degenerate nudged kernels $K_t^\alpha$ as described in Remark \ref{rem:maximiser_mle_nudging}.

We approximate numerically the $\log$ of the Bayesian evidence for models $\mM_\theta$ and $\mM_\theta^\alpha$, i.e., the quantities $p_T(y_{1:T}\vert \mM_\theta)$ and $p_T(y_{1:T}\vert \mM_\theta^\alpha)$, respectively, by running standard PFs \cite{gordon1993novel} (see also \cite{doucet2000sequential}, \cite{djuric2003particle} and \cite{bain2008fundamentals}) with a sufficiently large number of particles $N,$ for each model $\mM_\theta$ and $\mM_\theta^\alpha$ with the same sequence of observations $Y_{1:T}=y_{1:T}$. If the PF yields a sequence of equally weighted particles sets $\{x^i_t \}_{i=1}^N$ for $t=1,..,T$, then the Monte Carlo estimate of the $\log$ Bayesian evidence is 
$$\log p_T(y_{1:T}\mid \cdot) \approx \log p_T^N(y_{1:T}\mid \cdot)=\sum_{t=1}^T \log \frac{1}{N} \sum_{i=1}^N g_t(x_t^i).$$

\subsubsection{Numerical results}

In order to test whether the proposed nudging scheme can ensure an increased log marginal likelihood in a practical setup (i.e., with fixed step size $\gamma$) 
we have run $200$ independent simulation of a PF for the models $\mM_{\theta}$ and $\mM_{ \theta}^\alpha$ (that is, we generate the state, the observations, and the PF estimates across 200 independent trials), using the parameter $\theta$ in Eq. \eqref{eq:LorenzParameter}. The number of particles is $N = 500$, and the initial condition $x_0$ is randomly drawn from the distribution $\mathcal{N}(\hat{x}_0, C_0)$ with $\hat{x}_0=(1,1,1)^\top$ and $C_0= 20 I$. The observation variance, defined in Eq. \eqref{eq:40th-observations}, is $\sigma^2 = 1$, and we choose the step size $\gamma =  0.8\sigma^2,$ constant for each time step $t$.

\begin{figure}[htb] 
  \begin{minipage}[t]{0.48\linewidth}
    \centering
    \includegraphics[width=1.1\linewidth]{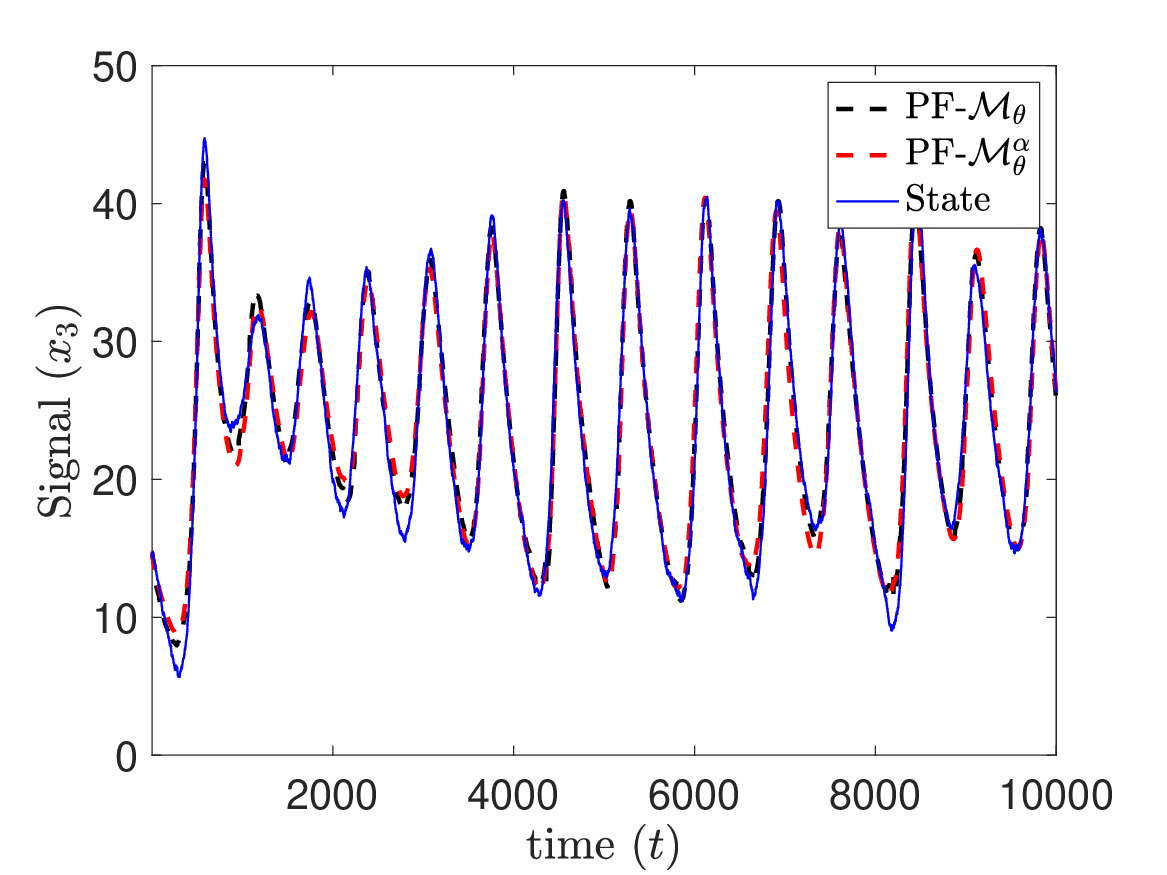} 
    \caption{Coordinate $x_1$ of the state and its PF estimates with models $\mM_\theta$ and $\mM_\theta^\alpha$.} \label{fig:LorenzX1}
    \vspace{4ex}
  \end{minipage}\hfill
  \begin{minipage}[t]{0.48\linewidth}
    \centering
    \includegraphics[width=1.1\linewidth]{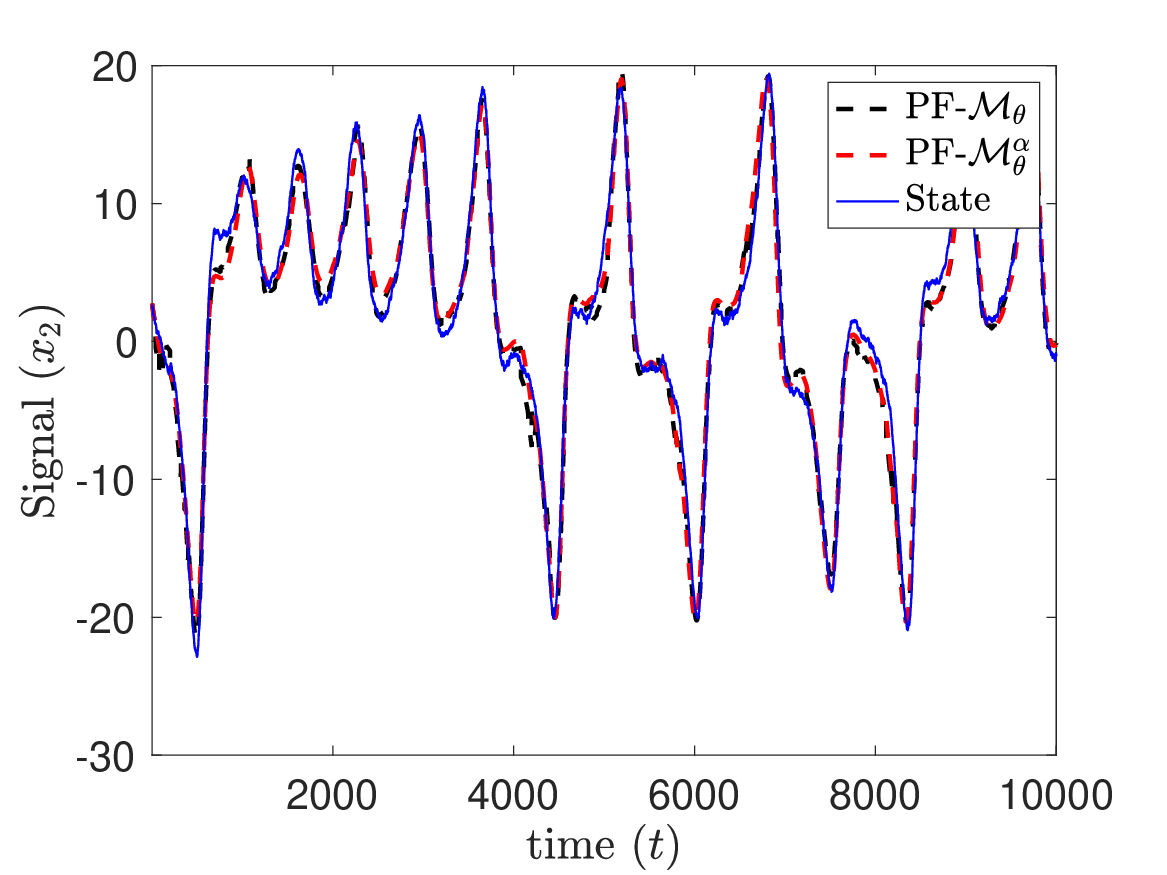} 
    \caption{Coordinate $x_2$ of the state and its PF estimates with models $\mM_\theta$ and $\mM_\theta^\alpha$.} \label{fig:LorenzX2}
    \vspace{4ex}
  \end{minipage} 
  \begin{minipage}[t]{0.48\linewidth}
    \centering
    \includegraphics[width=1.1\linewidth]{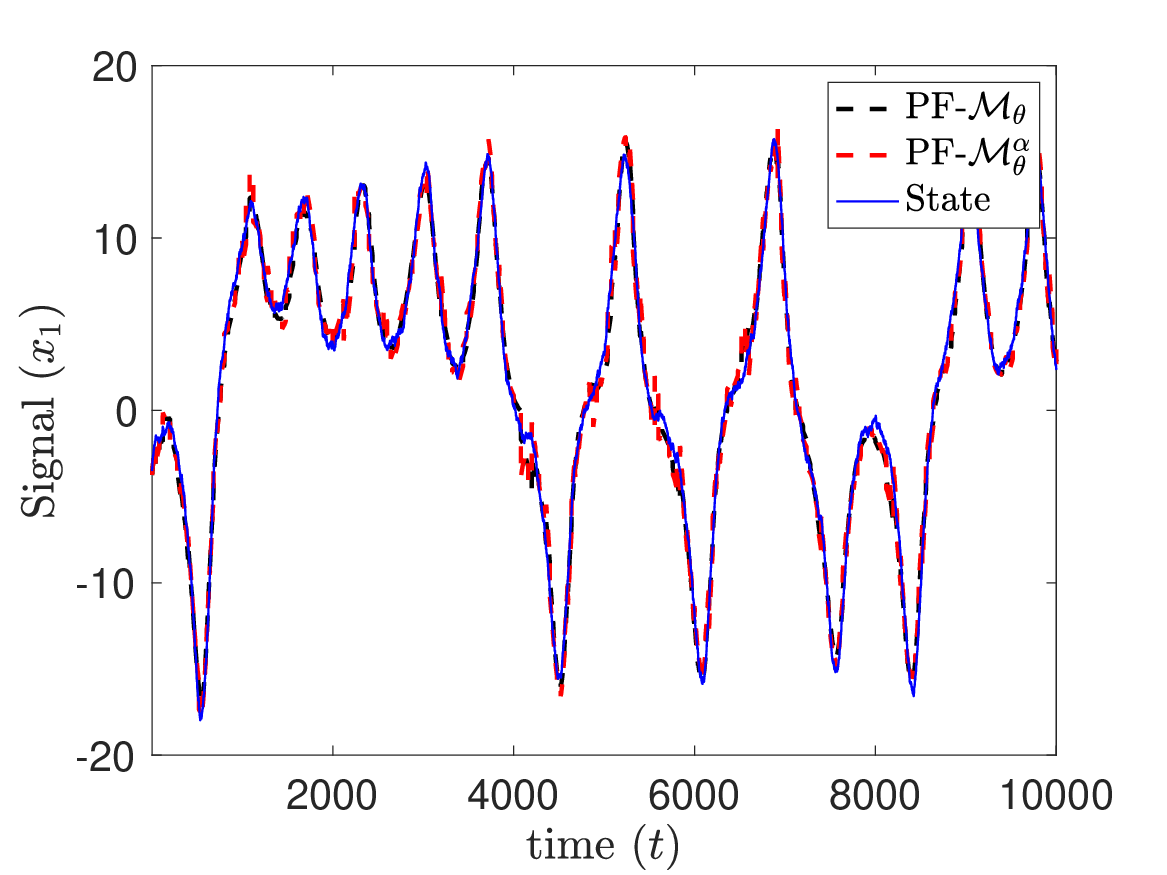} 
    \caption{Coordinate $x_3$ of the state and its PF estimates with models $\mM_\theta$ and $\mM_\theta^\alpha$.} \label{fig:LorenzX3}
    \vspace{4ex}
  \end{minipage} \hfill
  \begin{minipage}[t]{0.48\linewidth}
    \centering
    \includegraphics[width=1.1\linewidth]{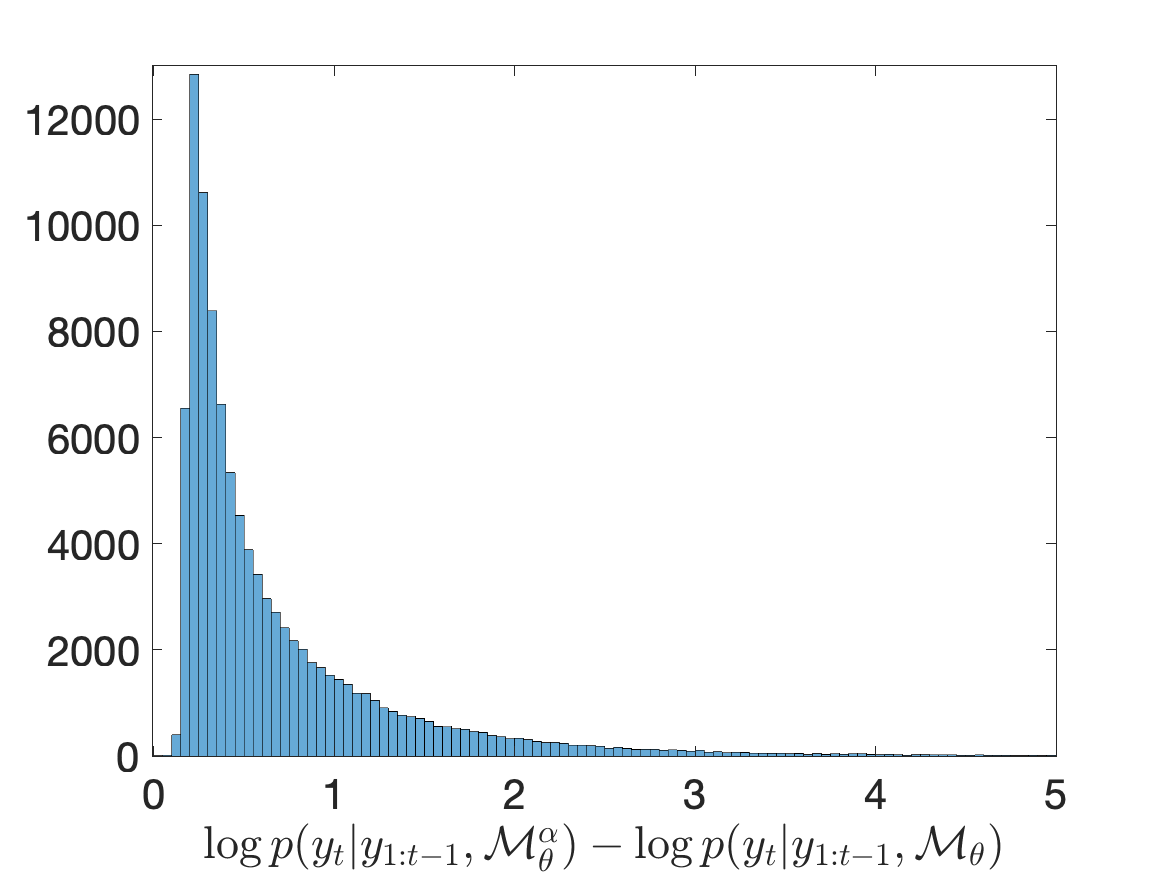}
    \caption{Difference in $\log$ incremental likelihood of $\mM_\theta^\alpha$ and $\mM_\theta$, $t=1,...,T.$} \label{fig:HistogramIncrementalLikelihood} 
    \vspace{4ex}
  \end{minipage} 
 
\end{figure}

Figures \ref{fig:LorenzX1}, \ref{fig:LorenzX2} and \ref{fig:LorenzX3} illustrate the time evolution of the state of the stochastic Lorenz 63 model and their estimates computed via a standard PF for both models, $\mM_\theta$ and $\mM^\alpha_\theta$. Although the approximations look similar, a closer examination reveals significant differences in performance. Specifically, Figure \ref{fig:HistogramIncrementalLikelihood} provides a histogram of the differences between the incremental likelihoods of the two models, expressed as $$\log p(y_t \mid y_{1:t-1}, \mM^\alpha_\theta) - \log p(y_t \mid y_{1:t-1}, \mM_\theta),$$ across the time steps $t=1, \dots, T$. 
This histogram shows a consistent positive difference at each time step, suggesting that the nudged model, $\mM^\alpha_\theta$, reliably enhances the Bayesian evidence. This implies, in particular, that the overall log likelihood, $\log p_T(y_{1:T
} \mid \mM^\alpha_\theta)$, is greater than $\log p_T(y_{1:T
} \mid \mM_\theta)$. Therefore, the nudged model $\mM_\theta^\alpha$ not only approximates the state similarly to the original model $\mM_\theta$ but also provides an improvement in terms of compatibility with the observed data.

Figure \ref{fig:BoxplotM} shows box plots of  the log likelihoods $\log p_T(y_{1:T} \vert \mM_\theta)$ and $\log p_T(y_{1:T} \vert \mM_\theta^\alpha)$ obtained in the same experiment. We observe that the empirical distribution has a larger median for the model with nudging $ \mM_\theta^\alpha$ and the $25\%$ and $75\%$ percentiles are also higher compared to the results  with the original model $\mM_\theta.$ For the same set of simulations, Figure \ref{fig:AverageBayesM} shows the average values of $\log p_t(y_{1:t} \vert \mM_\theta)$ and $\log p_t(y_{1:t} \vert \mM_\theta^\alpha)$ versus the observation index $t=1,...,T.$ Again, we see that nudging improves the $\log$-likelihood. Specically, $\log p_t(y_{1:t} \vert \mM_\theta^\alpha)\geq \log p_t(y_{1:t} \vert \mM_\theta)$ for every $t$.
\begin{figure}[htb] 
  \label{ fig7} 
  \begin{minipage}[t]{0.48\linewidth}
    \centering
    \includegraphics[width=1.1\linewidth]{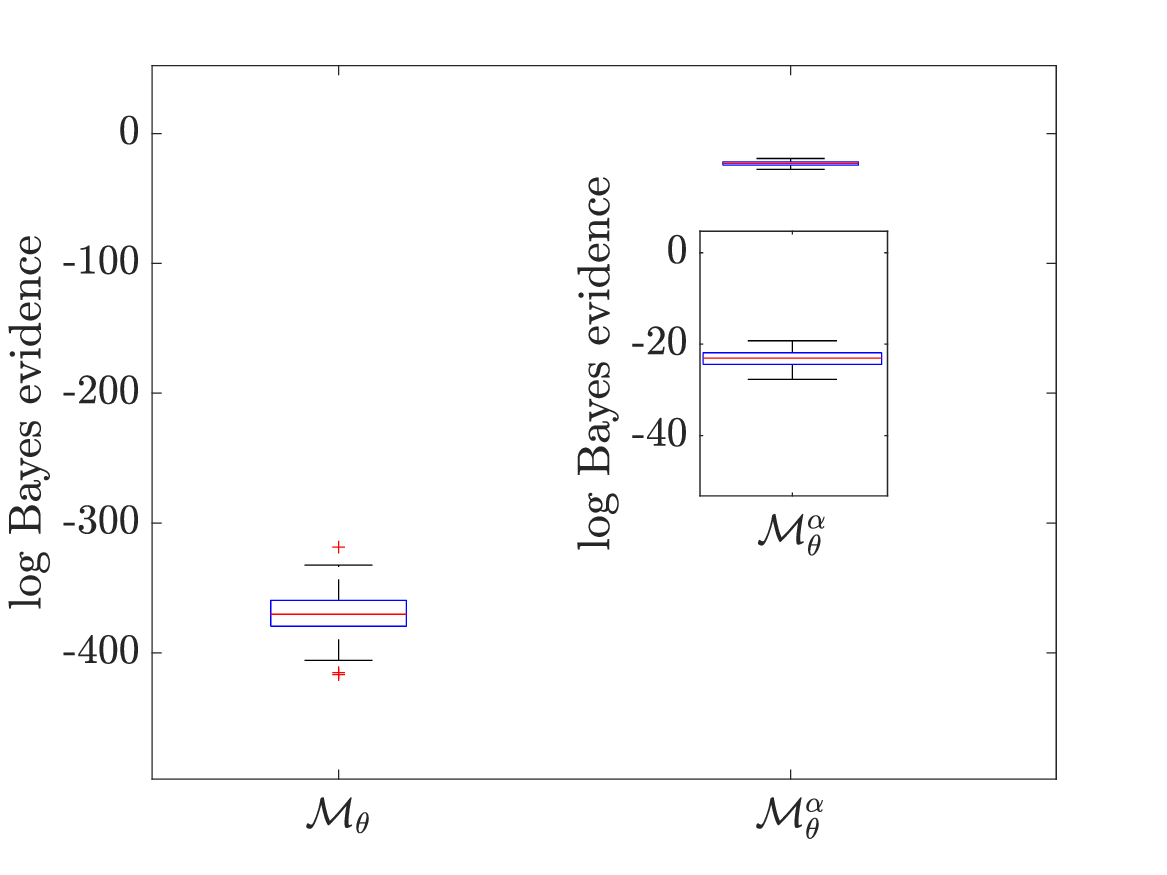} 
    \caption{Box plot of the estimated Bayesian evidence for $\mM_\theta$ and $\mM_\theta^\alpha$, over 200 independent simulations.} \label{fig:BoxplotM}
    \vspace{4ex}
  \end{minipage}\hfill
  \begin{minipage}[t]{0.48\linewidth}
    \centering
    \includegraphics[width=1.1\linewidth]{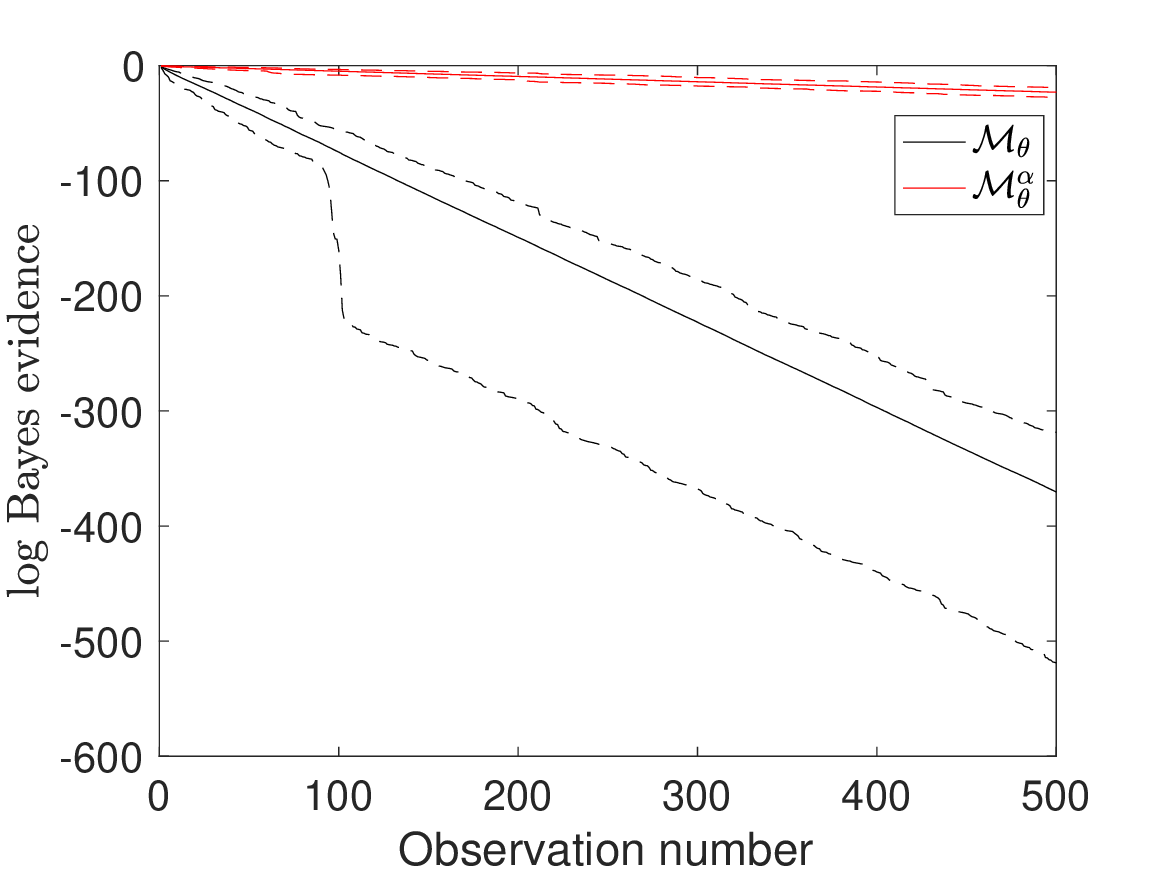} 
    \caption{Average over 200 independent simulations of the $\log$ Bayesian evidence vs  
observations, with maximum and minimum values (dashed lines)for the models $\mM_\theta$ and $\mM_\theta^\alpha$.} \label{fig:AverageBayesM}
    \vspace{4ex}
  \end{minipage} 

\end{figure}

%%-------------------
%%
%%-------------------
%% Missmatch model
In the next computer experiment, we examine how 
a parameter mismatch affects the performance of the filter. 
For this purpose we consider three models: 
\begin{itemize}
    \item The original model $\mM_\theta$, with $\theta$ as in Eq. \eqref{eq:LorenzParameter}. this model is used to generate the signal and the observations $Y_{1:T}=y_{1:T}$ in each independent simulation.
    \item A mismatched model $\mM_{\tilde \theta}$, where $\tilde \theta=(10,28,\frac{8}{3}+\epsilon)^\top$ and $\epsilon=\frac{11}{5}.$ For each simulation we run a PF on this model, using the data $Y_{1:T}=y_{1:T}$ generated with the original model $\mM_\theta.$
    \item The nudged model $\mM_{\tilde \theta}^\alpha$. For each simulation, we also run a PF on $\mM_{\tilde \theta}^\alpha$, with the same data $Y_{1:T}=y_{1:T}$ generated from $\mM_\theta$.
\end{itemize}

We have run $200$ independent simulations with the setup described above. For all simulations the number of particles is $N = 500$, and the initial condition $x_0$ is randomly drawn from the distribution $\mathcal{N}(\hat{x}_0, C_0)$ with $\hat{x}_0=(1,1,1)^\top$ and $C_0= 20 I$. The observation variance is $\sigma^2 = 1$, and we choose the fixed step size $\gamma =  0.8\sigma^2.$

 Due to the chaotic dynamics of the system, the parameter mismatch significantly impacts the dynamics, causing the PF built upon $\mM_{\tilde{\theta}}$ to lose track of the state signals. 
 However, tracking remains effective in the PF built upon the nudged model $\mM^\alpha_{\tilde{\theta}}$, as illustrated in Figures \ref{fig:LorenzX1MM} to \ref{fig:LorenzX3MM}.

\begin{figure}[hbt]  
  \begin{minipage}[t]{0.48\linewidth}
    \centering
    \includegraphics[width=1.1\linewidth]{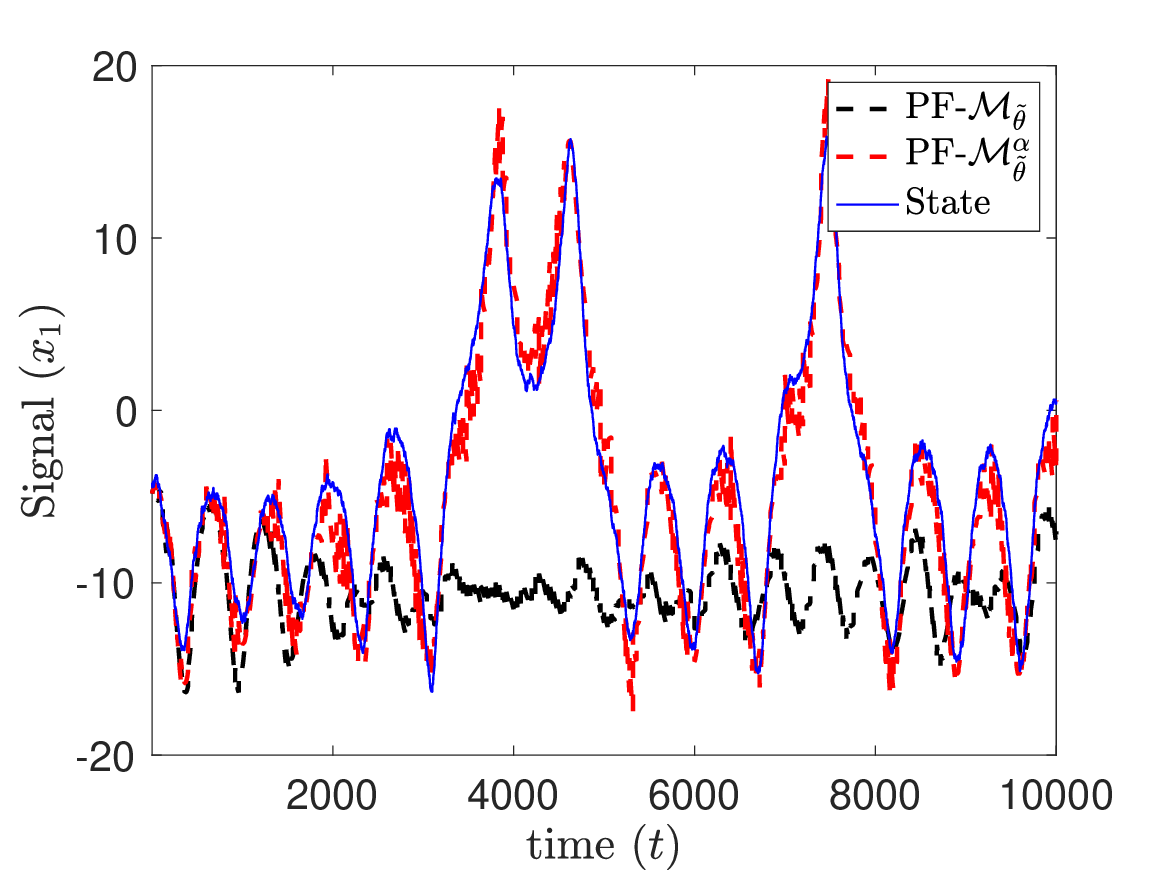} 
    \caption{Coordinate $x_1$ of the state and its PF estimates for $\hspace{.1cm} \mM_{\tilde\theta} \hspace{.1cm}$ and $\hspace{.1cm}\mM_{\tilde \theta}^\alpha$,  $\hspace{.3cm} \tilde \theta=(S,R,B+\epsilon)^\mathsf{T}$.} \label{fig:LorenzX1MM}
    \vspace{4ex}
  \end{minipage}\hfill
  \begin{minipage}[t]{0.48\linewidth}
    \centering
    \includegraphics[width=1.1\linewidth]{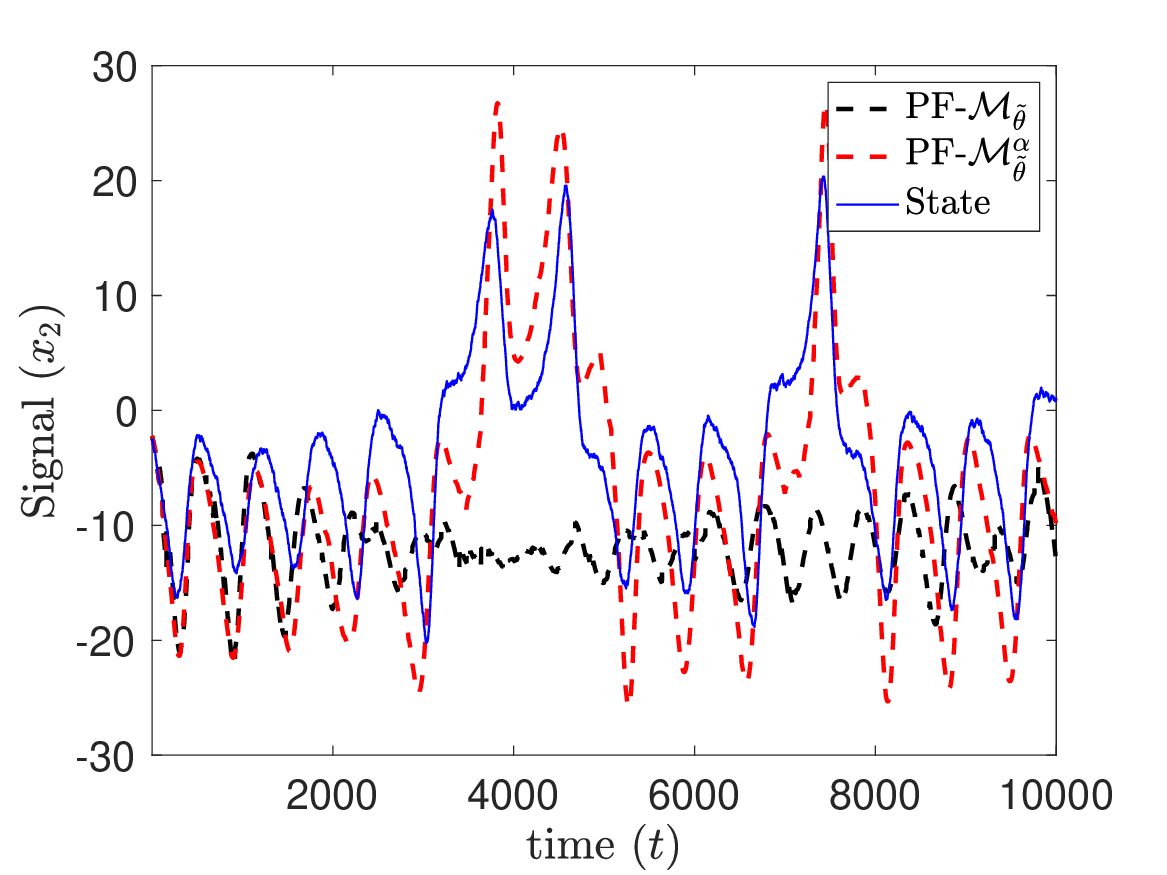} 
    \caption{Coordinate $x_2$ of the state and its PF estimates for $\mM_{\tilde\theta}$ and $\mM_{\tilde \theta}^\alpha$, $\tilde \theta=(S,R,B+\epsilon)^\mathsf{T}$.} \label{fig:LorenzX2MM}
    \vspace{4ex}
  \end{minipage} 
  \begin{minipage}[t]{0.48\linewidth}
    \centering
    \includegraphics[width=1.1\linewidth]{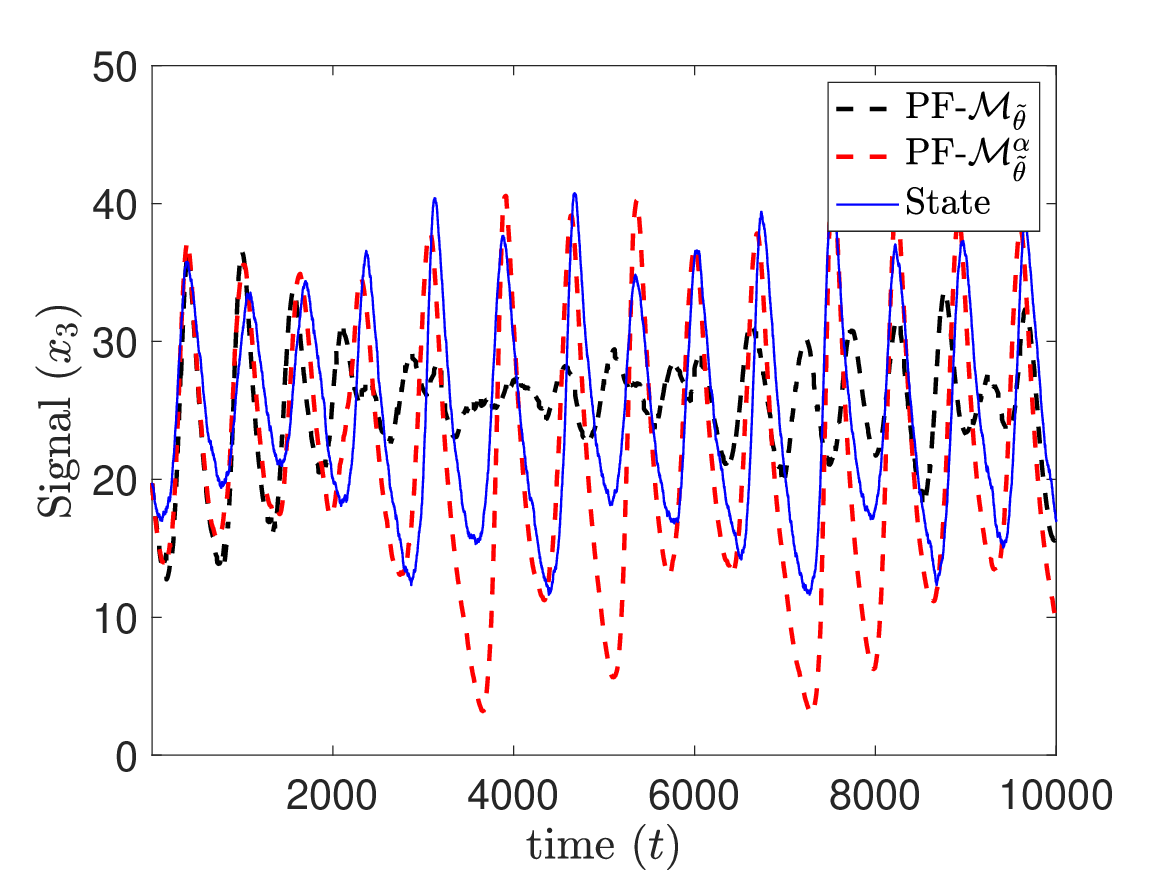} 
    \caption{Coordinate $x_3$ of the state and its PF estimates for $\mM_{\tilde\theta}$ and $\mM_{\tilde \theta}^\alpha$, $\tilde \theta=(S,R,B+\epsilon)^\mathsf{T}$.} \label{fig:LorenzX3MM}
    \vspace{4ex}
  \end{minipage} \hfill
  \begin{minipage}[t]{0.48\linewidth}
    \centering
    \includegraphics[width=1.1\linewidth]{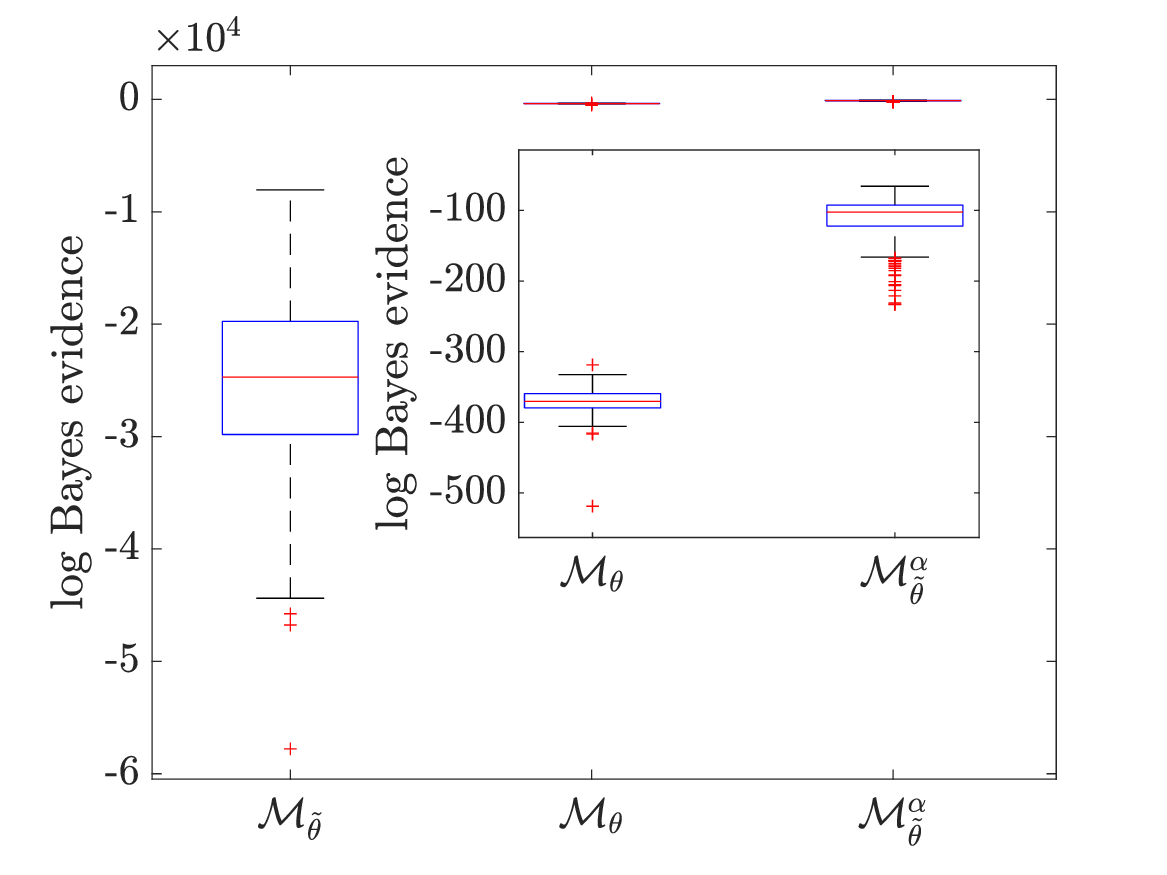} 
    \caption{Box plot comparison for the estimated $\log$ Bayesian evidence for $\mM_{\tilde \theta}$, $\mM_{\tilde \theta}^\alpha$, and the true model $\mM_\theta$. The inset graph is a zoom view of the box plots for $\mM_\theta$ and $\mM_{\tilde \theta}^\alpha.$} \label{fig:BoxplotMM}
    \vspace{4ex}
  \end{minipage} 
\end{figure}

For the same set of simulations, 
Figure \ref{fig:BoxplotMM} presents box plots of the empirical distribution of the $\log$ Bayesian evidence $p_T(y_{1:T}\vert \mM_{\tilde \theta})$ for the mismatched model, and we additionally compare it with the evidence for the true model, $p_T(y_{1:T}\vert \mM_{\theta})$, and the mismatched nudged model, $p_T(y_{1:T}\vert \mM_{\tilde \theta}^\alpha).$
We see that the log Bayesian evidence of the mismatched nudged model $\mM_{\tilde \theta}^\alpha$ is much higher than the evidence of the mismatched model $\mM_{\tilde \theta}$ and even slightly higher than the evidence of the ``true'' model $\mM_\theta$. This shows that nudging can effectively compensate for parameter mismatches.

In our final experiment, we introduce significant mismatches across all values of the parameter vector $\theta$ to evaluate the filter performance under extreme conditions. For this purpose we consider the parameter vector $\hat{\theta}=2\theta$, with $\theta$ as in Eq. \eqref{eq:LorenzParameter}. We assume $2$-dimensional observations for this simulations, namely
\begin{align*}
Y_{1,t} = X_{1,t} + V_{1,t},\\
Y_{2,t} = X_{2,t} + V_{2,t}, 
\end{align*}  and denote $Y_t=(Y_{1,t},Y_{2,t})^\top$ 
where $\{ V_{i,t}\}_{t=1,2,...},$ $i=1,2,$ are sequences of i.i.d. $\mathcal{N}(0,\sigma^2)$ r.v.'s. 
As in previous experiments, we use the original model $\mM_\theta$, with $\theta$ as in Eq. \eqref{eq:LorenzParameter} to generate the signal and the observations $Y_{1:T}=y_{1:T}$ in each simulation.

We have run $200$ independent simulations of the standard trial PF for the models $\mM_{\hat \theta}$ and $\mM_{\hat \theta}^\alpha$, using the parameter value $\hat \theta=2\theta$,  with $\theta$ as in Eq.\eqref{eq:LorenzParameter}. For all simulations the number of particles is $N = 500$, and the initial condition $x_0$ is randomly drawn from the distribution $\mathcal{N}(\hat{x}_0, C_0)$ with $\hat{x}_0=(1,1,1)^\top$ and $C_0= 20 I$. The observation variance is $\sigma^2 = 1$, and we choose the fixed step size $\gamma =  0.8\sigma^2.$ 

\begin{figure}[hbt]  
  \begin{minipage}[t]{0.48\linewidth}
    \centering
    \includegraphics[width=1.1\linewidth]{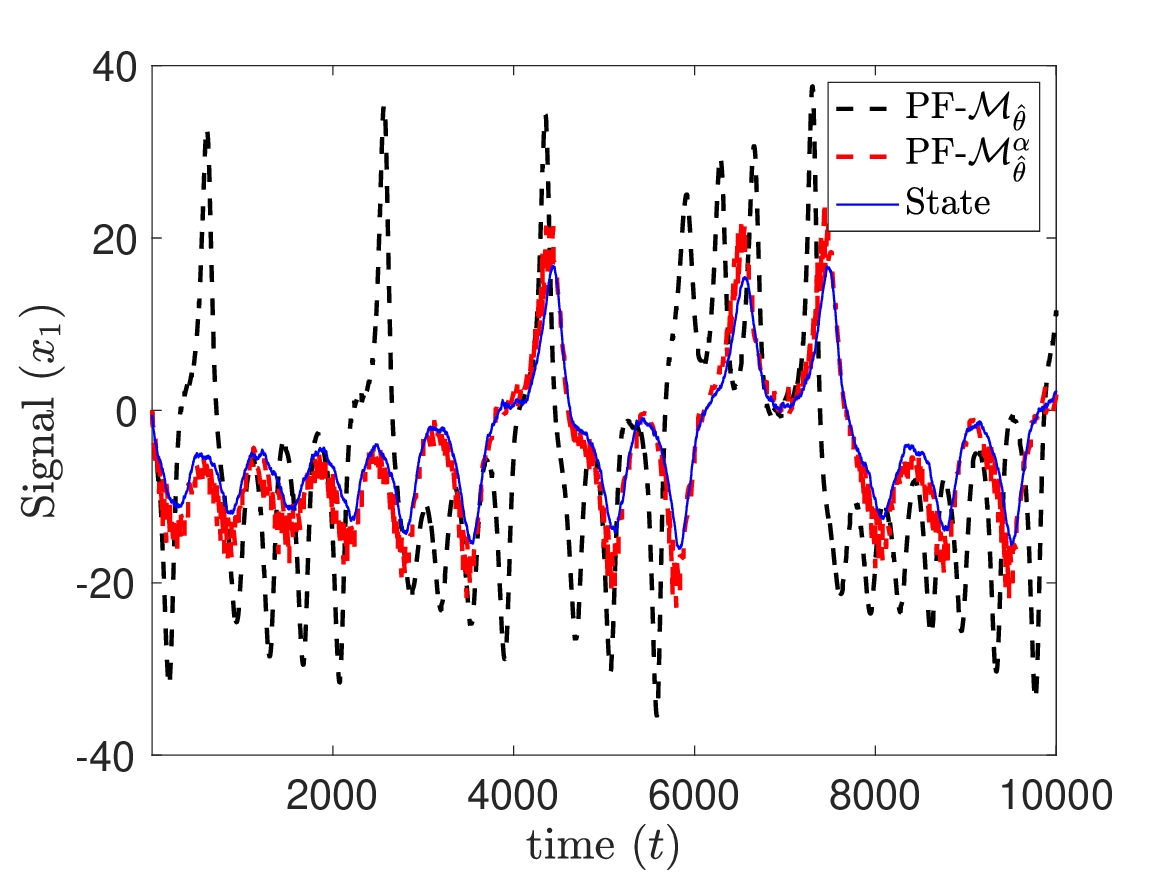} 
    \caption{Coordinate $x_1$ of the state and its PF estimates for models $\mM_{\hat\theta}$ and $\mM_{\hat \theta}^\alpha$, with $\hat \theta = (2S, 2R, 2B)^\mathsf{T}$.} \label{fig:LorenzX1MM100}
    \vspace{4ex}
  \end{minipage}\hfill
  \begin{minipage}[t]{0.48\linewidth}
    \centering
    \includegraphics[width=1.1\linewidth]{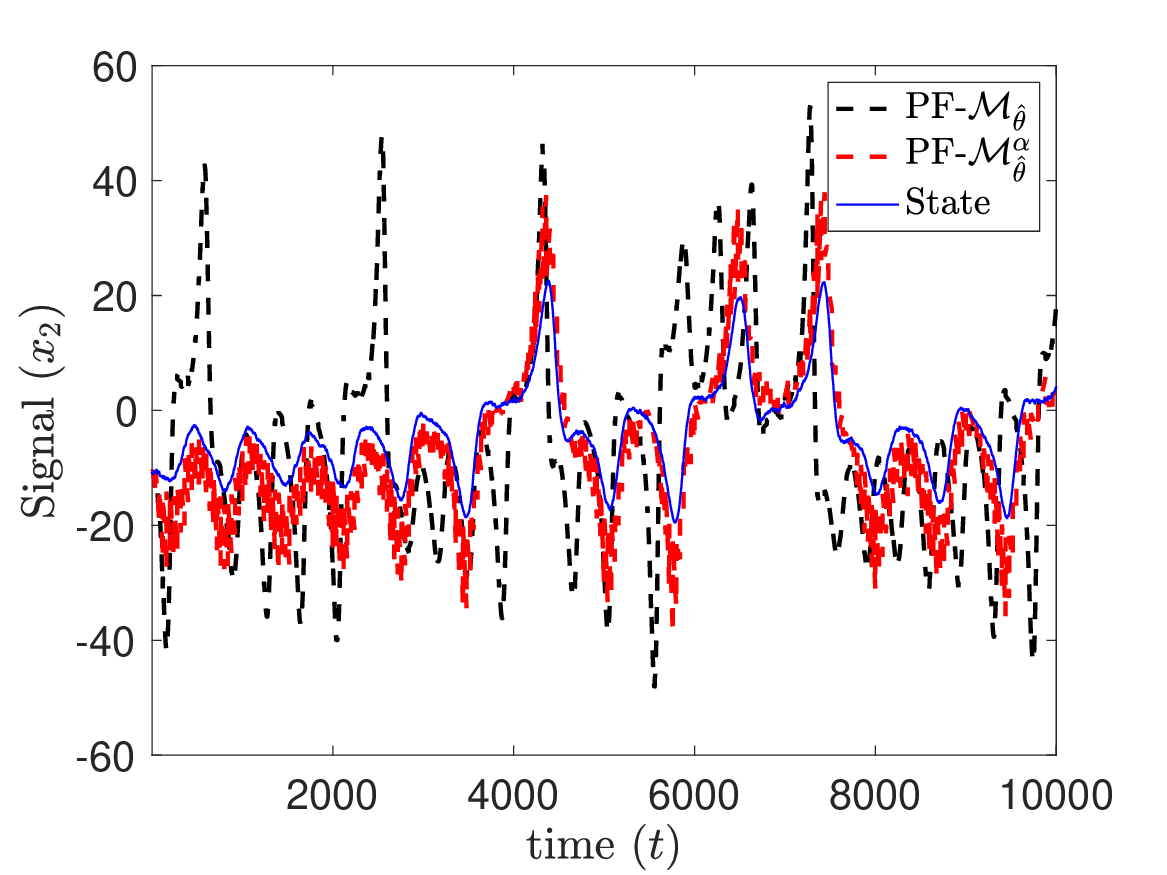} 
    \caption{Coordinate $x_2$ of the state and its PF estimates for models $\mM_{\hat\theta}$ and $\mM_{\hat \theta}^\alpha$, with $\hat \theta = (2S, 2R, 2B)^\mathsf{T}$.} \label{fig:LorenzX2MM100}
    \vspace{4ex}
  \end{minipage} 
  \begin{minipage}[t]{0.48\linewidth}
    \centering
    \includegraphics[width=1.1\linewidth]{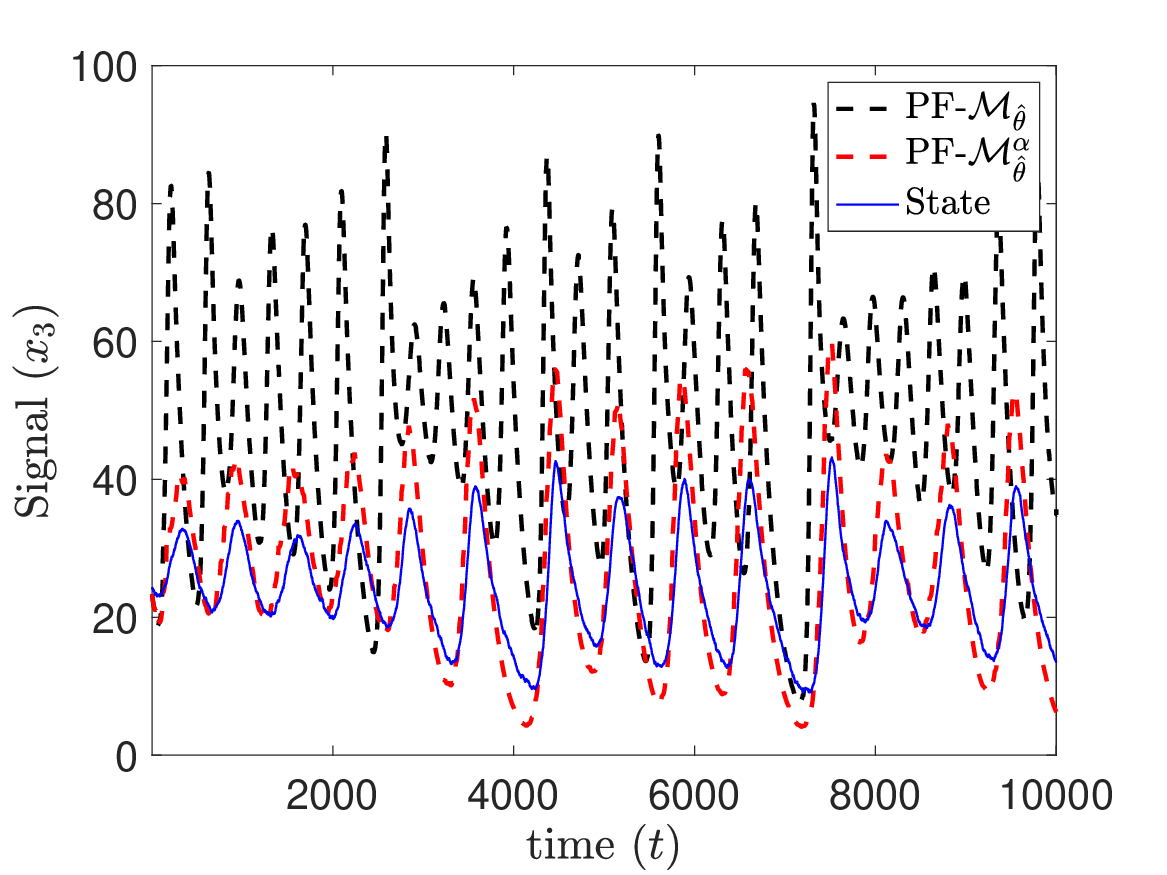} 
    \caption{Coordinate $x_3$ of the state and its PF estimates for models $\mM_{\hat\theta}$ and $\mM_{\hat \theta}^\alpha$, with $\hat \theta = (2S, 2R, 2B)^\mathsf{T}$.} \label{fig:LorenzX3MM100}
    \vspace{4ex}
  \end{minipage} \hfill
  \begin{minipage}[t]{0.48\linewidth}
    \centering
    \includegraphics[width=1.1\linewidth]{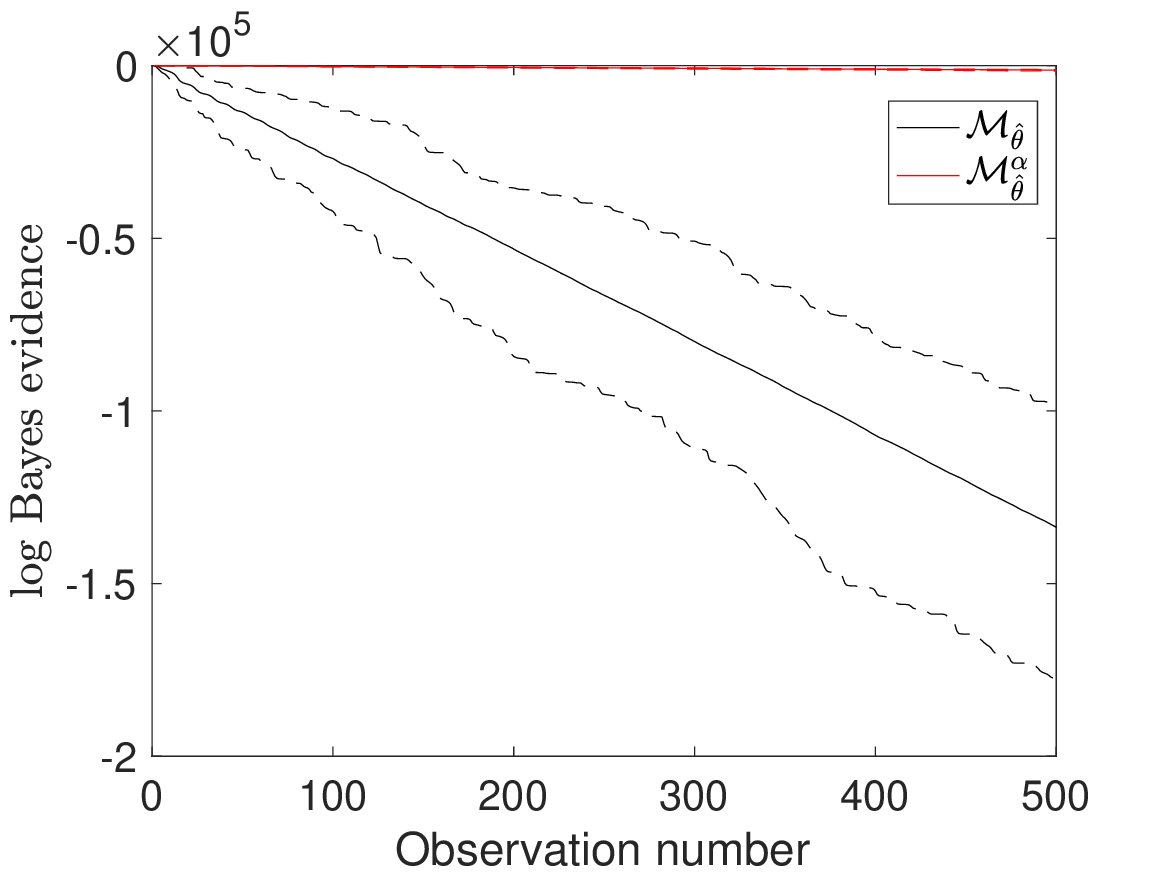} 
    \caption{Average over 200 independent simulations of the $\log$ Bayesian evidence vs. observation number with maximum and minimum values (dashed lines) for $\mM_{\hat{\theta}}$ and $\mM^\alpha_{\hat{\theta}}$.} \label{fig:AverageBayesMM100}
    \vspace{4ex}
  \end{minipage}

\end{figure}

 Figures \ref{fig:LorenzX1MM100} to \ref{fig:LorenzX3MM100} illustrate how the PF for the model $\mM_{\hat{\theta}}$ fails to track the state under extreme parameter mismatches. In contrast, the PF for the nudged model $\mM^\alpha_{\hat{\theta}}$ continues to track the state reliably. 
Figure \ref{fig:AverageBayesMM100} presents the average $\log$ Bayesian evidence as a function of $t$. Note that the Bayesian evidence for $\mM^\alpha_{\hat{\theta}}$ remains consistently higher and more stable over time compared to the model without nudging, indicating a stronger alignment with the observed data in this extreme setup.

Finally, Table \ref{Table} summarises the NMSE and the Bayesian model evidence (at the final time step $T$) obtained from our three previous experiments. Specifically, the table displays, for each model, the average NMSE and the average $\log$ Bayesian evidence (or $\log$ marginal likelihood) computed over $200$ independent simulations. The sample standard deviation is shown between brackets. We observe that nudging always increases the Bayesian model evidence and, in the case of parameters mismatches, it also reduces the NMSE very significantly. 

%\vspace{.5cm}
%% Table 
\begin{table}[htb]
\begin{tabularx}{\textwidth} { 
  | >{\raggedright\arraybackslash}X 
  | >{\centering\arraybackslash}X 
  | >{\raggedleft\arraybackslash}X | }
 \hline \centering
\textbf{Model}& \centering\textbf{ NMSE} & $\log$\textbf{ - Bayesian evidence} 

\\
 \hline
 $\mM_{\theta} \quad \theta=(S,R,B)^\mathsf{T}$  & $ 0.0040 \; (0.00073)$  & $
 -370.4164 \; (  19.1346)$ \\
\hline
 $\mM_{\theta}^\alpha$  & $  0.0078 \; ( 0.00190)$  & $-23.1279 \; (  1.7278)$  \\
\hline
 $\mM_{\tilde\theta} \quad \tilde \theta=(S,R,B_\epsilon)^\mathsf{T}$ & $ 0.4314 \; (0.1144)$  & $ -2.5016\times 10^{4} \; ( 8.1299 \times 10^{3})$  \\
\hline
$\mM_{\tilde\theta}^\alpha$  & $ 0.1487 \;  ( 0.0471)$   & $-114.7217 \;  ( 34.1360)$  \\
\hline
$\mM_{\hat \theta} \quad \hat \theta=(2S,2R,2B)^\mathsf{T}$  & $1.7484 \;  (  0.1226)$   & $-1.3366\times 10^{5}\; (   1.4343\times 10^{4})$  \\
\hline
$\mM_{\hat \theta}^\alpha$  & $ 0.1190 \; ( 0.0043)$   & $ -1.2961\times 10^{3}\; (   77.6686)$  \\
\hline
\end{tabularx}
\caption{ NMSE and the $\log$ Bayesian evidence at the final time step for both the true parameter and mismatched cases. We display the sample mean over $200$ simulations, with the standard deviation between brackets. Note that for model $\mM_{\hat \theta}^\alpha$ the observations are 2-dimensional, versus 1-dimensional for $\mM_{\tilde \theta}^\alpha$, which explains the reduction in NMSE despite the larger error in the parameters.}
\label{Table}
\end{table}

%%%%%%%%%
% Conclusions
%%%%%%%%%
\section{Conclusions} \label{sConclusions}

We have introduced a general methodology for nudging in SSMs that consists in a data-driven modification of the Markov kernels in the model. We have proved that the resulting nudged models can attain (when adequately implemented) a better agreement with the available data --as quantified by the marginal likelihood or Bayesian model evidence. Although other possibilities exist, we have paid especial attention to an implementation of the methodology using the gradient of the log-likelihood of the state of the SSM, since this quantity is often available and, when analytically intractable, it can be approximated numerically. 

The application of the proposed methodology has been illustrated both analytically and numerically. In particular, we have looked into the specific cases of linear-Gaussian SSMs and (possibly nonlinear) SSMs indexed by a parameter vector. We have particularised the theoretical guarantees of the methodology to these two scenarios and we have presented numerical results obtained through computer simulations of a 4-dimensional linear-Gaussian model and a stochastic Lorenz 63 model with partial observations. Both sets of computer simulations show that the proposed nudging schemes are easy to implement. Also, they appear particularly effective in compensating for erroneous dynamical drifts due to mismatches in the SSM parameters.

A potential pitfall of the methodology is the degeneracy of the nudged Markov kernels that occurs when the nudging transformation maximises the likelihood of the state. This issue has been identified and it is straightforward to avoid in practice (e.g., by choosing smaller nudging steps). Further research is needed in order to quantify the gain in the Bayesian evidence obtained by specific nudging schemes, to analyse alternative (non-gradient-based) implementations and to assess the efficiency of the methodology in relevant real-world problems.

%%%%%%%%%%%%%%%%%%%%%%%%%%%%%%%%%%%%%
%
%%%%%%%%%%%%%%%%%%%%%%%%%%%%%%%%%%%%%
\backmatter

\section*{Declarations}

\begin{itemize}
\item \textbf{Funding} JM and FG acknowledge the support of the Office of Naval Research (award N00014-22-1-2647) and Spain's \textit{Agencia Estatal de Investigaci\'on} (ref. PID2021-125159NB-I00 TYCHE) funded by MCIN/AEI/10.13039/501100011033 and by ``ERDF A way of making Europe".

The work of DC has been partially supported by European Research Council (ERC) Synergy grant STUODDLV-856408.

\item \textbf{Conflict of interest} The authors have no conflict of interest to declare.
\item \textbf{Ethics approval}  Not applicable.
\item \textbf{Consent for publication} Not applicable.
\item \textbf{Data availability} All codes used for the numerical results are available upon request from authors.  
\end{itemize}

\noindent

\bmhead{Acknowledgements}

This work was partially carried out during a visit of FG to the Department of Mathematics at Imperial College London, from October 2023 to March 2024.

%\section*{Declarations}
%
%Some journals require declarations to be submitted in a standardized format. Please check the Instructions for Authors of the journal to which you are submitting to see if you need to complete this section. If yes, your manuscript must contain the following sections under the heading `Declarations':
%
%\begin{itemize}
%\item Funding:
%JM and FG acknowledge the support of the Office of Naval Research (award N00014-22-1-2647) and Spain's \textit{Agencia Estatal de Investigaci\'on} (ref. PID2021-125159NB-I00 TYCHE) funded by MCIN/AEI/10.13039/501100011033 and by ``ERDF A way of making Europe".
%\item Conflict of interest/Competing interests (check journal-specific guidelines for which heading to use)
%\item Ethics approval and consent to participate
%\item Consent for publication
%\item Data availability 
%\item Materials availability
%\item Code availability 
%\item Author contribution
%
%\end{itemize}
%
%\noindent
%If any of the sections are not relevant to your manuscript, please include the heading and write `Not applicable' for that section. 

%%%%%%%%%
% Appendices
%%%%%%%%%

\begin{appendices}

%%%%%%%%%%%%%%%%%%%%%%%%%%%%
%
%%%%%%%%%%%%%%%%%%%%%%%%%%%%
\section{An alternative nudging model}\label{App1}

We define a different way to perform the nudging that is easier to analyze
and preserves the same Bayesian evidence. The original model is $\mM=(\pi_0,K,g )$, where $\pi_0(\sd x_0)$ is the initial probability distribution, $K  = \{ K_t\}_{ t\geq 1}$ is the family of
Markov kernels for the process $X_t$ and $g = \{ g_t\}_{ t\geq 1}$ is the family of likelihoods generated
by the observations $\{ Y_t = y_t\}_{t\geq 1}$. Let $\mX$ be the state space and let $\alpha_t:\mX\mapsto\mX$ be the nudging function. We have adopted the nudged model $\mM^\alpha=(\pi_0,K^\alpha,g)$, where the nudged kernel is defined in \eqref{eq:NudgingKernel} as
$$
K_t^\alpha(x_{t-1},\sd x_t) := \int \delta_{\alpha_t(x_t')}(\sd x_t) K_t(x_{t-1},\sd x_t'), \quad \xi_t^\alpha = K_t^\alpha \pi_{t-1}^\alpha,
$$
and for an integrable test function $f:\mX\mapsto\mbR$, $\pi_t^\alpha(f)$ is defined in \eqref{eq:Nudgingpostpre}.

We introduce the alternative model $\bar \mM^\alpha=(\pi_0,\bar K^\alpha,g^\alpha)$, where 
\beq
\bar K_t^\alpha(x_{t-1},\sd x_t) := K_t( \alpha_{t-1}(x_{t-1}), \sd x_t), \quad t=1,2,...,
\label{eq:DefKalpha2}
\eeq
$\alpha_0(x):=x$ is the identity function,
and 
$g_t^\alpha := g_t \circ \alpha_t$ ($\circ$ denotes composition of functions). Then $\bar \xi_t^\alpha = \bar K_t^\alpha \bar \pi_{t-1}^\alpha$, where for any test function $f:\mX\mapsto\mbR$
\beq
\bar\pi_t^\alpha(f) = \frac{\bar\xi_t^\alpha(fg_t^\alpha)}{\bar\xi_t^\alpha(g_t^\alpha)}.
\label{eq:BarPif}
\eeq

\begin{lemma} \label{lm2models}
%Assume that $\pi_0^\alpha = \bar \pi_0^\alpha = \pi_0$ and $\alpha_0(x)=x$; then, f
For any $t\geq 1$, and  any integrable test function $f:\mX\mapsto \mbR$, 
\beqa
\xi_t^\alpha(f) &=& \bar \xi_t^\alpha(f \circ \alpha_t), \quad \text{and} \label{eqLem1}\\
\pi_t^\alpha(f) &=& \bar \pi_t^\alpha(f \circ \alpha_t). \label{eq:Lem2}
\eeqa
\end{lemma}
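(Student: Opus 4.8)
The plan is to prove both identities simultaneously by induction on $t$, since the recursions for $(\xi_t^\alpha,\pi_t^\alpha)$ and $(\bar\xi_t^\alpha,\bar\pi_t^\alpha)$ are intertwined. The key observation is that the nudged kernel $K_t^\alpha$ in \eqref{eq:NudgingKernel} is exactly the push-forward of $K_t$ under $\alpha_t$, so integrating a test function $f$ against $K_t^\alpha(x_{t-1},\cdot)$ is the same as integrating $f\circ\alpha_t$ against $K_t(x_{t-1},\cdot)$; meanwhile $\bar K_t^\alpha$ composes $\alpha_{t-1}$ on the \emph{incoming} state. The change of variables $x_t = \alpha_t(x_t')$ is what converts one into the other, and the role of $g_t^\alpha = g_t\circ\alpha_t$ in the alternative model is precisely to carry the likelihood through this substitution.

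\textbf{Base case.} For $t=1$ we have $\pi_0^\alpha = \bar\pi_0^\alpha = \pi_0$ and $\alpha_0 = \mathrm{id}$. Then, for any integrable $f$,
\begin{align*}
\xi_1^\alpha(f) &= \int f(x_1)\, K_1^\alpha(x_0,\sd x_1)\,\pi_0(\sd x_0) = \int f(\alpha_1(x_1'))\, K_1(x_0,\sd x_1')\,\pi_0(\sd x_0) \\
&= \int (f\circ\alpha_1)(x_1)\, \bar K_1^\alpha(x_0,\sd x_1)\,\pi_0(\sd x_0) = \bar\xi_1^\alpha(f\circ\alpha_1),
\end{align*}
using $\bar K_1^\alpha(x_0,\cdot) = K_1(\alpha_0(x_0),\cdot) = K_1(x_0,\cdot)$. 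Applying this with $f$ replaced by $f g_1$ and noting $(f g_1)\circ\alpha_1 = (f\circ\alpha_1)(g_1\circ\alpha_1) = (f\circ\alpha_1)g_1^\alpha$, we get $\xi_1^\alpha(f g_1) = \bar\xi_1^\alpha((f\circ\alpha_1)g_1^\alpha)$; dividing by the same identity with $f\equiv 1$ gives $\pi_1^\alpha(f) = \bar\pi_1^\alpha(f\circ\alpha_1)$ via \eqref{eq:Nudgingpostpre} and \eqref{eq:BarPif}.

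\textbf{Inductive step.} Assume \eqref{eqLem1} and \eqref{eq:Lem2} hold at time $t-1$. For \eqref{eqLem1} at time $t$, write $\xi_t^\alpha(f) = \int f(x_t)\, K_t^\alpha(x_{t-1},\sd x_t)\,\pi_{t-1}^\alpha(\sd x_{t-1})$, apply the push-forward identity to the inner integral to obtain $\int (f\circ\alpha_t)(x_t')\, K_t(x_{t-1},\sd x_t')\,\pi_{t-1}^\alpha(\sd x_{t-1})$, i.e.\ $\pi_{t-1}^\alpha\big(h\big)$ where $h(x_{t-1}) := \int (f\circ\alpha_t)(x_t')\,K_t(x_{t-1},\sd x_t')$. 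Now invoke the induction hypothesis \eqref{eq:Lem2} at $t-1$ to replace $\pi_{t-1}^\alpha(h)$ by $\bar\pi_{t-1}^\alpha(h\circ\alpha_{t-1})$, and observe that $(h\circ\alpha_{t-1})(x_{t-1}) = \int (f\circ\alpha_t)(x_t')\, K_t(\alpha_{t-1}(x_{t-1}),\sd x_t') = \int (f\circ\alpha_t)(x_t')\,\bar K_t^\alpha(x_{t-1},\sd x_t')$ by the definition \eqref{eq:DefKalpha2}. This last expression is exactly $\bar\xi_t^\alpha(f\circ\alpha_t)$, completing \eqref{eqLem1}. For \eqref{eq:Lem2}, apply the just-proved \eqref{eqLem1} with $f$ replaced by $f g_t$, use $(f g_t)\circ\alpha_t = (f\circ\alpha_t)\, g_t^\alpha$, and take the ratio with the $f\equiv 1$ case; by \eqref{eq:Nudgingpostpre} the left side is $\pi_t^\alpha(f)$ and by \eqref{eq:BarPif} the right side is $\bar\pi_t^\alpha(f\circ\alpha_t)$.

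\textbf{Main obstacle.} The calculations are essentially bookkeeping, so the only real subtlety is ensuring the change of variables $x_t = \alpha_t(x_t')$ is applied at the correct ``slot'': in $\xi_t^\alpha$ the map $\alpha_t$ acts on the \emph{output} of $K_t$, whereas in the alternative model $\bar K_t^\alpha$ the map $\alpha_{t-1}$ is absorbed into the \emph{input}, and it is the induction hypothesis at $t-1$ that shifts the index by one and makes these match up. One should also be mildly careful that all integrals are well-defined (finiteness of $\xi_t^\alpha(g_t)$ and $\bar\xi_t^\alpha(g_t^\alpha)$ so the normalising ratios make sense), which follows from boundedness of $g_t$ in Assumption~\ref{A2} together with $g_t^\alpha = g_t\circ\alpha_t$ being bounded by the same constant; no further regularity is needed for this purely algebraic lemma.
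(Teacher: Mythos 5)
Your proof is correct and follows essentially the same route as the paper: an induction on $t$ in which the delta-measure definition of $K_t^\alpha$ is read as a push-forward (so $\xi_t^\alpha(f)=K_t\pi_{t-1}^\alpha(f\circ\alpha_t)$), the induction hypothesis on $\pi_{t-1}^\alpha$ versus $\bar\pi_{t-1}^\alpha$ shifts the map $\alpha_{t-1}$ into the input slot of $\bar K_t^\alpha$, and the posterior identity follows by taking the ratio with $f$ replaced by $fg_t$ and using $(fg_t)\circ\alpha_t=(f\circ\alpha_t)g_t^\alpha$. The only cosmetic difference is that the paper computes $\bar\xi_t^\alpha(f)$ and works back to $K_t\pi_{t-1}^\alpha(f)$, whereas you push $\xi_t^\alpha(f)$ forward; these are the same calculation read in opposite directions.
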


\begin{proof}
We proceed by induction. At time $t=1$ we have $\xi_1^\alpha(f) = K_1^\alpha\pi_0$ and using the definition of $K_t^\alpha$ in \eqref{eq:NudgingKernel}, we obtain
\beq
\xi_1^\alpha(f) = K_1^\alpha\pi_0(f) = \int\int f(x_1) \int \delta_{\alpha_1(x_1')}(\sd x_1) K_1(x_0,\sd x_1')\pi_0(\sd x_0)
\nn
\eeq
which, integrating w.r.t. the delta measure, yields
\beqa
\xi_1^\alpha(f) &=& \int \int (f \circ \alpha_1)(x_1') K_1(x_0,\sd x_1') \pi_0(\sd x_0) \nn\\
&=& K_1\pi_0(f \circ \alpha_1) = \xi_1(f \circ \alpha_1)
\label{eq:Xi1}
\eeqa
Moreover, since by definition $\alpha_0(x)=x$ (the identity function), we readily find that $\bar\xi_1^\alpha=K_1\pi_0=\xi_1$, hence \eqref{eq:Xi1} implies $\xi_1^\alpha(f) = \bar\xi_1^\alpha(f \circ \alpha_1)$ and the identity \eqref{eqLem1} holds at time $t=1$.

Combining \eqref{eq:BarPif} and $\bar\xi_1^\alpha=\xi_1$ we obtain
\beq
\bar\pi_1^\alpha(f \circ \alpha_1) = \frac{ \xi_1\left( (fg_1)\circ \alpha_1 \right) }{ \xi_1(g_1\circ\alpha_1) } 
= \frac{ \xi_1^\alpha\left( f g_1 \right) }{ \xi_1^\alpha(g_1) } 
= \pi_1^\alpha(f),
\nn
\eeq
where the second equality follows from \eqref{eq:Xi1} and the third one follows from \eqref{eq:Nudgingpostpre}. Hence, also equation \eqref{eq:Lem2} holds at time $t=1$.

For the induction step, let us assume that 
\beq
\pi_{t-1}^\alpha(f)=\bar\pi_{t-1}^\alpha(f\circ \alpha_{t-1}).
\label{eq:IndHip}
\eeq
At time $t$, we obtain
\beq
\xi_t^\alpha(f) = K_t^\alpha\pi_{t-1}^\alpha(f) = \int\int f(x_t) \int \delta_{\alpha_t(x_t')}(\sd x_t) K_t(x_{t-1},\sd x_t')\pi_{t-1}^\alpha(\sd x_{t-1})
\nn
\eeq
and, integrating w.r.t. the delta measure, we have
\beqa
\xi_t^\alpha(f) &=& \int \int (f \circ \alpha_t)(x_t') K_t(x_{t-1},\sd x_t') \pi_{t-1}^\alpha(\sd x_{t-1}) \nn\\
&=& K_t\pi_{t-1}^\alpha(f \circ \alpha_t).
\label{eq:Xit}
\eeqa
For the alternative model, on the other hand, we arrive at
\beqa
\bar\xi_t^\alpha(f) &=& \bar K_t^\alpha \bar\pi_{t-1}^\alpha(f) \nn\\
&=& \int \int f(x_t) K_t\left( \alpha_{t-1}(x_{t-1}), \sd x_t \right) \bar\pi_{t-1}^\alpha(\sd x_{t-1})\nn\\
&=& \int \left( \bar f_t \circ \alpha_{t-1} \right)(x_{t-1}) \bar\pi_{t-1}^\alpha(\sd x_{t-1})\nn\\
&=& \bar\pi_{t-1}^\alpha\left( \bar f_t \circ \alpha_{t-1} \right),\label{eq:BarPift}
\eeqa
where the second equality follows from the definition of $\bar K_t^\alpha$ in \eqref{eq:DefKalpha2} and we have introduced the notation $\bar f_t(x) := \int f(x_t) K_t(x,\sd x_t)$ in the third equality. The induction hypothesis \eqref{eq:IndHip} together with \eqref{eq:BarPift} yields
\beq
\bar\xi_t^\alpha(f) = \pi_{t-1}^\alpha(\bar f_t) = K_t\pi_{t-1}^\alpha(f)
\label{eq:BarXit}
\eeq
and, comparing \eqref{eq:BarXit} above and \eqref{eq:Xit}, we readily find that $\xi_t^\alpha(f) = \bar\xi_t^\alpha(f \circ \alpha_t)$ and, hence, equation \eqref{eqLem1} in the statement of Lemma \ref{lm2models} holds for arbitrary time $t$.

As for the laws $\pi_t^\alpha$ and $\bar\pi_t^\alpha$, equations \eqref{eq:Nudgingpostpre} and \eqref{eq:Xit} taken together yield
\beq
\pi_t^\alpha(f) = \frac{
    K_t\pi_{t-1}^\alpha\left( (fg_t) \circ \alpha_t \right)
}{
    K_t\pi_{t-1}^\alpha\left( g_t \circ \alpha_t \right)
},
\label{eq:Pitf}
\eeq
while combining \eqref{eq:BarPif} and \eqref{eq:BarXit} we arrive at
\beq
\bar\pi_t^\alpha(f) = \frac{
    K_t\pi_{t-1}^\alpha\left( f(g_t \circ \alpha_t) \right)
}{
    K_t\pi_{t-1}^\alpha\left( g_t \circ \alpha_t \right)
}.
\label{eq:BarPitf}
\eeq
Comparing \eqref{eq:Pitf} and \eqref{eq:BarPitf} we readily see that $\bar\pi_t^\alpha(f\circ \alpha_t) = \pi_t^\alpha(f)$. Therefore, equation \eqref{eq:Lem2} in the statement of Lemma \ref{lm2models} holds for all $t$.
\end{proof}

%%%%%%
\begin{remark}\label{re: NudgingInverse}
If the map $\alpha_t$ is invertible, then
\beq
\bar\xi_t^\alpha(f) = \xi_t^\alpha(f \circ \alpha_t^{-1}) \quad \text{and} \quad
\bar\pi_t^\alpha(f) = \pi_t^\alpha(f \circ \alpha_t^{-1}). \nn
\eeq
So, in general, we can recover $\pi_t^\alpha$ and $\xi_t^\alpha$ from $\bar\pi_t^{\alpha}$ and $\bar\xi_t^{\alpha}$, but not necessarily the other way around. 
\end{remark}

%%%%%%
\begin{remark}\label{SBER}
From equations \eqref{eq:Pitf} and \eqref{eq:BarPitf} we observe that the normalisation constants for $\pi_t^\alpha$ and $\bar\pi_t^\alpha$ are the same. As a consequence, both models have the same Bayesian evidence, i.e. 
 \beq \label{eq:Bayeseviden2Nudgingmodels} p_T(y_{1:T}\vert\bar\mM^\alpha)=\prod_{i=1}^T\bar\xi_i^\alpha(g_i^\alpha) =  \prod_{i=1}^T\xi_i^\alpha(g_i)=p_T(y_{1:T}\vert \mM^\alpha).
 \eeq
\end{remark}

%%%%%%%%%%%%%%%%%%%%%%%%%%%%
%
%%%%%%%%%%%%%%%%%%%%%%%%%%%%
\section{Proof of Theorem \ref{ISSL}}\label{App2}

From Remark \ref{SBER} the nudging model $\bar \mM^{\alpha}:=\{\pi_0,\bar K^{\alpha}, g^{\alpha}\}$ defined in \eqref{eq:DefKalpha2}
and the nudging model $\mM^{\alpha}=\{\pi_0, K^{\alpha}, g \}$ given by the change in the transition kernel in (\ref{eq:NudgingKernel}) have the same Bayesian evidence (see Eq. \eqref{eq:Bayeseviden2Nudgingmodels} above). Hereafter, we aim at proving that 
    $$p_T(y_{1:T}\vert \bar \mM^{\alpha})\geq p_T(y_{1:T}\vert  \mM).$$ 
We proceed with a series of preliminary results in Section \ref{B1}, while the key induction argument of the proof is presented in Section \ref{B2}

\subsection{Preliminary results}\label{B1}
Let $\{\alpha_t\}_{t\in \mbN}$ be a family of parametric nudging transformations as defined in Definition \ref{def:NudParT}.
 We adopt the simplified notation 
  \beq 
  \bar \xi_t^{\alpha} \equiv \bar \xi_t^{\alpha(\gamma_{1:t-1})}, \quad \bar \pi_t^{\alpha}\equiv \bar \pi_t^{\alpha(\gamma_{1:t})},
  \eeq
  where $\alpha(\gamma_{1:t})$ represents the composition of the transformations with the likelihood functions $g_i$ and the kernels $K_i$, as defined in  \eqref{eq:DefKalpha2} from $i=1,...,t$.
  This simplification is intended to make the analysis easier to read. However, it is important to keep in mind that the predictive measure $\bar \xi_t^{\alpha}$ depends on the sequence $\gamma_{1:t-1}$, while filter $\bar \pi_t^{\alpha}$ depends on the sequence $\gamma_{1:t}$.

At each time step $t$, we can quantify the differences  between the normalisation constants of the models $\bar \mM^\alpha$ and $\mM$ in terms of the predictive measures $\xi_t$ and $\bar \xi_t^{\alpha}$ as well as the total increment of the function $g_t$, given by $\Delta_{g_t}(\gamma)=\xi_t(g_t^{\alpha}-g_t)$ as follows.

\begin{proposition}\label{prop: B1}
    For $t\in \mbN$ and $\gamma\in [0,\Gamma_t]$, we have $$\xi_t(g_t)+\Delta_{g_t}(\gamma)\leq \norm{g_t}_{\infty}\norm{\xi_t-\bar \xi_t^{\alpha}}_{TV}+\bar \xi_t^{\alpha}(g_t^{\alpha}).$$
\end{proposition}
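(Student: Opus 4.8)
The plan is to reduce the stated inequality to the elementary estimate $|\nu(h)| \le \norm{h}_\infty \norm{\nu}_{TV}$, valid for a finite signed measure $\nu$ and a bounded measurable function $h$. First I would rewrite the left-hand side using the definition of the total increment recalled just before the proposition, namely $\Delta_{g_t}(\gamma) = \xi_t(g_t^\alpha) - \xi_t(g_t)$, where $g_t^\alpha := g_t \circ \alpha_t(\cdot,\gamma):\mX\to\mbR$ (the dependence on $\gamma$ is suppressed in the notation). This gives immediately
$$
\xi_t(g_t) + \Delta_{g_t}(\gamma) = \xi_t(g_t^\alpha) = \bar\xi_t^\alpha(g_t^\alpha) + \big(\xi_t - \bar\xi_t^\alpha\big)(g_t^\alpha),
$$
where in the last step I simply add and subtract the incremental likelihood $\bar\xi_t^\alpha(g_t^\alpha)$ of the alternative model $\bar\mM^\alpha$.

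It then remains to control the cross term $\big(\xi_t - \bar\xi_t^\alpha\big)(g_t^\alpha)$. Since $\xi_t$ and $\bar\xi_t^\alpha$ are both probability measures on $(\mX,\mB(\mX))$, the difference $\nu := \xi_t - \bar\xi_t^\alpha$ is a finite signed measure; writing its Jordan decomposition $\nu = \nu^+ - \nu^-$, one checks that the total variation norm used in the paper satisfies $\norm{\nu}_{TV} = \nu^+(\mX) + \nu^-(\mX)$, because $\sup_{F\in\mB(\mX)}\nu(F) = \nu^+(\mX)$ and $\inf_{F\in\mB(\mX)}\nu(F) = -\nu^-(\mX)$. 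Hence, for any bounded measurable $h$,
$$
|\nu(h)| = |\nu^+(h) - \nu^-(h)| \le \norm{h}_\infty\big(\nu^+(\mX) + \nu^-(\mX)\big) = \norm{h}_\infty \norm{\nu}_{TV}.
$$
Applying this with $h = g_t^\alpha$, and using that $\alpha_t(\cdot,\gamma)$ maps $\mX$ into itself so that $\norm{g_t^\alpha}_\infty = \sup_{x\in\mX}|g_t(\alpha_t(x,\gamma))| \le \norm{g_t}_\infty < \infty$ by part i) of Assumption~\ref{A2}, I obtain
$$
\big(\xi_t - \bar\xi_t^\alpha\big)(g_t^\alpha) \le \big|\big(\xi_t - \bar\xi_t^\alpha\big)(g_t^\alpha)\big| \le \norm{g_t}_\infty \norm{\xi_t - \bar\xi_t^\alpha}_{TV}.
$$
Substituting this back into the identity of the first paragraph yields $\xi_t(g_t) + \Delta_{g_t}(\gamma) \le \norm{g_t}_\infty \norm{\xi_t - \bar\xi_t^\alpha}_{TV} + \bar\xi_t^\alpha(g_t^\alpha)$, which is the claim.

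There is essentially no genuine obstacle here; the only points requiring a little care are to match the (slightly non-standard) definition of $\norm{\cdot}_{TV}$ in the paper with the usual Jordan-decomposition form so that $|\nu(h)| \le \norm{h}_\infty\norm{\nu}_{TV}$ is rigorously justified, and to note that $\norm{g_t^\alpha}_\infty \le \norm{g_t}_\infty$ precisely because $\alpha_t(\cdot,\gamma):\mX\to\mX$. Note that the hypothesis $\gamma\in[0,\Gamma_t]$ is not actually used in this inequality — it only guarantees $\Delta_{g_t}(\gamma)\ge 0$ through Definition~\ref{def:NudParT} — but it is harmless to carry it along, as in the statement.
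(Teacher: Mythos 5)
Your proof is correct and follows essentially the same route as the paper's: both rewrite $\xi_t(g_t)+\Delta_{g_t}(\gamma)$ as $\xi_t(g_t^\alpha)$, add and subtract $\bar\xi_t^\alpha(g_t^\alpha)$, and bound the cross term by $\norm{g_t}_\infty\norm{\xi_t-\bar\xi_t^\alpha}_{TV}$. The only difference is that you spell out the Jordan-decomposition justification of $|\nu(h)|\le\norm{h}_\infty\norm{\nu}_{TV}$ under the paper's definition of the total variation norm, which the paper leaves implicit.
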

\begin{proof}
    Note that we can write
    \beq\label{eq:DiffUpdates}
 \xi_t(g_t)+\Delta_{g_t}(\gamma)=\int_{\mX} g_t(\alpha_t(x,\gamma))\xi_t(\sd x).
 \eeq
 Now, for any  given sequence $\gamma_{0:t-1} $, adding and subtracting $ \int_{\mX} g_t(\alpha_t(x,\gamma))\bar \xi_t^{\alpha}(\sd x)$, on the right hand side of \eqref{eq:DiffUpdates}
 we obtain
    
\begin{align*}
\xi_t(g_t)+\Delta_{g_t}(\gamma) &=  \int_{\mX} g_t(\alpha_t(x,\gamma))(\xi_t-\bar \xi_t^{\alpha})(\sd x)+\int_{\mX} g_t(\alpha_t(x,\gamma))\bar \xi_t^{\alpha}(\sd x),\\
%&\leq \norm{g_t}_\infty\int_{\mX} \abs{\xi_t-\bar \xi_t^{\alpha}}(\sd x)+\int_{\mX} g_t(\alpha_t(x,\gamma))\bar \xi_t^{\alpha}(\sd x) \\
& \leq \norm{g_t}_{\infty}\norm{\xi_t-\bar \xi_t^{\alpha}}_{TV} +\bar \xi_t^{\alpha}(g_t^{\alpha}).
    \end{align*}
\end{proof}

The previous proposition implies immediately the following corollary.
\begin{corollary}\label{CNI}
   If the parameter $\gamma$ is selected in such a way that $$\Delta_{g_t}(\gamma)\geq \norm{g_t}_{\infty}\norm{\xi_t-\bar \xi_t^{\alpha}}_{TV}, \text{ then } \; \xi_t(g_t)\leq \bar \xi_t^{\alpha}(g_t^{\alpha}).$$
\end{corollary}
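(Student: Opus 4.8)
The plan is to obtain Corollary \ref{CNI} as an immediate consequence of Proposition \ref{prop: B1}: the entire content is a one-line rearrangement of the inequality proved there, so the corollary requires no new ingredients.

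Concretely, Proposition \ref{prop: B1} states that for every $t \in \mbN$ and every $\gamma \in [0,\Gamma_t]$,
\[
\xi_t(g_t) + \Delta_{g_t}(\gamma) \le \norm{g_t}_{\infty}\,\norm{\xi_t - \bar\xi_t^{\alpha}}_{TV} + \bar\xi_t^{\alpha}(g_t^{\alpha}).
\]
First I would move $\Delta_{g_t}(\gamma)$ to the right-hand side to isolate $\xi_t(g_t)$, obtaining
\[
\xi_t(g_t) \le \bar\xi_t^{\alpha}(g_t^{\alpha}) + \Bigl(\norm{g_t}_{\infty}\,\norm{\xi_t - \bar\xi_t^{\alpha}}_{TV} - \Delta_{g_t}(\gamma)\Bigr).
\]
Then I would invoke the standing hypothesis $\Delta_{g_t}(\gamma) \ge \norm{g_t}_{\infty}\,\norm{\xi_t - \bar\xi_t^{\alpha}}_{TV}$, which makes the parenthesised quantity nonpositive; dropping it from the upper bound yields $\xi_t(g_t) \le \bar\xi_t^{\alpha}(g_t^{\alpha})$, which is the claim.

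There is no real obstacle in the corollary itself --- all the work sits in Proposition \ref{prop: B1}, whose proof rewrites $\xi_t(g_t)+\Delta_{g_t}(\gamma)$ as $\int g_t(\alpha_t(x,\gamma))\,\xi_t(\sd x)$, adds and subtracts the same integrand integrated against $\bar\xi_t^{\alpha}$, and bounds the resulting difference of measures by $\norm{g_t}_\infty\,\norm{\xi_t-\bar\xi_t^{\alpha}}_{TV}$. The one point to keep in mind for later use is that the hypothesis must be simultaneously satisfiable with the admissible range $\gamma\in(0,\Gamma_t]$ from Definition \ref{def:NudParT}; establishing that such a $\gamma$ exists is precisely the role of the induction in Section \ref{B2}, where the strict positivity $\Delta_{g_t}(\gamma)>0$ from \eqref{A1} and continuity in $\gamma$ are used. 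For the corollary as stated, however, the choice of $\gamma$ is given, so nothing further is needed.
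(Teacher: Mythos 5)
Your proof is correct and matches the paper's intent exactly: the paper states that Proposition \ref{prop: B1} "implies immediately" the corollary, and your rearrangement of that inequality together with the hypothesis on $\Delta_{g_t}(\gamma)$ is precisely that immediate implication. Your closing remark about the existence of an admissible $\gamma$ being deferred to the induction in Section \ref{B2} is also consistent with how the paper uses the corollary.
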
 
Therefore, choosing the sequence of parameters $\gamma_t$ to ensure that $\Delta_{g_t}(\gamma_t)\geq \norm{g_t}_{\infty}\norm{\xi_t-\bar \xi_t^{\alpha}}_{TV},\text{ for } t=1,...,T$, is sufficient
to ensure that
 $p_T(y_{1:T}\vert \bar \mM^{\alpha})\geq p_T(y_{1:T}\vert \mM).$
Hence, it is natural to seek a method to guaranteed that control the error introduced in the predictive measures by the nudging transformation. This can be achieved in several steps. First, we consider the error introduced by the  modified likelihood functions $g_t^\alpha$.

\begin{definition}\label{UNM}
   Let $\mu_t$ and $\mu_t^\alpha$ be the non-normalised finite measures constructed as
    $$\mu_t(F):=\int_{F} g_t(x)\xi_t(\sd x), \quad \mu_t^{\alpha}(F):=\int_{F} g_t^{\alpha}(x)\bar \xi_t^{\alpha}(\sd x), \quad \forall F \in \mF.$$ 
\end{definition}
\begin{proposition}\label{rec}
 Let Assumption \ref{A2}. i) hold. Then, the non-normalised measures $\mu_t \text{ and }\mu_t^{\alpha}$ satisfy the inequality
  $$\norm{\mu_t-\mu_t^{\alpha}}_{TV}\leq \Delta_{g_t}(\gamma_t)+\norm{g_t}_{\infty}\norm{\xi_t-\bar \xi_t^{\alpha}}_{TV}.$$
\end{proposition}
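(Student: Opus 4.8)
The plan is to isolate the two sources of discrepancy between $\mu_t$ and $\mu_t^{\alpha}$: the replacement of the likelihood $g_t$ by $g_t^{\alpha}=g_t\circ\alpha_t$, and the replacement of the predictive law $\xi_t$ by $\bar\xi_t^{\alpha}$. To this end I would introduce the auxiliary non-normalised measure $\nu_t(F):=\int_F g_t^{\alpha}(x)\,\xi_t(\sd x)$, which uses the nudged likelihood but the original predictive law, and apply the triangle inequality
\[
\norm{\mu_t-\mu_t^{\alpha}}_{TV}\le\norm{\mu_t-\nu_t}_{TV}+\norm{\nu_t-\mu_t^{\alpha}}_{TV}.
\]
All three measures are finite because $g_t$ is bounded by Assumption \ref{A2}.i), so each term is well defined.

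For the first term, $\mu_t$ and $\nu_t$ are both absolutely continuous with respect to $\xi_t$, with densities $g_t$ and $g_t^{\alpha}$ respectively; hence $\norm{\mu_t-\nu_t}_{TV}=\int_{\mX}\abs{g_t^{\alpha}(x)-g_t(x)}\,\xi_t(\sd x)$. Since the step $\gamma_t$ lies in $[0,\Gamma_t]$, condition ii) of Definition \ref{def:NudParT} guarantees $g_t^{\alpha}(x)-g_t(x)\ge 0$ pointwise, so the integrand equals $g_t^{\alpha}(x)-g_t(x)$ and the integral is precisely $\Delta_{g_t}(\gamma_t)$ as defined in \eqref{A1}.

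For the second term, $\nu_t$ and $\mu_t^{\alpha}$ share the density $g_t^{\alpha}$ but are built from the two different predictive laws, so $\nu_t-\mu_t^{\alpha}=g_t^{\alpha}\cdot(\xi_t-\bar\xi_t^{\alpha})$ as a signed measure, whence $\norm{\nu_t-\mu_t^{\alpha}}_{TV}\le\norm{g_t^{\alpha}}_{\infty}\,\norm{\xi_t-\bar\xi_t^{\alpha}}_{TV}$. Finally, because $\alpha_t$ maps $\mX$ into $\mX$, one has $\norm{g_t^{\alpha}}_{\infty}=\sup_{x\in\mX}\abs{g_t(\alpha_t(x))}\le\sup_{y\in\mX}\abs{g_t(y)}=\norm{g_t}_{\infty}<\infty$. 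Adding the two bounds gives the claimed inequality.

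I do not expect a genuine obstacle: the result is essentially careful bookkeeping with the total variation norm. The only points that need attention are (i) choosing the intermediate measure $\nu_t$ so that the likelihood discrepancy and the predictive discrepancy decouple additively, (ii) invoking the monotonicity $g_t^{\alpha}\ge g_t$ on $[0,\Gamma_t]$ to drop the absolute value and identify the first term with $\Delta_{g_t}(\gamma_t)$ rather than merely bounding it, and (iii) the elementary but crucial observation that composing with a self-map of $\mX$ cannot increase the supremum norm, so that $\norm{g_t\circ\alpha_t}_{\infty}\le\norm{g_t}_{\infty}$.
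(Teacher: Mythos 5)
Your proof is correct and takes essentially the same route as the paper: your intermediate measure $\nu_t(F)=\int_F g_t^{\alpha}\,\xi_t(\sd x)$ is exactly the add-and-subtract term the paper inserts, and the two resulting bounds (identifying the likelihood discrepancy with $\Delta_{g_t}(\gamma_t)$ via $g_t^{\alpha}\ge g_t$, and bounding the predictive discrepancy by $\norm{g_t}_{\infty}\norm{\xi_t-\bar\xi_t^{\alpha}}_{TV}$) match the paper's. Your explicit remark that $\norm{g_t\circ\alpha_t}_{\infty}\le\norm{g_t}_{\infty}$ because $\alpha_t$ is a self-map of $\mX$ is a small point the paper leaves implicit.
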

\begin{proof}
   For any set $F\in \mF$ %$\mu_t:= \int_{\mX} g_t(x)\xi_t(\sd x)$, then
    $$\mu_t(F)-\mu_t^{\alpha}(F)=\int_F g_t(x)\xi_t(\sd x)-\int_F g_t^{\alpha}(x)\bar \xi_t^{\alpha}(\sd x).$$
   Adding and subtracting $\int_F g_t^{\alpha}(x)\xi_t(\sd x)$ on the right hand side of the equation above yields

    $$\mu_t(F)-\mu_t^{\alpha}(F)=\int_F (g_t(x)-g_t^{\alpha}(x))\xi_t(\sd x)+\int_F g_t^{\alpha}(x)(\xi_t-\bar \xi_t^{\alpha})(\sd x),$$
    hence 
    \beq\label{I}
    \abs{\mu(F)-\mu_t^{\alpha}(F)}\leq \abs{\int_F (g_t(x)-g_t^{\alpha}(x))\xi_t(\sd x)}+\int_F g_t^{\alpha}(x)\abs{\xi_t-\bar \xi_t^{\alpha}}(\sd x).
    \eeq
    Note that, by Eq.  \eqref{eq:IncreasingNudging} we have $g_t(x)\leq g_t^{\alpha}(x),\;\forall x\in \mX$, hence
    $$0\leq \int_F (g_t^{\alpha}(x)-g_t(x))\xi_t(\sd x)\leq \int_\mX (g_t^{\alpha}(x)-g_t(x))\xi_t(\sd x)=\Delta_{g_t}(\gamma), \; \forall F\in \mF,$$
    and, therefore
    \beq\label{I1}
    \abs{\int_F (g_t(x)-g_t^{\alpha}(x))\xi_t(\sd x)}\leq\Delta_{g_t}(\gamma).
    \eeq
  On the other hand, 
   by Assumption \ref{A2}. i) we have 
   $\norm{g_t}_{\infty}<\infty,$ which yields
  \beq\label{I2}
  \int_F g_t^{\alpha}(x)\abs{\xi_t-\bar \xi_t^{\alpha}}(\sd x)\leq\norm{g_t}_{\infty} \norm{\xi_t-\xi_t^\alpha}_{TV}.
   \eeq
   Combining the inequalities \eqref{I1}, \eqref{I2} and \eqref{I} concludes the proof. 
\end{proof}

Next, we need to normalise the measures $\mu_t,$ and $\mu_t^\alpha$ in order to obtain the probability measures $\pi_t:=\mu_t/\xi_t(g_t),$ and $\bar \pi_t^\alpha:=\mu_t^\alpha/\bar \xi_t^\alpha(g_t^\alpha).$ A way to control the discrepancy between $\pi_t$ and $\bar \pi_t^\alpha$ is given by the proposition below.

\begin{proposition}\label{pimu}
    If $\xi_t(g_t)\leq \bar \xi_t^{\alpha}(g_t^{\alpha}),$  then 
    $$\norm{\pi_t-\bar \pi_t^{\alpha}}_{TV}\leq \frac{\norm{\mu_t-\mu_t^{\alpha}}_{TV}}{\xi_t(g_t)}.$$
\end{proposition}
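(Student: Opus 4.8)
The plan is to reduce this to the elementary fact that passing from a non-negative finite measure to its normalisation is Lipschitz in total variation, with the constant sharpened by the hypothesis $\xi_t(g_t)\le\bar\xi_t^\alpha(g_t^\alpha)$.

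First I would record, from Definition~\ref{UNM}, that $\mu_t(\mX)=\xi_t(g_t)$ and $\mu_t^\alpha(\mX)=\bar\xi_t^\alpha(g_t^\alpha)$, so that $\pi_t=\mu_t/\mu_t(\mX)$ and $\bar\pi_t^\alpha=\mu_t^\alpha/\mu_t^\alpha(\mX)$, and the hypothesis reads $\mu_t(\mX)\le\mu_t^\alpha(\mX)$. For an arbitrary $F\in\mB(\mX)$ I would then split
$$
\pi_t(F)-\bar\pi_t^\alpha(F)=\frac{\mu_t(F)-\mu_t^\alpha(F)}{\mu_t(\mX)}+\mu_t^\alpha(F)\left(\frac{1}{\mu_t(\mX)}-\frac{1}{\mu_t^\alpha(\mX)}\right),
$$
which is obtained by adding and subtracting $\mu_t^\alpha(F)/\mu_t(\mX)$, i.e. by taking the \emph{smaller} normaliser $\xi_t(g_t)=\mu_t(\mX)$ as the common denominator.

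Next I would bound the two summands. The first one is controlled in absolute value by $\norm{\mu_t-\mu_t^\alpha}_{TV}/\xi_t(g_t)$ directly from the definition of $\norm{\cdot}_{TV}$. For the second, the hypothesis $\mu_t(\mX)\le\mu_t^\alpha(\mX)$ makes $\tfrac{1}{\mu_t(\mX)}-\tfrac{1}{\mu_t^\alpha(\mX)}\ge 0$, so that, using $0\le\mu_t^\alpha(F)\le\mu_t^\alpha(\mX)$,
$$
0\le\mu_t^\alpha(F)\left(\frac{1}{\mu_t(\mX)}-\frac{1}{\mu_t^\alpha(\mX)}\right)\le\frac{\mu_t^\alpha(\mX)-\mu_t(\mX)}{\mu_t(\mX)}=\frac{\left|(\mu_t-\mu_t^\alpha)(\mX)\right|}{\xi_t(g_t)}\le\frac{\norm{\mu_t-\mu_t^\alpha}_{TV}}{\xi_t(g_t)}.
$$
Taking the supremum over $F$ of $(\pi_t-\bar\pi_t^\alpha)(F)$ and, symmetrically, of $(\bar\pi_t^\alpha-\pi_t)(F)$ (since $\pi_t-\bar\pi_t^\alpha$ has total mass zero), and then invoking the definition of $\norm{\cdot}_{TV}$ applied to the signed measure $\pi_t-\bar\pi_t^\alpha$, gives $\norm{\pi_t-\bar\pi_t^\alpha}_{TV}\le\norm{\mu_t-\mu_t^\alpha}_{TV}/\xi_t(g_t)$.

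The only delicate point is the bookkeeping of constants: one must normalise against the smaller total mass $\xi_t(g_t)$, and use the ordering $\xi_t(g_t)\le\bar\xi_t^\alpha(g_t^\alpha)$ twice — to fix the sign of $\tfrac{1}{\mu_t(\mX)}-\tfrac{1}{\mu_t^\alpha(\mX)}$, and to absorb the normalisation gap $\mu_t^\alpha(\mX)-\mu_t(\mX)=\left|(\mu_t-\mu_t^\alpha)(\mX)\right|$ (which, being the value of the signed measure $\mu_t-\mu_t^\alpha$ on $\mX$, is at most $\norm{\mu_t-\mu_t^\alpha}_{TV}$) into the same bound as the first summand. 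Beyond this there is no genuine analytic obstacle: the whole argument is the triangle inequality together with the monotonicity of $s\mapsto 1/s$ on $(0,\infty)$.
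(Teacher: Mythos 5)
Your proof is correct and follows essentially the same route as the paper's: an add-and-subtract decomposition of $\pi_t-\bar\pi_t^\alpha$ into a term controlled by $\norm{\mu_t-\mu_t^\alpha}_{TV}$ plus a normalisation-gap term whose sign is fixed by the hypothesis $\xi_t(g_t)\leq \bar\xi_t^\alpha(g_t^\alpha)$, followed by the observation that a one-sided bound suffices for a zero-total-mass signed measure (the paper phrases this via a Hahn decomposition, you via the symmetry of $\sup_F$). The only difference is cosmetic: you take the smaller normaliser $\xi_t(g_t)$ as the common denominator while the paper takes the larger one $\bar\xi_t^\alpha(g_t^\alpha)$, which makes your version marginally cleaner since you never need to check the sign of $(\mu_t^\alpha-\mu_t)(F)$ before enlarging the denominator.
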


\begin{proof}
For any $F\in \mF$
    \begin{equation*}
    \bar \pi_t^{\alpha}(F)-\pi_t(F)=\frac{\mu_t^{\alpha}(F)}{\bar \xi_t^{\alpha}(g_t^{\alpha})}-\frac{\mu_t(F)}{\xi_t(g_t)}=\frac{1}{\bar \xi_t^{\alpha}(g_t^{\alpha})}\left[  (\mu_t^{\alpha}-\mu_t)(F)+\pi_t(F)({\xi_t(g_t)-\bar \xi_t^{\alpha}(g_t^{\alpha})})\right],     
    \end{equation*}
    where the second equality is obtained by adding and subtracting $\mu_t^\alpha(F)/\xi_t(g_t)$.
Moreover, since $\xi_t(g_t) \leq \bar \xi_t^{\alpha}(g_t^{\alpha})$ we readily obtain the inequality 
    $$(\bar \pi_t^{\alpha}-\pi_t)(F)\leq \frac{1}{{\xi_t(g_t)}}\left[  (\mu_t^{\alpha}-\mu_t)(F)\right],$$
    that holds for any $F\in\mF.$
In particular for $A,B$ a Hahn decomposition of $\mX$ w.r.t $(\bar \pi_t^{\alpha}-\pi_t)$, (i.e, $A\cup B = \mX$, $B=\stcomp{A}$ and $\bar \pi_t^{\alpha}(A)-\pi_t(A)\geq 0$) yields 
\beq\label{II1}
\norm{\bar \pi_t^{\alpha}-\pi_t}_{TV}=(\bar \pi_t^{\alpha}-\pi_t)(A)\leq \frac{1}{\xi_t(g_t)}\left[  (\mu_t^{\alpha}-\mu_t)(A)\right],
\eeq
where
\beq\label{II2}
    (\mu_t^{\alpha}-\mu_t)(A) \leq \norm{\mu_t^{\alpha}-\mu_t}_{TV}.
\eeq
   Substituting \eqref{II2} back into \eqref{II1} concludes the proof.
\end{proof}
Next, we account for the difference between the Markov kernels $K_t$ and $K_t^\alpha$, which we quantify as
\beq\label{DKt}
    \Delta_{K_{t+1}}(\gamma):=\int_{\mX}\norm{K_{t+1}(x,\cdot)-K_{t+1}(\alpha_t(x,\gamma),\cdot)}_{TV}\pi_t(\sd x).
 \eeq   
Additionally, let $\mathcal{D}_{\mX}:=\{\eta (\pi_1-\pi_2) : \eta\in \mbR, \pi_i\in \mP(\mX), \; i=1,2\}$ be, the linear space generated by the differences of probability measures in $\mX$. We can think of the Markov kernel as an operator $K_t:\mathcal{D}_{\mX}\to \mathcal{D}_{\mX}$ and introduce the induced norm
\begin{equation}\label{normK}
\norm{K_t}_{\mathcal{D}_{\mX}}:=\sup_{\substack{\lambda \in \mathcal{D}_{\mX} \\ \lambda \neq 0}} \frac{\norm{K_t( \lambda)}_{TV}}{\norm{\lambda}_{TV}}.%=\norm{K}.
\end{equation}
It is not difficult to prove that 
\beq\label{INK}
\norm{K_t}_{\mathcal{D}_{\mX}}=\sup_{x,x'\in\mX} \norm{K_t(x,\cdot)-K_t(x',\cdot)}_{TV},
\eeq
(see \cite{dobrushin1956central2} Section 3, Eq. (1.5'')). 
\begin{proposition}\label{DobN}
The induced norm of the nudged operator $\bar K_t^{\alpha}$ satisfies the inequality
$$\norm{\bar K_t^{\alpha}}_{\mathcal{D}_{\mX}}\leq \norm{K_t}_{\mathcal{D}_{\mX}},\quad \text{for }\gamma\in[0,\Gamma_t], \; \text{and }t\geq 1.$$
\end{proposition}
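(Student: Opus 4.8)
The plan is to reduce everything to the explicit formula \eqref{INK} for the induced norm together with the elementary fact that $\alpha_{t-1}(\cdot,\gamma)$ is a map of $\mX$ into itself.

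First I would apply \eqref{INK} to the kernel $\bar K_t^{\alpha}$ and insert its definition \eqref{eq:DefKalpha2}, $\bar K_t^{\alpha}(x,\sd x_t)=K_t(\alpha_{t-1}(x,\gamma),\sd x_t)$ (with $\alpha_0$ the identity on $\mX$), to obtain
$$
\norm{\bar K_t^{\alpha}}_{\mathcal{D}_{\mX}} = \sup_{x,x'\in\mX} \norm{\bar K_t^{\alpha}(x,\cdot)-\bar K_t^{\alpha}(x',\cdot)}_{TV} = \sup_{x,x'\in\mX} \norm{K_t(\alpha_{t-1}(x,\gamma),\cdot)-K_t(\alpha_{t-1}(x',\gamma),\cdot)}_{TV}.
$$
The key step is then the observation that, since by Definition \ref{def:NudParT} the transformation $\alpha_{t-1}(\cdot,\gamma)$ takes values in $\mX$ for every admissible $\gamma$, the family of pairs $\{(\alpha_{t-1}(x,\gamma),\alpha_{t-1}(x',\gamma)) : x,x'\in\mX\}$ is contained in $\mX\times\mX$. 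Consequently the supremum above is at most the supremum of $\norm{K_t(u,\cdot)-K_t(u',\cdot)}_{TV}$ over all $(u,u')\in\mX\times\mX$, which by \eqref{INK} equals $\norm{K_t}_{\mathcal{D}_{\mX}}$; this gives the claim for every $\gamma$ for which $\alpha_{t-1}(\cdot,\gamma)$ is defined, and in particular for $\gamma\in[0,\Gamma_t]$.

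I do not expect any genuine obstacle here: the estimate uses nothing about $\gamma$ beyond the fact that $\alpha_{t-1}(\cdot,\gamma)$ is a well-defined self-map of $\mX$, and neither the likelihood-increasing property \eqref{eq:IncreasingNudging} nor the continuity condition \eqref{A0} are needed. An equivalent, slightly more structural presentation would be to factor $\bar K_t^{\alpha}=K_t\circ A_{t-1}$, where $A_{t-1}$ is the deterministic Markov kernel $x\mapsto\delta_{\alpha_{t-1}(x,\gamma)}$, to note that pushforward by a measurable map is non-expansive in total variation so that $\norm{A_{t-1}}_{\mathcal{D}_{\mX}}\le 1$, and finally to invoke sub-multiplicativity of the operator norm $\norm{\cdot}_{\mathcal{D}_{\mX}}$, yielding $\norm{\bar K_t^{\alpha}}_{\mathcal{D}_{\mX}}\le\norm{K_t}_{\mathcal{D}_{\mX}}\norm{A_{t-1}}_{\mathcal{D}_{\mX}}\le\norm{K_t}_{\mathcal{D}_{\mX}}$; I would nonetheless keep the direct computation in the main text since it is entirely self-contained.
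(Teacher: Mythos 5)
Your proof is correct and follows essentially the same route as the paper: both apply the Dobrushin-type identity \eqref{INK}, substitute the definition \eqref{eq:DefKalpha2}, and observe that the supremum over the image of the nudging map (a subset of $\mX$) is dominated by the supremum over all of $\mX\times\mX$. Your indexing via $\alpha_{t-1}$ is in fact slightly more careful than the paper's, and the alternative factorisation through a deterministic kernel is a nice optional remark, but the core argument is identical.
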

\begin{proof}
   We know $\alpha_t:\mX \times \mbR \to \mX$, i.e., the image $\Im(\alpha_t):=\{ y \in \mX : y = \alpha_t(x,\gamma), \; x\in \mX, \gamma \in[0,\Gamma_t] \} \subseteq \mX$. Then from \eqref{INK}, we have
\begin{align*}
\norm{\bar K_t^\alpha}_{\mathcal{D}_{\mX}}&=\sup_{x,x'\in\mX}\norm{K_t(\alpha(x,\gamma),\cdot)-K_t(\alpha(x',\gamma),\cdot)}_{TV}\\
&= \sup_{y,y'\in\Im(\alpha_t)} \norm{K_t(y,\cdot)-K_t(y',\cdot)}_{TV} 
 \leq \sup_{x,x'\in\mX} \norm{K_t(x,\cdot)-K_t(x',\cdot)}_{TV} \\ &=\norm{K_t}_{\mathcal{D}_{\mX}}, \quad \text{for all } \gamma\in[0,\Gamma_t],\; t\geq 1. 
\end{align*}
\end{proof}

With this, we are able to control the discrepancy in the predictive measure at the $t+1$ step as follows.

\begin{lemma}\label{pmrl}If $\xi_t(g_t)\leq \bar \xi_t^{\alpha}(g_t^{\alpha}),$ then
$$ \norm{\xi_{t+1}-\bar \xi_{t+1}^{\alpha}}_{TV} \leq b_t\norm{\xi_t-\bar \xi_t^{\alpha}}_{TV}+ a_t\Delta_{g_t}(\gamma)+\Delta_{K_{t+1}}(\gamma),$$
where 
$a_t=\frac{\norm{K_{t+1}}_{\mathcal{D}_{\mX}}}{\xi_t(g_t)}$, and $b_t= a_t \norm{g_t}_{\infty}.$
\end{lemma}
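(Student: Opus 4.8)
The plan is to decompose the discrepancy $\norm{\xi_{t+1} - \bar\xi_{t+1}^\alpha}_{TV}$ by inserting suitable intermediate measures, so that each piece is controlled by one of the quantities already bounded in the preliminary results. Recall from \eqref{eq:BarXit} that $\bar\xi_{t+1}^\alpha = K_{t+1}\pi_t^\alpha$ where I abuse notation slightly; more precisely, using \eqref{eq:DefKalpha2}, $\bar\xi_{t+1}^\alpha(\sd x_{t+1}) = \int K_{t+1}(\alpha_t(x_t,\gamma_t),\sd x_{t+1})\,\bar\pi_t^\alpha(\sd x_t)$, while $\xi_{t+1}(\sd x_{t+1}) = \int K_{t+1}(x_t,\sd x_{t+1})\,\pi_t(\sd x_t)$. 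The natural intermediate measure is $\nu_{t+1}(\sd x_{t+1}) := \int K_{t+1}(\alpha_t(x_t,\gamma_t),\sd x_{t+1})\,\pi_t(\sd x_t)$, which replaces the kernel but keeps the original filter $\pi_t$. Then by the triangle inequality,
\begin{equation*}
\norm{\xi_{t+1}-\bar\xi_{t+1}^\alpha}_{TV} \le \norm{\xi_{t+1}-\nu_{t+1}}_{TV} + \norm{\nu_{t+1}-\bar\xi_{t+1}^\alpha}_{TV}.
\end{equation*}

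First I would bound the first term. Since $\xi_{t+1}-\nu_{t+1} = \int\big[K_{t+1}(x_t,\cdot)-K_{t+1}(\alpha_t(x_t,\gamma_t),\cdot)\big]\pi_t(\sd x_t)$, an application of the triangle inequality for the total variation norm (integrating the pointwise TV distances against $\pi_t$) gives exactly $\norm{\xi_{t+1}-\nu_{t+1}}_{TV} \le \Delta_{K_{t+1}}(\gamma)$ by the definition \eqref{DKt}. Next, for the second term, both $\nu_{t+1}$ and $\bar\xi_{t+1}^\alpha$ are obtained by pushing forward $\pi_t$ and $\bar\pi_t^\alpha$ respectively through the same kernel $\bar K_{t+1}^\alpha(x_t,\cdot)=K_{t+1}(\alpha_t(x_t,\gamma_t),\cdot)$; hence $\nu_{t+1}-\bar\xi_{t+1}^\alpha = \bar K_{t+1}^\alpha(\pi_t - \bar\pi_t^\alpha)$ as an element of $\mathcal{D}_{\mX}$, so by the induced-norm definition \eqref{normK} and Proposition \ref{DobN},
\begin{equation*}
\norm{\nu_{t+1}-\bar\xi_{t+1}^\alpha}_{TV} \le \norm{\bar K_{t+1}^\alpha}_{\mathcal{D}_{\mX}}\,\norm{\pi_t-\bar\pi_t^\alpha}_{TV} \le \norm{K_{t+1}}_{\mathcal{D}_{\mX}}\,\norm{\pi_t-\bar\pi_t^\alpha}_{TV}.
\end{equation*}

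It then remains to convert $\norm{\pi_t-\bar\pi_t^\alpha}_{TV}$ into the quantities appearing in the statement. Since by hypothesis $\xi_t(g_t)\le\bar\xi_t^\alpha(g_t^\alpha)$, Proposition \ref{pimu} gives $\norm{\pi_t-\bar\pi_t^\alpha}_{TV}\le \norm{\mu_t-\mu_t^\alpha}_{TV}/\xi_t(g_t)$, and then Proposition \ref{rec} bounds $\norm{\mu_t-\mu_t^\alpha}_{TV} \le \Delta_{g_t}(\gamma_t) + \norm{g_t}_\infty\norm{\xi_t-\bar\xi_t^\alpha}_{TV}$. Substituting back, the second term is at most $\frac{\norm{K_{t+1}}_{\mathcal{D}_{\mX}}}{\xi_t(g_t)}\big(\Delta_{g_t}(\gamma) + \norm{g_t}_\infty\norm{\xi_t-\bar\xi_t^\alpha}_{TV}\big) = a_t\Delta_{g_t}(\gamma) + b_t\norm{\xi_t-\bar\xi_t^\alpha}_{TV}$, with $a_t,b_t$ as defined. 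Collecting the two terms yields precisely the claimed inequality. The main obstacle, and the step deserving most care, is the bookkeeping around which filter/predictive measure is being pushed through which kernel — in particular, correctly identifying the intermediate measure $\nu_{t+1}$ so that one half of the split is an exact kernel-difference term (yielding $\Delta_{K_{t+1}}$) and the other half is a genuine application of a Markov operator to an element of $\mathcal{D}_{\mX}$ (so that Proposition \ref{DobN} applies); everything else is a routine chaining of the preceding propositions.
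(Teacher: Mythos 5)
Your proof is correct and follows essentially the same route as the paper's: the intermediate measure $\nu_{t+1}$ you introduce is exactly the term $\int \bar K_{t+1}^\alpha(x,\cdot)\pi_t(\sd x)$ that the paper adds and subtracts, and the subsequent chaining of Propositions \ref{DobN}, \ref{pimu} and \ref{rec} is identical. No gaps.
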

%Remember that $g_t^{\alpha}=g_t(\alpha(x,\gamma_t))$, and therefore $\norm{g_{t}^{\alpha}-g_t}_{\infty}$ depends on $\gamma_t.$
\begin{proof}\label{pmr}
The difference between the predictive measures $\xi_{t+1},\bar \xi_{t+1}^{\alpha}$ at the time $t+1$  are given by
\beq\label{1}
\xi_{t+1}(\cdot)-\bar \xi_{t+1}^{\alpha}(\cdot)=\int_{\mX} K_{t+1}(x,\cdot)\pi_{t}(\sd x)-\int_{\mX} \bar K_{t+1}^{\alpha}( x,\cdot)\bar \pi_{t}^{\alpha}(\sd x).
\eeq
Adding and subtracting the term $ \int_{\mX} \bar K_{t+1}^\alpha(x,\cdot)\pi_{t}(\sd x)$ on the right hand side of \eqref{1} yields 

\begin{align*}
   \xi_{t+1}(\cdot)-\bar \xi_{t+1}^{\alpha}(\cdot)&=\int_{\mX} \left[K_{t+1}(x,\cdot)-\bar K_{t+1}^{\alpha}(x,\cdot)\right]\pi_{t}(\sd x)+\int_{\mX} \bar K_{t+1}^\alpha(x,\cdot)(\pi_{t}-\bar\pi_{t}^{\alpha})(\sd x)\\
    &=\int_{\mX} \left[K_{t+1}(x,\cdot)-\bar K_{t+1}^{\alpha}(x,\cdot)\right]\pi_{t}(\sd x)+\bar K_{t+1}^\alpha(\pi_{t}-\bar \pi_{t}^{\alpha}
    ).
\end{align*}
Applying the total variation norm, and  using Eq.\eqref{normK}  and Eq.\eqref{DKt},  we obtain
     \begin{align*}
 \hspace{-1.2cm}\norm{\xi_{t+1}-\bar \xi_{t+1}^{\alpha}}_{TV}& \leq \norm{\bar K_{t+1}^\alpha}_{\mathcal{D}_{\mX}}\norm{\pi_t-\bar \pi_t^{\alpha}}_{TV}+\Delta_{K_{t+1}}(\gamma),
\end{align*}
Now employing Proposition \ref{pimu} and subsequently Proposition \ref{rec}, yields
 \begin{align*}
    \hspace{1.5cm} \norm{\xi_{t+1}-\bar \xi_{t+1}^{\alpha}}_{TV}  &\leq \frac{\norm{\bar K_{t+1}^\alpha}_{\mathcal{D}_{\mX}}}{\xi_t(g_t)} \norm{\mu_t-\mu_t^{\alpha}}_{TV}+\Delta_{K_{t+1}}(\gamma) \\
     & \leq \frac{\norm{ K_{t+1}}_{\mathcal{D}_{\mX}}}{\xi_t(g_t)}\left[\norm{g_t}_{\infty} \norm{\xi_t-\bar \xi_t^{\alpha}}_{TV}+\Delta_{g_t}(\gamma)\right]+\Delta_{K_{t+1}}(\gamma).
    \end{align*}
Where we used Proposition \ref{DobN} to obtain the last inequality.
\end{proof}

%From the previous Lemma \ref{pmrl}, we immediately obtain the following recurrence relation for the evolution of the  total variation in the predictive measures.
%\begin{corollary}\label{IGKL}
%Assume that $\xi_i(g_i)\leq \bar \xi_i^{\alpha}(g_i^{\alpha}),\; i=1,...,t.$  Applying recursively the recurrence relation in Lemma \ref{pmrl}, we obtain 
%$$\norm{\xi_{t+1}-\bar \xi_{t+1}^{\alpha}}_{TV} \leq \sum_{i=1}^t \prod_{j=i+1}^t b_j\left[a_i\Delta_{g_{i}}(\gamma)+\Delta_{K_{i+1}}(\gamma)\right], \;\text{ where } \prod_{j=t+1}^t b_j:=1.$$
%\end{corollary}
 %\begin{proof}
  %The proof is straightforward, using  that $\norm{\xi_1-\bar \xi_1^{\alpha}}_{TV}=0.$ 
  %  \end{proof}
%The last corollary will help us to find a explicit sequence of steps for the nudging transformation to increase the Bayesian evidence in the Kalman filter.

\begin{lemma}\label{TT0}
    If Assumption \ref{A2} holds,  then 
    $$\lim_{\gamma\to 0} \Delta_{g_{t}}(\gamma)=0, \quad \lim_{\gamma\to 0}\Delta_{K_{t+1}}(\gamma)=0.$$
\end{lemma}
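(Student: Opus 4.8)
The plan is to prove both limits by the dominated convergence theorem, after reducing the continuous limit $\gamma \to 0$ to arbitrary sequences $\gamma_n \to 0^+$ (which eventually lie in $[0,\Gamma_t]$) via the sequential characterisation of limits. In both cases the integration is against a probability measure ($\xi_t$, respectively $\pi_t$), so any uniformly bounded integrand is dominated by an integrable constant, and the measurability needed for each fixed $\gamma$ comes for free from the fact that $\Delta_{g_t}(\gamma)$ and $\Delta_{K_{t+1}}(\gamma)$ were introduced as well-defined integrals.

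For $\Delta_{g_t}(\gamma) = \int_{\mX} [g_t(\alpha_t(x,\gamma)) - g_t(x)]\,\xi_t(\sd x)$ I would first observe that, by condition i) of Definition~\ref{def:NudParT}, $\alpha_t(x,\gamma_n) \to x$ for every $x \in \mX$, and then, invoking the continuity of $g_t$ from Assumption~\ref{A2}.i), conclude that the integrand tends to $g_t(x) - g_t(x) = 0$ pointwise on $\mX$. For domination I would use that $g_t$ is a non-negative likelihood with $\|g_t\|_\infty < \infty$ together with condition ii) of Definition~\ref{def:NudParT}, which give $0 \le g_t(\alpha_t(x,\gamma_n)) - g_t(x) \le \|g_t\|_\infty$ for all $x$ and all $n$ with $\gamma_n \le \Gamma_t$. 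Dominated convergence then yields $\Delta_{g_t}(\gamma_n) \to 0$, and since the sequence was arbitrary, $\lim_{\gamma \to 0}\Delta_{g_t}(\gamma) = 0$.

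For $\Delta_{K_{t+1}}(\gamma) = \int_{\mX} \|K_{t+1}(x,\cdot) - K_{t+1}(\alpha_t(x,\gamma),\cdot)\|_{TV}\,\pi_t(\sd x)$ I would establish pointwise convergence of the integrand using Assumption~\ref{A2}.ii): fixing $x \in \mX$ and $\epsilon > 0$, the TV-continuity of $K_{t+1}$ at $x$ provides $\delta_{\epsilon,x} > 0$, and since $\alpha_t(x,\gamma_n) \to x$ there is $N$ with $\|\alpha_t(x,\gamma_n) - x\| \le \delta_{\epsilon,x}$ for $n \ge N$, whence $\|K_{t+1}(x,\cdot) - K_{t+1}(\alpha_t(x,\gamma_n),\cdot)\|_{TV} \le \epsilon$ for $n \ge N$; letting $\epsilon \to 0$ shows the integrand tends to $0$ for every $x$. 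The integrand is the total variation norm of a difference of two probability measures, hence uniformly bounded by $2$, which is $\pi_t$-integrable, so dominated convergence gives $\Delta_{K_{t+1}}(\gamma_n) \to 0$ along every sequence, i.e. $\lim_{\gamma \to 0}\Delta_{K_{t+1}}(\gamma) = 0$.

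The argument is essentially routine, and I do not expect a genuine obstacle; the only point deserving care is the interchange of limit and integral, which is why the proof runs through the sequential reduction and dominated convergence rather than a direct $\epsilon$–$\delta$ estimate. A minor secondary point is simply to record that the two dominating functions, the constants $\|g_t\|_\infty$ and $2$, are integrable, which holds precisely because $\xi_t$ and $\pi_t$ are probability measures on $\mX$.
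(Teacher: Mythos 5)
Your proof is correct and follows essentially the same route as the paper's: reduce to sequences $\gamma_n\to 0$, use the continuity of $\alpha_t$ in $\gamma$ together with Assumption \ref{A2} to get pointwise convergence of the integrands $g_t(\alpha_t(x,\gamma_n))-g_t(x)$ and $\norm{K_{t+1}(x,\cdot)-K_{t+1}(\alpha_t(x,\gamma_n),\cdot)}_{TV}$ to zero, and conclude by dominated convergence with the constant bounds $\norm{g_t}_\infty$ and $2$. No substantive differences.
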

\begin{proof} 
For any $t\geq 1$, and a sequence $\{\gamma_n\}_{n\in \mbN}$ such that $\gamma_n\to 0$ when $n\to \infty$, define the sequences $\{h_n(x)\}_{n\in \mbN}$,  and $\{H_n(x)\}_{n\in\mbN}$ of real valued bounded functions  where 
$$h_n(x):=g_t(\alpha(x,\gamma_n))-g_t(x), \quad 
H_n(x):=\norm{K_{t+1}(x,\cdot)-K_{t+1}(\alpha_t(x,\gamma_n),\cdot)}_{TV},$$
(note that the t.v. norm is bounded for probability measures). The proof follows  from the continuity of the maps $\alpha_t(x,\gamma)$ w.r.t. $\gamma$. Indeed, by Assumption \ref{A2}, both $g_t$ and $K_{t+1}$ are continuous. Consequently, the functions $h_n(x)$ and $H_n(x)$ converge pointwise to zero. Furthermore, since these are sequences of bounded functions, we can apply the Lebesgue's dominated convergence Theorem to complete the proof.
\end{proof}

%%%%%%
%
%%%%%%
\subsection{Main proof}\label{B2}

The preliminary results in Section \ref{B1} provide us the elements to prove the key result in Lemma \ref{TI} below.
 
\begin{lemma}\label{TI}
    If Assumption \ref{A2} holds then, for any $t\in \mbN$ finite and for any $\epsilon>0$ there exists a sequence of parameters $\gamma_{0:t}(\epsilon)$ such that 
    
\beq \label{E2}
\Delta_{g_i}(\gamma_i)\geq \norm{g_i}_{\infty}\norm{\xi_i-\bar \xi_i^{\alpha}}_{TV}, \;\; i=1,...,t, 
        \eeq
         \beq \label{E1}
       \text{and} \quad  \norm{\xi_{t+1}-\bar \xi_{t+1}^{\alpha}}_{TV}\leq \epsilon.  
        \eeq
\end{lemma}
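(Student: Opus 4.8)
The plan is to prove the statement by induction on $t$, feeding the preliminary results of Section~\ref{B1} into an error-propagation argument. The base case is essentially vacuous: since $\alpha_0$ is the identity map, $\bar\xi_1^\alpha = K_1\pi_0 = \xi_1$, so $\norm{\xi_1-\bar\xi_1^\alpha}_{TV}=0$ and there is nothing to check at ``$t=0$'' (no conditions of the form \eqref{E2}, and \eqref{E1} is trivial). This same observation shows that the inductive step described next already covers $t=1$.

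For the inductive step I would fix $t\ge 1$ and $\epsilon>0$, assume the statement at step $t-1$, and --- crucially --- first pin down several constants that depend only on the \emph{original} model $\mM$ and not on any nudging parameters: the constants $a_t=\norm{K_{t+1}}_{\mathcal{D}_\mX}/\xi_t(g_t)$ and $b_t=a_t\norm{g_t}_\infty$ from Lemma~\ref{pmrl} (finite and positive under Assumption~\ref{A2}, using $\xi_t(g_t)>0$); a radius $\delta_t\in(0,\Gamma_t]$ small enough that $a_t\Delta_{g_t}(\gamma)+\Delta_{K_{t+1}}(\gamma)<\epsilon/2$ for all $\gamma\in[0,\delta_t]$, which exists because both quantities tend to $0$ as $\gamma\to 0$ by Lemma~\ref{TT0}; and $c_t:=\Delta_{g_t}(\delta_t)>0$, which is strictly positive by Definition~\ref{def:NudParT}.iii). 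I would also record that $\gamma\mapsto\Delta_{g_t}(\gamma)$ is continuous on $[0,\Gamma_t]$ with $\Delta_{g_t}(0)=0$, by the same dominated-convergence reasoning used in the proof of Lemma~\ref{TT0}.

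Next I would invoke the induction hypothesis with tolerance $\epsilon':=\min\{\epsilon/(2b_t),\,c_t/\norm{g_t}_\infty\}$, getting $\gamma_{0:t-1}(\epsilon')$ for which \eqref{E2} holds for $i=1,\dots,t-1$ and $\norm{\xi_t-\bar\xi_t^\alpha}_{TV}\le\epsilon'$. Then $\norm{g_t}_\infty\norm{\xi_t-\bar\xi_t^\alpha}_{TV}\le c_t=\Delta_{g_t}(\delta_t)$, so the intermediate value theorem applied to the continuous function $\Delta_{g_t}$ on $[0,\delta_t]$ produces $\gamma_t\in(0,\delta_t]$ with $\Delta_{g_t}(\gamma_t)\ge\norm{g_t}_\infty\norm{\xi_t-\bar\xi_t^\alpha}_{TV}$, which is \eqref{E2} at $i=t$ (in the degenerate case $\norm{\xi_t-\bar\xi_t^\alpha}_{TV}=0$ I would simply take $\gamma_t=\delta_t/2$). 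Appending $\gamma_t$ yields the desired $\gamma_{0:t}(\epsilon)$ and establishes \eqref{E2} for all $i\le t$. For \eqref{E1}: from \eqref{E2} at $i=t$ and Corollary~\ref{CNI} we get $\xi_t(g_t)\le\bar\xi_t^\alpha(g_t^\alpha)$, so Lemma~\ref{pmrl} applies and bounds $\norm{\xi_{t+1}-\bar\xi_{t+1}^\alpha}_{TV}$ by $b_t\norm{\xi_t-\bar\xi_t^\alpha}_{TV}+a_t\Delta_{g_t}(\gamma_t)+\Delta_{K_{t+1}}(\gamma_t)$; the first term is at most $b_t\epsilon'\le\epsilon/2$, and since $\gamma_t\in(0,\delta_t]$ the remaining two terms sum to less than $\epsilon/2$, which gives \eqref{E1} and closes the induction.

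The main obstacle is the two-sided constraint on $\gamma_t$: it must be large enough for $\Delta_{g_t}(\gamma_t)$ to clear the lower bound $\norm{g_t}_\infty\norm{\xi_t-\bar\xi_t^\alpha}_{TV}$ needed to apply Corollary~\ref{CNI} (and hence Lemma~\ref{pmrl}), yet small enough that $a_t\Delta_{g_t}(\gamma_t)+\Delta_{K_{t+1}}(\gamma_t)$ does not overshoot $\epsilon/2$. The resolution is to fix the tolerance $\epsilon'$ handed to the induction hypothesis in terms of the current-step constants $a_t,b_t,c_t,\delta_t$ \emph{before} invoking it; one must then check --- as above --- that none of these constants depends on the nudging parameters selected at earlier times, so the recursion is well founded and no circularity arises.
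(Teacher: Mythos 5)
Your proposal is correct and follows essentially the same route as the paper: an induction in which the tolerance handed to the induction hypothesis is shrunk to $\min\{\Delta_{g_t}(\cdot)/\norm{g_t}_{\infty},\,\epsilon/(2b_t)\}$, using Lemma~\ref{TT0} to keep $a_t\Delta_{g_t}+\Delta_{K_{t+1}}\leq\epsilon/2$, Corollary~\ref{CNI} to unlock Lemma~\ref{pmrl}, and the observation that $a_t$, $b_t$, $\Delta_{g_t}$ and $\Delta_{K_{t+1}}$ depend only on the original model (so no circularity). The only cosmetic differences are that the paper fixes $\gamma_t$ directly rather than a radius $\delta_t$ (your intermediate-value-theorem step is superfluous, since $\gamma_t=\delta_t$ already works) and treats $t=1$ as an explicit base case instead of a vacuous $t=0$ one.
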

\begin{proof}\label{ISSLP} We proceed by induction, starting at t=1.
    Given $\epsilon>0$, using Lemma \ref{TT0} is possible to choose $\gamma_1> 0$ such that  
    $$a_1\Delta_{g_1}(\gamma_1)+\Delta_{K_2}(\gamma_1)\leq \epsilon,$$
 Note that, by Eq. \eqref{A1}, for any $\gamma_1> 0$ we have $$\Delta_{g_1}(\gamma_1)\geq \norm{g_1}_{\infty} \norm{\xi_1-\bar \xi_1^{\alpha}}_{TV}=0,$$ 
 wich implies, by Corrolary \ref{CNI}, that $\xi_1(g_1)\leq\bar \xi_1^\alpha(g_1^\alpha)$. Then, using Lemma \ref{pmrl},
    $$\norm{\xi_2-\bar\xi_2^{\alpha}}_{TV}\leq b_2{\norm{\xi_1-\bar\xi_1^{\alpha}}_{TV}}+a_1\Delta_{g_1}(\gamma_1)+\Delta_{K_{2}}(\gamma_1)\leq\epsilon.$$
   
    For the induction step, assume that at time $t-1$   there is a sequence $\gamma_{0:t-1}(\epsilon)$ such that \eqref{E2} and \eqref{E1} hold  for any  given $\epsilon>0$. 
    
    At time $t$, we use Lemma \ref{TT0} to choose $\gamma_t> 0$ such that
    $$a_t\Delta_{g_t}(\gamma_t)+\Delta_{K_{t+1}}(\gamma_t)\leq \frac{\epsilon}{2}.$$
    Moreover, by Eq. \eqref{A1}, we ensure that $\Delta_{g_t}(\gamma_t)>0,$ and define 
    $$\epsilon^\star:=\min\left\{\frac{\Delta_{g_t}(\gamma_t)}{\norm{g_t}_{\infty}},\frac{\epsilon}{2b_t} \right\}>0.$$
    Then, by the induction hypothesis, there is a sequence $\gamma_{0:t-1}(\epsilon^\star)$ such that \eqref{E2} and \eqref{E1} hold. The latter implies that
    $$\norm{\xi_{t}-\bar\xi_{t}^{\alpha}}_{TV}\leq \epsilon^\star,$$ therefore, by the definition of $\epsilon^\star$ we have $ \Delta_{g_t}(\gamma_t)\geq \norm{g_t}_{\infty} \norm{\xi_{t}-\bar \xi_{t}^{\alpha}}_{TV}$. 
    Moreover by \eqref{E2}
$$\Delta_{g_i}(\gamma_i)\geq \norm{g_i}_{\infty} \norm{\xi_i-\bar\xi_i^{\alpha}}_{TV}, \;\; i=1,...,t-1.$$
     Now, using Lemma \ref{pmrl}, and by construction
   $$ \norm{\xi_{t+1}-\bar\xi_{t+1}^{\alpha}}_{TV} \leq b_t\norm{\xi_{t}-\bar \xi_{t}^{\alpha}}_{TV}+ a_t\Delta_{g_t}(\gamma_t)+\Delta_{K_{t+1}}(\gamma_t)\leq \epsilon.$$
\end{proof}
We can finally proceed with the proof of Theorem \ref{ISSL}.
\begin{proof}
 By Lemma \ref{TI}, for any $T\geq 1$, and  $\epsilon>0$ there is a sequence $\gamma_{1:T}(\epsilon)$ such that 
$$\Delta_{g_i}(\gamma_i)\geq \norm{g_i}_{\infty}\norm{\xi_i-\bar \xi_i^{\alpha}}_{TV}, \;\; i=1,...,T, $$
$$\text{and} \quad  \norm{\xi_{T+1}-\bar \xi_{T+1}^{\alpha}}_{TV}\leq \epsilon.$$
And by Corollary \ref{CNI} we have $\bar\xi_t^\alpha(g_t)\geq \xi_t(g_t),$ for $t=1,...,T.$ Therefore  \beq\label{PF}
  p_T(y_{1:T}\vert \bar\mM^\alpha)=\prod_{i=1}^T\bar\xi_i^\alpha(g_i^\alpha) \geq \prod_{i=1}^T\xi_i(g_i)=p_T(y_{1:T}\vert \mM).
  \eeq
   Finally, using  Remark \ref{SBER}, equation \eqref{PF}  implies that $$p_T(y_{1:T}\vert \mM^{\alpha})\geq  p_T(y_{1:T}\vert \mM). $$
\end{proof}

%%%%%%%%%%%%%%%%%%%%%%%%%%%%
%
%%%%%%%%%%%%%%%%%%%%%%%%%%%%
\section{Lipschitz parametric models}\label{App3}

We rely on the following result.

\begin{lemma}\label{PPL}
Let Assumption \ref{A3} ii), iii) hold and also assume that at some time $t\geq 1$ we have $\norm{\pi_{t-1,\theta}-\pi_{t-1,\theta'}}_{TV}\leq C_{\pi_{t-1}} \abs{\theta-\theta'}$ for some constant $C_{\pi_{t-1}}\in \mbR^+$. Then
\begin{enumerate}
    \item $\norm{\xi_{t,\theta}-\xi_{t,\theta'}}_{TV}
      \leq C_{K_t} \abs{\theta-\theta'}.$\label{pmrL}
    \item $\norm{\mu_{t,\theta}-\mu_{t,\theta'}}_{TV}\leq C_{\mu_t} \abs{\theta-\theta'}$, where $\mu_{t,\theta}$ is given in Definition \ref{UNM}.\label{pimuL}
    \item $\norm{\pi_{t,\theta}-\pi_{t,\theta'}}_{TV}\leq C_{\pi_{t}} \abs{\theta-\theta'},$\label{recL}
\end{enumerate}
for some constants $C_{K_{t}},C_{\mu_{t}},C_{\pi_{t}}\in \mbR^+$.
\end{lemma}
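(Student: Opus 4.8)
The plan is to establish items \ref{pmrL}--\ref{recL} one at a time, each by an add-and-subtract decomposition that isolates the effect of perturbing $\theta$ in a single component of the model, in direct analogy with Propositions~\ref{rec} and~\ref{pimu} of Appendix~\ref{App2} (there the perturbation was the nudging transformation; here it is the parameter). Once Lemma~\ref{PPL} is available, the Lipschitz bound \eqref{eq:Lipschitzparametriclikelihood} for the marginal likelihood follows by a routine induction on $t$: the base case is $\norm{\pi_{0,\theta}-\pi_{0,\theta'}}_{TV}\le C_0\abs{\theta-\theta'}$ (Assumption~\ref{A3}.i)), each application of \ref{pmrL}--\ref{recL} turns a finite $C_{\pi_t}$ into a finite $C_{\pi_{t+1}}$, and since $p_T(y_{1:T}\vert\mM_\theta)=\prod_{t=1}^T\xi_{t,\theta}(g_{t,\theta})=\prod_{t=1}^T\mu_{t,\theta}(\mX)$ is a product of uniformly bounded functions of $\theta$ each of which is Lipschitz by item \ref{pimuL} (because $\abs{\mu_{t,\theta}(\mX)-\mu_{t,\theta'}(\mX)}\le\norm{\mu_{t,\theta}-\mu_{t,\theta'}}_{TV}$), the product itself is Lipschitz with some finite constant $L_T$.

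For item \ref{pmrL}, I would use $\xi_{t+1,\theta}=K_{t+1,\theta}\pi_{t,\theta}$ and add and subtract $\int K_{t+1,\theta'}(x,\cdot)\pi_{t,\theta}(\sd x)$ to obtain
\[
\xi_{t+1,\theta}-\xi_{t+1,\theta'}=\int\big[K_{t+1,\theta}(x,\cdot)-K_{t+1,\theta'}(x,\cdot)\big]\pi_{t,\theta}(\sd x)\;+\;K_{t+1,\theta'}\big(\pi_{t,\theta}-\pi_{t,\theta'}\big).
\]
Taking total variation, the first term is bounded by $\int\norm{K_{t+1,\theta}(x,\cdot)-K_{t+1,\theta'}(x,\cdot)}_{TV}\pi_{t,\theta}(\sd x)\le\kappa_{t+1}\abs{\theta-\theta'}$ by Assumption~\ref{A3}.iii) (exactly the estimate used for $\Delta_{K_{t+1}}$ in Lemma~\ref{pmrl}), while the second term is bounded by $\norm{\pi_{t,\theta}-\pi_{t,\theta'}}_{TV}\le C_{\pi_t}\abs{\theta-\theta'}$ since Markov kernels are non-expansive in total variation. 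Hence item \ref{pmrL} holds with $K'=\kappa_{t+1}+C_{\pi_t}$.

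For item \ref{pimuL}, from Definition~\ref{UNM}, for every $F$ I would add and subtract $\int_F g_{t+1,\theta}(x)\,\xi_{t+1,\theta'}(\sd x)$ to get
\[
(\mu_{t+1,\theta}-\mu_{t+1,\theta'})(F)=\int_F g_{t+1,\theta}(x)\,(\xi_{t+1,\theta}-\xi_{t+1,\theta'})(\sd x)\;+\;\int_F \big(g_{t+1,\theta}-g_{t+1,\theta'}\big)(x)\,\xi_{t+1,\theta'}(\sd x),
\]
so that $\norm{\mu_{t+1,\theta}-\mu_{t+1,\theta'}}_{TV}\le\norm{g_{t+1,\theta}}_\infty\norm{\xi_{t+1,\theta}-\xi_{t+1,\theta'}}_{TV}+\norm{g_{t+1,\theta}-g_{t+1,\theta'}}_\infty$, using that $\xi_{t+1,\theta'}$ is a probability measure; combining with item \ref{pmrL} and Assumption~\ref{A3}.ii) gives $C_\mu=\bar G_{t+1}K'+G_{t+1}$, with $\bar G_{t+1}:=\sup_{\theta}\norm{g_{t+1,\theta}}_\infty$. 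For item \ref{recL}, writing $Z_{t+1,\theta}:=\mu_{t+1,\theta}(\mX)=\xi_{t+1,\theta}(g_{t+1,\theta})$ and $\pi_{t+1,\theta}=\mu_{t+1,\theta}/Z_{t+1,\theta}$, I would mimic the proof of Proposition~\ref{pimu}: adding and subtracting $\mu_{t+1,\theta'}(F)/Z_{t+1,\theta}$ yields
\[
(\pi_{t+1,\theta}-\pi_{t+1,\theta'})(F)=\frac{(\mu_{t+1,\theta}-\mu_{t+1,\theta'})(F)}{Z_{t+1,\theta}}+\pi_{t+1,\theta'}(F)\,\frac{Z_{t+1,\theta'}-Z_{t+1,\theta}}{Z_{t+1,\theta}},
\]
and then $\abs{Z_{t+1,\theta'}-Z_{t+1,\theta}}\le\norm{\mu_{t+1,\theta}-\mu_{t+1,\theta'}}_{TV}\le C_\mu\abs{\theta-\theta'}$ (item \ref{pimuL}) together with $\pi_{t+1,\theta'}(F)\le1$ gives item \ref{recL} with $C_{\pi_{t+1}}=2C_\mu/\underline Z_{t+1}$, where $\underline Z_{t+1}:=\inf_{\theta}\xi_{t+1,\theta}(g_{t+1,\theta})$.

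The algebra above is routine; the step that needs care is securing the two uniform-in-$\theta$ quantities it invokes, namely $\bar G_t=\sup_{\theta\in\Theta}\norm{g_{t,\theta}}_\infty<\infty$ and, above all, $\underline Z_t=\inf_{\theta\in\Theta}\xi_{t,\theta}(g_{t,\theta})>0$ at every $t$. Neither is literally part of Assumption~\ref{A3}: I would obtain $\bar G_t<\infty$ from the Lipschitz bound in Assumption~\ref{A3}.ii) together with $\norm{g_t}_\infty<\infty$ (Assumption~\ref{A2}.i)) and boundedness of $\Theta$, and $\underline Z_t>0$ from continuity of $\theta\mapsto\xi_{t,\theta}(g_{t,\theta})$ and its strict positivity on a compact $\Theta$ --- or, failing compactness, simply add these as standing regularity conditions, as is standard in filter-stability arguments. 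A final, purely cosmetic point is the bookkeeping when passing from a bound that holds uniformly over measurable sets $F$ to the total-variation norm in the paper's Definition, which introduces a harmless factor of at most $2$; I would absorb it into the constants $C_\mu$ and $C_{\pi_{t+1}}$ throughout.
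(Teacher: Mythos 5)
Your proof is correct and follows essentially the same route as the paper, whose entire argument for this lemma is the one-line remark that items 1--3 follow by the same add-and-subtract decompositions as Lemma \ref{pmrl}, Proposition \ref{rec} and Proposition \ref{pimu}, which is exactly what you carry out. Your one substantive addition is legitimate and worth keeping: item 3 does require a uniform positive lower bound $\inf_{\theta}\xi_{t+1,\theta}(g_{t+1,\theta})>0$ (and a uniform bound $\sup_\theta\norm{g_{t,\theta}}_\infty<\infty$), because the paper's Proposition \ref{pimu} divides only by the smaller normalising constant thanks to the one-sided ordering $\xi_t(g_t)\leq\bar\xi_t^\alpha(g_t^\alpha)$, which has no analogue when comparing two arbitrary parameters $\theta,\theta'$; making this an explicit regularity condition (e.g.\ via compactness of $\Theta$ and continuity) is the right fix.
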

\begin{proof}
The proof of 1. follows the same argument as the proof of Proposition \ref{pmrl}, the proof of 2. is the same as the proof of Lemma \ref{rec}, and the proof of 3. follows the same argument as the proof of  Proposition \ref{pimu}.
\end{proof}

\begin{corollary}\label{PPLC}
  If Assumption \ref{A3} holds, then statements 1., 2., and 3. in Lemma \ref{PPL} hold for any $t\geq 1.$
\end{corollary}

This result implies that the normalisation constants $\xi_{t,\theta}(g_{t,\theta})$ are Lipschitz w.r.t the parameter $\theta$ for any time step $t$. Indeed, if at time $t$ we have that  $\norm{\mu_{t,\theta}-\mu_{t,\theta'}}_{TV}\leq C_{\mu_t} \abs{\theta-\theta'}$ then, in particular, its normalisation constants satisfy $\abs{\xi_{t,\theta}(g_{t,\theta})-\xi_{t,\theta'}(g_{t,\theta'})}\leq C_{\mu_t} \abs{\theta-\theta'}$. Therefore, the following result is straightforward

\begin{theorem}\label{thm:Lipchitzthm}
   Let $p_T(y_{1:T}\vert \mM_{\theta})=\prod_{i=1}^T \xi_{i,\theta}(g_{i,\theta})$ be the Bayesian evidence of the parametric model $\mM_\theta$ at time $T$.  If Assumption \ref{A3} holds, then the Bayesian evidence is Lipschitz w.r.t. the parameter $\theta$, i.e.,
    \beq
\abs{p_T(y_{1:T}\vert \mM_{\theta})-p_T(y_{1:T}\vert \mM_{\theta'})} \leq L_T\abs{\theta-\theta'}, \quad L_T \in \mbR^+,
    \eeq 
\end{theorem}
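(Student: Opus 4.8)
I would write the Bayesian evidence as the product of incremental likelihoods, $p_T(y_{1:T}\mid \mM_{\theta}) = \prod_{t=1}^{T} Z_t(\theta)$, where $Z_t(\theta) := \xi_{t,\theta}(g_{t,\theta}) = \mu_{t,\theta}(\mX)$ is the total mass of the unnormalised measure $\mu_{t,\theta}$ from Definition \ref{UNM}. The plan has two stages: first, show by induction on $t$ that each factor $Z_t(\cdot)$ is Lipschitz on $\Theta$, propagating the Lipschitz constant through the filtering recursion by means of Lemma \ref{PPL}; second, combine the factor-wise bounds via a telescoping-product identity to obtain the global Lipschitz estimate for $p_T$.

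\textbf{Step 1: factor-wise Lipschitz bounds.} The base case is Assumption \ref{A3}.i, which states $\norm{\pi_{0,\theta}-\pi_{0,\theta'}}_{TV}\le C_0\abs{\theta-\theta'}$; set $C_{\pi_0}:=C_0$. For the inductive step, assume $\norm{\pi_{t-1,\theta}-\pi_{t-1,\theta'}}_{TV}\le C_{\pi_{t-1}}\abs{\theta-\theta'}$. Applying Lemma \ref{PPL} with its time index equal to $t-1$ produces finite constants $C_{\mu_t},C_{\pi_t}$ such that $\norm{\mu_{t,\theta}-\mu_{t,\theta'}}_{TV}\le C_{\mu_t}\abs{\theta-\theta'}$ and $\norm{\pi_{t,\theta}-\pi_{t,\theta'}}_{TV}\le C_{\pi_t}\abs{\theta-\theta'}$, closing the induction. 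As already observed in the discussion following Lemma \ref{PPL}, the total variation norm dominates the difference of total masses, so $\abs{Z_t(\theta)-Z_t(\theta')}=\abs{\mu_{t,\theta}(\mX)-\mu_{t,\theta'}(\mX)}\le\norm{\mu_{t,\theta}-\mu_{t,\theta'}}_{TV}\le C_{\mu_t}\abs{\theta-\theta'}$. Running this from $t=1$ to $t=T$ yields finite constants $\{C_{\mu_t}\}_{t=1}^{T}$ with each $Z_t$ being $C_{\mu_t}$-Lipschitz on $\Theta$.

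\textbf{Step 2: telescoping the product.} Using the standard hybrid identity
\[
\prod_{t=1}^{T} Z_t(\theta)-\prod_{t=1}^{T} Z_t(\theta') = \sum_{s=1}^{T}\Bigl(\prod_{t=1}^{s-1}Z_t(\theta)\Bigr)\bigl(Z_s(\theta)-Z_s(\theta')\bigr)\Bigl(\prod_{t=s+1}^{T}Z_t(\theta')\Bigr),
\]
together with the uniform bound $Z_t(\theta)\le\norm{g_{t,\theta}}_{\infty}=:\bar Z_t<\infty$ (finite uniformly over $\theta\in\Theta$ because, by Assumption \ref{A3}.ii, $\theta\mapsto\norm{g_{t,\theta}}_{\infty}$ is itself Lipschitz, hence bounded on $\Theta$, which is the parametric counterpart of Assumption \ref{A2}.i), the triangle inequality gives
\[
\abs{p_T(y_{1:T}\mid \mM_{\theta})-p_T(y_{1:T}\mid \mM_{\theta'})}\le\Bigl(\sum_{s=1}^{T}C_{\mu_s}\prod_{t\neq s}\bar Z_t\Bigr)\abs{\theta-\theta'},
\]
establishing the theorem with $L_T:=\sum_{s=1}^{T}C_{\mu_s}\prod_{t\neq s}\bar Z_t<\infty$.

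\textbf{Main obstacle.} The genuinely analytic content — propagating Lipschitz continuity through one Bayesian update step — is delegated to Lemma \ref{PPL}, whose proof mirrors those of Propositions \ref{pmrl}, \ref{rec} and \ref{pimu}; so most of the effort here is organisational: chaining the constants $C_{\pi_0}\to C_{\mu_1},C_{\pi_1}\to\cdots\to C_{\mu_T}$ cleanly, and ensuring the incremental likelihoods are uniformly bounded so the sum defining $L_T$ is finite. The latter is the only point requiring care: if $\Theta$ is unbounded one must either restrict to a bounded parameter set or assume a uniform bound on $\{\norm{g_{t,\theta}}_{\infty}\}_{t\ge 1}$, in the spirit of Assumption \ref{A2}.i.
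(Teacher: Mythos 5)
Your proposal is correct and follows essentially the same route as the paper: an induction through the filtering recursion via Lemma \ref{PPL} to show each incremental likelihood $\xi_{t,\theta}(g_{t,\theta})$ is Lipschitz in $\theta$, followed by the observation that a finite product of bounded Lipschitz factors is Lipschitz (which you make explicit with the telescoping identity, where the paper simply invokes the fact). Your closing caveat — that the uniform bound $\bar Z_t$ requires either a bounded parameter set $\Theta$ or a uniform bound on $\norm{g_{t,\theta}}_\infty$ — is a point the paper leaves implicit, so your treatment is, if anything, slightly more careful.
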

\begin{proof} For $T=1$ we have $$\abs{p_1(y_{1}\vert \mM_{\theta})-p_1(y_{1}\vert \mM_{\theta'})}=\abs{\xi_{1,\theta}(g_{1,\theta})-\xi_{1,\theta'}(g_{1,\theta'})}\leq \norm{\mu_{1,\theta}-\mu_{1,\theta'}}_{TV}.$$
Using Corollary  \ref{PPLC} (Lemma \ref{PPL} \ref{pimuL}.) we get the result for $T=1$. 

For the induction step, assume that at time $T-1$ we have $\norm{\pi_{T-1,\theta}-\pi_{T-1,\theta'}}_{TV}\leq  C_{\pi_{T-1}} \abs{\theta-\theta'},$ then by Corollary \ref{PPLC} (Lemma \ref{PPL} \ref{pimuL}.)  $\norm{\mu_{T,\theta}-\mu_{T,\theta}}_{TV}\leq C_{\mu_{T}}\abs{\theta-\theta'}$, therefore, in particular,  $\abs{\xi_{T,\theta}(g_{T,\theta})-\xi_{T,\theta'}(g_{T,\theta'})}\leq C_{\mu_{T}}\abs{\theta-\theta'}$. (Note that we can propagate this to the next step by Lemma \ref{PPL} \ref{recL}. i.e.  $\norm{\pi_{T,\theta}-\pi_{T,\theta'}}_{TV}\leq  C_{\pi_T} \abs{\theta-\theta'}$).

Since the product of a finite number of bounded Lipchitz functions is again Lipchitz, we have that 
$p_T(y_{1:T}\vert \mM_{\theta})=\prod_{i=1}^T \xi_{i,\theta}(g_{i,\theta})$ is a Lipschitz function w.r.t. the parameter $\theta$.
\end{proof}

%%%%%%%%%%%%%%%%%%%%%%%%%%%%
%
%%%%%%%%%%%%%%%%%%%%%%%%%%%%
\section{Error between models}\label{app:proof_filters}

Let $\varphi : \mathcal{X}^{\otimes T} \to \mathbb{R}$ be a bounded test function and $\Pi_T^{\theta^\star}(X_{1} \in \sd x_1, \ldots, X_T \in \sd x_T) := \mathbb{P}_{\theta^\star}(X_1 \in \sd x_1, \ldots, X_T \in \sd x_T | Y_{1:T} = y_{1:T})$ be the filter with the MLE estimate $\theta^\star$ and $\Pi_T^{\theta, \alpha}$ denote the corresponding filter with the nudged model. Note that, we can write
\begin{align*}
\Pi_T^{\theta^\star}(\varphi) = \frac{\mathsf{p}^{\theta^\star}_{0:T}(\varphi g_{1:T})}{\mathsf{p}_{0:T}^{\theta^\star}(g_{1:T})}, 
\end{align*}
where $g_{1:T} := g_1 \times \cdots \times g_T$ is the product of likelihoods and
\begin{align*}
\mathsf{p}_{0:T}^{\theta^\star}(\sd x_{0:T}) = \pi_0(\sd x_0) \prod_{t=1}^T K_{t,\theta^\star}(\sd x_t | x_{t-1}).
\end{align*}
Note also that $\mathsf{p}_{0:T}^{\theta^\star}(g_{1:T}) = p_T(y_{1:T} | \mathcal{M}_{\theta^\star})$ which will be of use later. A similar representation holds for the nudged kernel, i.e.,
\begin{align*}
\Pi_T^{\theta, \alpha}(\varphi) = \frac{\mathsf{p}^{\theta,\alpha}_{0:T}(\varphi g_{1:T})}{\mathsf{p}^{\theta,\alpha}_{0:T}(g_{1:T})},
\end{align*}
where
\begin{align*}
\mathsf{p}_{0:T}^{\theta, \alpha}(\sd x_{0:T}) = \pi_0(\sd x_0) \prod_{t=1}^T K^\alpha_{t,\theta}(\sd x_t | x_{t-1}).
\end{align*}
Some straightforward manipulations complete the analysis,
\begin{align*}
&\left| \Pi_T^{\theta^\star}(\varphi) - \Pi_T^{\theta, \alpha}(\varphi) \right| = \left| \frac{\mathsf{p}^{\theta^\star}_{0:T}(\varphi g_{1:T})}{\mathsf{p}_{0:T}^{\theta^\star}(g_{1:T})} - \frac{\mathsf{p}^{\theta,\alpha}_{0:T}(\varphi g_{1:T})}{\mathsf{p}^{\theta,\alpha}_{0:T}(g_{1:T})}, \right| \\
&\leq \left| \frac{\mathsf{p}^{\theta^\star}_{0:T}(\varphi g_{1:T})}{\mathsf{p}_{0:T}^{\theta^\star}(g_{1:T})} - \frac{ \mathsf{p}^{\theta,\alpha}_{0:T}(\varphi g_{1:T})}{\mathsf{p}^{\theta^\star}_{0:T}(g_{1:T})}\right| + \left| \frac{ \mathsf{p}^{\theta,\alpha}_{0:T}(\varphi g_{1:T})}{\mathsf{p}^{\theta^\star}_{0:T}(g_{1:T})} - \frac{\mathsf{p}^{\theta,\alpha}_{0:T}(\varphi g_{1:T})}{\mathsf{p}^{\theta,\alpha}_{0:T}(g_{1:T})} \right| \\
&\leq \frac{\|\varphi\|_\infty | p_T(y_{1:T} | \mathcal{M}_{\theta^\star}) - p_T(y_{1:T} | \mathcal{M}_\theta^\alpha)|}{p_T(y_{1:T} | \mathcal{M}_{\theta^\star})} + \|\varphi\|_\infty p_{T}(y_{1:T} | \mathcal{M}_\theta^\alpha) \frac{| p_T(y_{1:T} | \mathcal{M}_{\theta^\star}) - p_T(y_{1:T} | \mathcal{M}_\theta^\alpha)|}{|p_T(y_{1:T} | \mathcal{M}_{\theta^\star}) p_T(y_{1:T} | \mathcal{M}_\theta^\alpha)|} \\
&= \frac{2\|\varphi\|_\infty | p_T(y_{1:T} | \mathcal{M}_{\theta^\star}) - p_T(y_{1:T} | \mathcal{M}_\theta^\alpha)|}{p_T(y_{1:T} | \mathcal{M}_{\theta^\star})}.
\end{align*}

%%%%%%%%%%%%%%%%%%%%%%%%%%%%
%
%%%%%%%%%%%%%%%%%%%%%%%%%%%%
\section{Projected gradient-ascent nudging}\label{A:PGAN}

%Consider a function \( f(x) \) continuously differentiable and \( \mX \subseteq \operatorname{dom}(f) \subseteq \mathbb{R}^n \) is a \color{red}closed, \color{black} convex, non-empty set.  
%We further assume that \( \nabla f (x)\) is \( L \)-Lipschitz.

Let us define the projection operator onto the set \( \mathcal{X} \subset \mbR^{d_x} \) as  
\beq
P_\mX \left( z \right):=\arg\min_{y \in \mX} \left\| y - z \right\|^2, \; \; \forall z\in \mbR^{d_x}.
\label{eqProj}
\eeq
Clearly, $P_\mX:\mX\mapsto\mX$. If we choose a differentiable function $f:\mX \mapsto \mathbb{R}$ then, from the latter definition \eqref{eqProj}, we can construct a projected gradient ascent (PGA) step, of the form
\[
P_\mX \left( x + \nabla f(x) \right) = \arg\min_{y \in \mX} \left\| y - (x+\nabla f(x)) \right\|^2,
\]
which extends the notion of gradient ascent for constrained optimisation \cite{nesterov2013introductory}. Furthermore, let us introduce the PGA operator 
\[
F(x) := P_{\mathcal{X}}(x + \nabla f(x)) - x.
\]
Note that if \( x + \nabla f(x) \in \mathcal{X} \) then we recover the standard gradient \( F(x) = \nabla f(x) \).

%Recall that the Euclidean projection is characterized by the \textbf{minimum principle}, which states that for all \( y \in \mathcal{X} \), the following inequality holds  
%\begin{equation}\label{eq:Min_Pri}
%    (P_{\mathcal{X}}(z) - z)^\top (y - P_{\mathcal{X}}(z)) \geq 0.
%\end{equation}

The following proposition is a minor variation of Lemma 1.2.3 in \cite{nesterov2013introductory} that plays a fundamental role in our analysis.  
\begin{proposition}
For any \( x, y \in \mX \), if \( \nabla f (x) \) is \( L \)-Lipschitz, then  
    \begin{equation}\label{eq:P1}
         f(y) \geq f(x) + (\nabla^\top f(x)) (y - x) - \frac{L}{2} \| y - x \|^2,
    \end{equation}
where the superscript ${^\top}$ denotes transposition. 
\end{proposition}

Next, we obtain a result for the PGA operator $F(x)$ that is analogous to Proposition~\ref{IGA}.
 \begin{lemma} Assume that $\mX$ is a closed and convex set and let the function $f:\mX \mapsto \mbR$ be differentiable, with $L$-Lipschitz continuous gradient $\nabla f$. Then for all $x\in \mX$ such that  $F(x)\neq 0$, we have     
        \begin{equation}\label{eq:PGA}
                    f(x + \gamma F(x)) \geq f(x) + \gamma \left(1 - \frac{\gamma L}{2} \right) \|F(x)\|^2 > f(x), \; \forall \gamma \in (0,2/L).
        \end{equation}
    \end{lemma}
\begin{proof}
Define \( y = x + \gamma F(x) \). Then, from \eqref{eq:P1} we obtain the inequality
\begin{align*}
f(y) &\geq f(x) + (\nabla^\top f(x)) (y - x) - \frac{L}{2} \|y - x\|^2  \\
&= f(x) + \gamma (\nabla^\top f(x)) F(x) - \frac{\gamma^2 L}{2} \|F(x)\|^2.
\end{align*}
Adding and subtracting \( \gamma \|F(x)\|^2 = \gamma F(x)^\top F(x) \) in the expression above, we get
        \[
        f(y) \geq f(x) + \gamma (\nabla f(x) - F(x))^\top F(x) + \gamma\left( 1 -       \frac{\gamma L}{2} 
        \right) \|F(x)\|^2.
        \]
To complete the proof, it is sufficient to show that  
    \[
    \gamma (\nabla f(x) - F(x))^\top F(x) \geq 0.
    \]
To see this, recall that by the definition of \( F(x) \),
    \begin{align}
        \gamma (\nabla f(x) - F(x)) &= \gamma (x + \nabla f(x) - P_{\mathcal{X}}(x + \nabla f(x))) \notag \\
        &= \gamma (z - P_{\mathcal{X}}(z)), \notag
    \end{align}
where we have defined \( z := x + \nabla f(x) \). Then, using the minimum principle of the Euclidean projection,
\begin{equation}\label{eq:Min_Pri}
(P_{\mathcal{X}}(z) - z)^\top (y - P_{\mathcal{X}}(z)) \geq 0 \quad \forall y,z\in\mX,
\nonumber
\end{equation}
we readily obtain
    \[
    \gamma (\nabla f(x) - F(x))^\top F(x) = \gamma ( P_{\mathcal{X}}(z) - z)^\top ( x - P_{\mathcal{X}}(z) ) \geq 0,
    \]
    which concludes the proof.
\end{proof}

Finally, the same as in Section \ref{SGAN}, we use the gradient of \( \log g_t \) to nudge the Markov kernel \( K_t \) towards regions of the state space \( \mathcal{X} \) where the likelihood is higher. Assume that \( \mathcal{X} \) is closed and convex and define the projected nudging transformation  
\begin{equation}\label{eq:PGANT}
\alpha_t(x, \gamma) = x + \gamma G_t(x),    \;\; \forall x \in \mX, \; \gamma \in [0,2/L],
\end{equation}
where \( G_t(x) := P_{\mathcal{X}}\left(x + \nabla \log g_t(x)\right) - x \) and \( \gamma \) is the usual step-size parameter.
\begin{remark}
Note that \( \alpha_t(x, \gamma) = x + \gamma G_t(x) \)  effectively nudges \( x \) towards the projected log-gradient direction, and $\alpha_t(x, \gamma) \in \mathcal{X}$. In fact, it is straightforward to see that 
\[
\alpha_t(x, \gamma) = (1 - \gamma)x + \gamma P_{\mathcal{X}}(x + \nabla \log g_t(x)), 
\] 
and we recall that \( \mathcal{X} \) is assumed to be convex, then $ \alpha_t(x, \gamma) \in \mathcal{X}, \; \forall \gamma \in [0,1]$. 
\end{remark}

From \eqref{eq:PGANT}, it follows that the projected nudging transformation is continuous with respect to the parameter \( \gamma \). Using \eqref{eq:PGA}, we can easily derive the analogous result to Eq.~\eqref{eq:ExponentialIncreaseg} for this scenario. Therefore, if there exists a set \( A_t \subseteq \mathcal{X} \) such that \( G_t(x) \neq 0 \) for all \( x \in A_t \) and \( \xi_t(A_t) > 0 \), the analogous result to Eq.~\eqref{eq:Deltag} also holds for the PGA nudging \eqref{eq:PGANT}. We are now in a position to state the following result, which is analogous to Corollary~\ref{cor:GANS}. % in Section~\ref{SGAN}.
\begin{corollary}
    For $t=1,..,T$, let $\alpha_t$ have the form in \eqref{eq:PGANT} and let $Y_{1:T}=y_{1:T}$ be an arbitrary but fixed data set. Assume that the state space $\mX$ is closed and convex. If 
\begin{enumerate}[(a)]
\item $\nabla \log g_t(x)$ is $L_t$-Lipschitz continuous,
\item there are sets $A_t \subseteq \mX$ such that $\xi_t(A_t)>0$  and $G(x)\neq0$ for all $x \in A_t$, and
\item Assumption \ref{A2} holds,
\end{enumerate}
then there exists a positive sequence $\gamma_{0:T}$ (depending on $\mM$ and $y_{1:T}$) such that
\beq
p_T(y_{1:T}|\mM^\alpha) \ge p_T(y_{1:T}|\mM).
\nn
\eeq
\end{corollary}

\end{appendices}

%%===========================================================================================%%
%% If you are submitting to one of the Nature Portfolio journals, using the eJP submission   %%
%% system, please include the references within the manuscript file itself. You may do this  %%
%% by copying the reference list from your .bbl file, paste it into the main manuscript .tex %%
%% file, and delete the associated \verb+\bibliography+ commands.                            %%
%%===========================================================================================%%
 
 % Style BST file (imsart-number.bst or imsart-nameyear.bst)
%\bibliographystyle{imsart-nameyear.bst}
\bibstyle{sn-mathphys-num}
\bibliography{sn-bibliography}% common bib file
%% if required, the content of .bbl file can be included here once bbl is generated
%%\input sn-article.bbl

\end{document}